\newif\ifblog
\newif\iftex
\newcommand{\citeauth}[1]{\citeauthor{#1} \cite{#1}}
\newcommand{\thmref}[1]{Theorem~{\rm \ref{#1}}}
\newcommand{\lemref}[1]{Lemma~{\rm \ref{#1}}}
\newcommand{\corref}[1]{Corollary~{\rm \ref{#1}}}
\newcommand{\propref}[1]{Proposition~{\rm \ref{#1}}}
\newcommand{\defref}[1]{Definition~{\rm \ref{#1}}}
\newcommand{\secref}[1]{Section~{\rm \ref{#1}}}
\newcommand{\appref}[1]{Appendix~{\rm \ref{#1}}}
\newcommand{\be}{\begin{eqnarray}}
\newcommand{\ee}{\end{eqnarray}}
\newcommand{\bee}{\begin{eqnarray*}}
\newcommand{\eee}{\end{eqnarray*}}
\newcommand{\cc}{\overline{c}}
\def\d{\:{\rm d}}
\def\dt{\:{\rm d}t}
\def\ds{\:{\rm d}s}
\def\dx{\:{\rm d}x}
\def\dw{\:{\rm d} W}
\def\P{{\mathbb P}}
\def\E{{\mathbb E}}
\def\R{{\mathbb R}}
\def\N{{\mathbb N}}
\def\R{{\mathbb R}}
\def\Q{{\cal Q}}
\def\bQ{{\cal Q}_{0}}
\def\p{{\partial}}
\newcommand{\cF}{{\cal F}}
\newcommand{\ep}{\varepsilon}
\newcommand{\al}{\alpha}
\newcommand{\nd}{\noindent}
\newcommand{\la}{\lambda}
\newcommand{\ga}{\gamma}
\newcommand{\si}{\sigma}
\newcommand{\argmax}{\mathop{\rm argmax}\limits}
\newtheorem{theorem}{Theorem}[section]
\newtheorem{lemma}[theorem]{Lemma}
\newtheorem{definition}{Definition}[section]
\newtheorem{corollary}[theorem]{Corollary}
\newtheorem{proposition}[theorem]{Proposition}
\newenvironment{proof}{\noindent{\sc Proof:}}{\strut\hfill $\Box$\\}
\newcommand{\LL}{{\cal L}}
\newcommand{\TT}{{\cal T}}
\def\wx{\widehat{x}}
\def\ov{\overline{v}}
\def\os{\overline{s}}
\def\us{\underline{s}}
\def\Ds{\Delta s}
\def\Dc{\Delta c}
\def\A{{\cal A}}
\def\Dmax{{\cal D}_{\max}}
\def\Dmin{{\cal D}_{\min}}
\def\sw{\mathscr{X}}
\def\sd{\mathscr{Y}}
\def\swb{\mathfrak{M}}
\def\minswb{\mathfrak{m}}
\def\SS{{\cal S}}
\def\NS{{\cal NS}}
\def\DD{\mathcal{C}}
\def\RM{\mathcal{M}}
\renewcommand{\geq}{\geqslant}
\renewcommand{\leq}{\leqslant}
\newcommand{\nn}{\nonumber}
\newcounter{spb}
\title{Optimal dividend payout with path-dependent drawdown constraint}
\author{
Chonghu Guan\thanks{School of Mathematics, Jiaying University, Meizhou 514015, Guangdong,
China. 
Email: \url{gchonghu@163.com}}
\and
Jiacheng Fan\thanks{Department of Applied Mathematics, The Hong Kong Polytechnic University, Kowloon, Hong Kong SAR, China.  
Email: \url{jiacheng.fan@polyu.edu.hk}}
\and Zuo Quan Xu\thanks{Department of Applied Mathematics, The Hong Kong Polytechnic University, Kowloon, Hong Kong SAR, China.  
Email: \url{maxu@polyu.edu.hk}}
}
\date{}
\begin{document}
\maketitle

\begin{abstract} 
This paper studies an optimal dividend problem with a drawdown constraint in a Brownian motion model, requiring the dividend payout rate to remain above a fixed proportion of its historical maximum. This leads to a path-dependent stochastic control problem, as the admissible control depends on its own past values. The associated Hamilton-Jacobi-Bellman (HJB) equation is a novel two-dimensional variational inequality with a gradient constraint, a type of problem previously only analyzed in the literature using viscosity solution techniques. In contrast, this paper employs delicate PDE methods to establish the existence of a strong solution. This stronger regularity allows us to explicitly characterize an optimal feedback control strategy, expressed in terms of two free boundaries and the running maximum surplus process. Furthermore, we derive key properties of the value function and the free boundaries, including boundedness and continuity. Numerical examples are provided to verify the theoretical results and to offer new financial insights.

\bigskip
\nd {\bf Keywords.} Optimal dividend payout; drawdown constraint; path-dependent constraint; free boundary problem; gradient constraint;

\bigskip
\nd {\bf 2010 Mathematics Subject Classification.} 35R35; 35Q93; 91G10; 91G30; 93E20.

\end{abstract}

\section{Introduction}
There is no denying the fact that dividend payout policy plays a fundamental role in every company's risk management operations. This policy not only affects a company's appeal to its shareholders but also critically impacts its financial stability in preparing for potential adverse events. Dating back to \citeauth{de1957impostazione} and \citeauth{gerber1969entscheidungskriterien}, the optimal distribution of dividends from a dynamic surplus process has attracted considerable attention and continues to be a prominent topic in actuarial science and financial mathematics. At its core, the optimal dividend payout problem views the surplus process as available capital for investment in an insurance portfolio, where the insurer seeks the best strategy to maximize the expected total of discounted future dividends before ruin—that is, when the surplus reaches zero or negative. This framework highlights the essential trade-off between providing attractive returns to shareholders and maintaining sufficient reserves to prevent or postpone bankruptcy.

The literature features numerous variants of this problem, exploring different models of the underlying surplus process (e.g., compound Poisson models in \citeauth{gerber2006optimal} and \citeauth{albrecher2020optimal}; Brownian motion models in \citeauth{asmussen1997controlled}, \citeauth{gerber2004optimal}, and \citeauth{azcue2005optimal}; jump-diffusion models in \citeauth{belhaj2010optimal}; and stochastic profitability models in \citeauth{reppen2020optimal}) as well as various constraints on dividend policies (e.g., \citeauth{KRS21} examined capital injections, while \citeauth{avanzi2009strategies} provides a comprehensive overview).

In this paper, we concentrate on an optimal dividend problem incorporating a \textit{drawdown} constraint, where the manager aims to find the best payout policy while ensuring the dividend rate stays at least a fixed proportion of its historical maximum. In essence, a drawdown constraint represents a form of habit formation in policy decisions, using the running maximum of past payouts as the reference point. This concept has been widely explored in mathematical finance. For instance, \citeauth{dybvig1995dusenberry} introduced an infinite-horizon portfolio selection model with consumption ratcheting, preventing any decline in the agent's living standard. Similar issues involving ratcheting constraints on wealth appear in \citeauth{roche2006optimal} and \citeauth{elie2008optimal}. Within expected utility frameworks for consumption optimization, \citeauth{arun2012merton}, \citeauth{angoshtari2019optimal}, and \citeauth{deng2022optimal} extended \citeauth{dybvig1995dusenberry}'s model by incorporating drawdown constraints that allow decreases but not below a fixed fraction of the maximum. \citeauth{jeon2018portfolio} and \citeauth{albrecher2022optimal} analyzed finite-horizon effects in comparable consumption problems under ratcheting or drawdown constraints. More recently, \citeauth{angoshtari2022optimal} adapted the model to use an exponentially weighted average of past consumption for habit formation. In the context of dividend payouts, \citeauth{albrecher2020optimal} and \citeauth{albrecher2022optimal} investigated ratcheting constraints for Brownian and compound Poisson surplus processes, defining admissible policies as non-decreasing and right-continuous processes; however, they obtained only viscosity solutions without explicit optimal strategies. \citeauth{GX24} addressed ratcheting constraints using PDE methods to derive an optimal feedback strategy, while \citeauth{reppen2020optimal} applied viscosity solutions to stochastic profitability models. Most recently, \citeauth{albrecher2023optimal} incorporated drawdown constraints similar to ours and established optimality conditions for ``two-curve" strategies via viscosity solutions.

From a technical standpoint, introducing drawdown constraints transforms the problem into a two-dimensional optimal control challenge. Unlike traditional variational inequalities with functional constraints (such as in American option pricing), the resulting Hamilton-Jacobi-Bellman (HJB) equation introduces a novel variational inequality featuring a gradient constraint on the value function relative to the running maximum payout rate. This differs from HJB equations with gradient constraints in portfolio selection problems, like the transaction cost models in \citeauth{DY09} and \citeauth{DXZ10}, as our operator lacks derivatives with respect to the running maximum, preventing reduction to a one-dimensional inequality. Consequently, standard approaches like penalty methods prove ineffective. To our knowledge, no universal method exists for such stochastic control problems. For example, \citeauth{albrecher2022optimal} tackled ratcheting constraints by deriving similar variational inequalities and used discretization of dividend policies to prove viscosity solutions and convergence. In this work, we retain the Brownian motion assumption from \citeauth{albrecher2022optimal} but incorporate drawdown constraints that permit controlled declines in payouts. Building on \citeauth{GX24}, we develop a novel PDE approach that is more general than viscosity methods. We discretize the running maximum parameter equidistantly to create a regime-switching system of n-state variational inequalities, solvable as ordinary differential equations (ODEs). By letting n approach infinity, we approximate the continuous case and establish, through classical PDE theory, the existence, uniqueness, and continuity of $W_p^2$ strong solutions via uniform norm estimates. Unlike the linear operator in \citeauth{GX24}, our nonlinear operator requires new arguments. Compared to viscosity solutions, our method's key advantage lies in enabling the derivation of an explicit optimal feedback strategy, expressed through two free boundaries tied to the HJB equation and running maximum process. We also establish properties like boundedness and continuity for the value function and boundaries. 

A detailed comparison with \cite{albrecher2023optimal} and \cite{GX24} is given in \secref{sec:comparison}. Numerical results validate our theoretical findings. They demonstrate excellent agreement with established benchmarks in limiting cases and existing literature across various parameter settings. Moreover, the analysis reveals key insights into parameter effects. Stricter drawdown constraints reduce shareholder value by limiting dividend flexibility. In contrast, higher profitability and interest rates enable more aggressive dividend policies through lower surplus thresholds. Increased volatility promotes conservative strategies by shifting boundaries rightward. Larger maximum payout ratios enhance value functions, with the strongest effects at higher surplus levels.

The remainder of the paper is organized as follows. In \secref{sec:pf}, we formulate an optimal dividend payout problem subject to a path-dependent drawdown constraint. 
\secref{sec:ceiling} focuses on the ceiling model, in which the payout rate is upper bounded by a constant. Section \ref{sec:boundary} provides an explicit solution to the boundary case. \secref{sec:HJB} is devoted to the study of the non-boundary case. We first use delicate PDE techniques to solve the HJB equation in a proper space and then provide a complete answer to the problem by a verification argument. The properties of two free boundaries are thoroughly investigated. \secref{sec:no ceiling} investigates the no-ceiling model, where the payout rate can be arbitrarily large. Notably, some results and proofs established for the ceiling model do not directly apply to the no-ceiling case, necessitating novel analytical approaches. Nevertheless, we find that the solutions for both models exhibit similar qualitative properties. Numerical examples are presented in \secref{sec:numerical} to verify our theoretical results and give some new financial insights that are not proved. Finally, 
we compare our model and approach with \cite{albrecher2023optimal} and \cite{GX24} in \secref{sec:comparison}.


\section{Model formulation}\label{sec:pf} 

Throughout this paper, we fix a probability space $(\Omega,\mathcal{F},\P)$ satisfying the usual assumptions, along with a standard one-dimensional Brownian motion $\{W_t\}_{t\geq 0}$. Let $\{\cF_t\}_{t\geq 0}$ denote the filtration generated by the Brownian motion, argumented with all $\P$-null sets.
Let $\R^+=(0,\infty)$ denote the set of positive real numbers and $\N=\{1,2,\cdots\}$ the set of natural numbers.
We use $\vert Z\vert$ to denote $\sqrt{\mathrm{trace}(ZZ')}$ for any (column) vector or matrix $Z$.

We consider a representative insurer (He) and model its surplus process by a diffusion model. Then the surplus process $\{X_t\}_{t\geq 0}$ after paying dividends follows
\begin{equation}\label{X_eq}
{\rm d}X_t=(\mu-\DD_t) \dt+\sigma \dw_t,\;\; t\geq 0,
\end{equation}
where $\mu$ and $\sigma>0$ are constants, standing respectively for the return rate and volatility rate of the surplus process, $\DD_t$ is the dividend payout rate of the insurer at time $t$, which of course is an $\{\cF_t\}_{t\geq 0}$ adapted process.
As usual, we define the ruin time of the insurer as
$$\tau:=\inf\big\{t\geq 0 \;\Big|\; X_t\leq 0\big\},$$
with the convention that $\inf\emptyset=+\infty$.
We emphasis that the ruin time $\tau$ always depends on the insurer's dividend payout strategy $\{\DD_t\}_{t\geq 0}$ throughout the paper.

Given a (dividend payout) strategy $\{\DD_t\}_{t\geq 0}$,
we define the corresponding running/historical maximum payout process $\{\RM_t\}_{t\geq 0}$ as
$$\RM_t=\sup\limits_{0\leq s\leq t}\DD_s.$$
We fix a maximum allowed drawdown proportion $b\in[0,1]$. 
We first focus on the ceiling model in 
\secref{sec:ceiling}, in which the payout rate is upper bounded by a constant $\cc<\infty$. This foundational case presents the core challenge of our work. 
A dividend payout strategy $\{\DD_t\}_{t\geq 0}$ is called admissible if it is right-continuous with left-limit and satisfies
\begin{align}\label{admissiblestrategy}
b \RM_t \leq \DD_t\leq \cc,~~0\leq t\leq \tau.
\end{align}
Given an initial value $\RM_0=c\in[0, \cc]$,
let $ \Pi_{[c,\cc]}$ denote the set of all the corresponding admissible dividend payout strategies.

In the ceiling model, one's objective is to find an admissible dividend payout strategy $\DD^*=\{\DD^*_t\}_{t\geq 0}\in \Pi_{[c,\cc]}$ to maximize the discounted cumulated dividend until the ruin time. Mathematically, his aim is to determine
\begin{align}\label{value}
V(x,c)=\sup\limits_{\DD\in \Pi_{[c,\cc]}} \E\bigg[\int_0^\tau e^{-r t} \DD_t \dt \;\bigg|\; X_0=x,~\RM_0=c \bigg],~~ (x,c)\in \Q:=[0,+\infty)\times [0,\cc],
\end{align}
where $r>0$ is a constant discount factor. Since $0\leq \DD_t\leq \cc$ and $0\leq \tau\leq \infty$, the value function $V$ is clearly nonnegative and upper bounded by $\cc r^{-1}$.
On the other hand, if the initial surplus is $X_0=0$, then the ruin time $\tau$ is 0, so we have $V(0,c)=0$ for all $c\in[0,\cc]$.

When $b=0$, the problem \eqref{value} can be easily solved by dynamic programming. Whereas when $b=1$, the admissible dividend payout strategies must be non-decreasing over time, and the problem \eqref{value} becomes the so-called dividend ratcheting problem, which has been studied in Albrecher et al. \cite{albrecher2020optimal} and \cite{albrecher2022optimal} for different surplus models by viscosity solution technique and eventually completely solved by Guan and Xu \cite{GX24} by PDE method.

Clearly, the major difficulty in solving the dividend payout problem \eqref{value} comes from the dividend payout constraint \eqref{admissiblestrategy}. Except for $b=0$, the constraint is path-dependent, that is, the current and future admissible strategies depend on the whole past values. There seems no uniform way to deal with this type of stochastic control problems in the literature. In this paper, following \cite{GX24}, we adopt a PDE argument plus stochastic analysis method to tackle the problem \eqref{value}.

We will investigate the no-ceiling model in \secref{sec:no ceiling}, where the payout rate may be arbitrarily large, i.e., $\cc = \infty$. To extend our results to this case, we employ a limiting argument. However, a lot of key estimates from the ceiling model depend implicitly on $\cc$. Consequently, some proofs and conclusions do not directly generalize, necessitating novel arguments to address these challenges.

\section{The ceiling model: $\cc<\infty$ }\label{sec:ceiling}
\subsection{The boundary case: $V(x,\cc)$} \label{sec:boundary}
Before solving the general case, we first need to determine the boundary value $V(x,\cc)$ since it is needed to write out the HJB equation for the problem \eqref{value}.
We emphasis that, for a usual optimal control or stopping problem, the boundary condition is usually trivial to obtain from the problem formulation. For instance, the value function on the boundary is usually nothing but the payoff function of an American option in an option pricing problem.
But for our problem this is not the case; we cannot determine $V(x,\cc)$ from the problem formulation immediately. Indeed, the problem itself is a stochastic control problem. This is very different from Guan and Xu \cite{GX24} where the boundary value is almost trivial to obtain.

Although it is not immediately to know the optimal value $V(x,\cc)$, the boundary case is still relatively easy to solve, because the path-dependent constraint \eqref{admissiblestrategy} in this case becomes path-independent.
We are not only be able to give the optimal value in this case but also able to provide an explicit optimal dividend payout strategy for the problem \eqref{value}.

Indeed, in the boundary case, it is easy to see that a strategy $\{\DD_t\}_{t\geq 0}$ is an admissible strategy in $\Pi_{[\cc,\cc]}$ if and only if it satisfies $b \cc \leq \DD_t\leq \cc$ for all $t\geq 0$. This is because all admissible strategies are upper bounded by $\cc$, and consequently the running maximum payout process remains to be a constant, i.e. $\RM_t\equiv \cc$, from the beginning, so the constraint set becomes time invariant.
Mathematically speaking,
\begin{align} \label{boundary}
V(x,\cc)&=\sup\limits_{\DD\in \Pi_{[\cc,\cc]}} \E\bigg[\int_0^\tau e^{-r t} \DD_t \dt \;\bigg|\; X_0=x,~\RM_0=\cc \bigg]\nonumber\\
&=\sup\limits_{b \cc \leq \DD_t\leq \cc} \E\bigg[\int_0^\tau e^{-r t} \DD_t \dt \;\bigg|\; X_0=x\bigg],~~ x\geq 0.
\end{align}
Since the above problem \eqref{boundary} only depends on the initial state $x$, its HJB equation must be an ODE, therefore, one would expect to solve it completely. By contrast, in the general case, the HJB equation to \eqref{value} has an additional argument representing the running maximum, so it seems too ambitious to get an explicit value function or optimal dividend payout strategy. 

To get an explicit value function of the boundary problem \eqref{boundary}, we start from the following verification result, which shows that if a function is sufficiently good (i.e., satisfying the hypothesis of Proposition \eqref{prop:boundary}), then it is the value function of \eqref{boundary}. We also provides an optimal dividend payout strategy for the problem \eqref{boundary}.

The following two operators $\LL$ and $\TT$ will be used throughout this paper:
\begin{align*}
\LL v:=&~\frac{1}{2}\si^2v_{xx}+\mu v_x-r v,~~\\
\TT v:=&~\max_{b\leq d\leq 1}d(1-v_x)=b(1-v_x)+(1-b)(1-v_x)^+.
\end{align*}
Note that, unless $b=1$, $\TT$ in general is a nonlinear operator. This is the critical difference between our problem and that in \cite{GX24} where $b=1$ so that $\TT$ is a linear operator. The nonlinearity of $\TT$ not only brings tremendous difficulties to our mathematical analysis, but also change the fundamental structure of the optimal strategy.
We will face two free boundaries whereas there is only one in \cite{GX24}.

\begin{proposition}[Verification for the boundary case]\label{prop:boundary}
Suppose $g\in C^2(\R^+)$ satisfies the following ODE:
\begin{align}\label{Lg}
	\begin{cases}
		-\LL g-\cc \TT g=0, & x\in\R^{+},\\[3mm]
		g(0)=0.
	\end{cases}
\end{align}
If both $g$ and $g'$ are bounded, then $g$ is the value function of the problem \eqref{boundary}, i.e., $$V(x,\cc)=g(x),~~ x\in\R^+.$$
Moreover, an optimal dividend payout strategy is given by $\{\DD^*_{t}\}_{t\geq0}$, where
\begin{align}\label{boundarycontrol}
	\DD^*_t=
	\begin{cases}
		\cc & \hbox{\; if\; $g'(X_t)\leq 1$};\\
		b\cc & \hbox{\; otherwise.\;}
	\end{cases}
\end{align}
\end{proposition}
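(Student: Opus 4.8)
The plan is to prove this by the standard verification (martingale) argument adapted to the path-independent structure of the boundary problem \eqref{boundary}. Let $g$ be as in the hypothesis, fix $x\in\R^+$ and take any admissible $\{\DD_t\}_{t\ge0}\in\Pi_{[\cc,\cc]}$, which (as noted before the proposition) is simply any $\{\cF_t\}$-adapted process with $b\cc\le\DD_t\le\cc$. First I would apply It\^o's formula to $e^{-rt}g(X_t)$ along the controlled surplus \eqref{X_eq}, obtaining
\begin{align*}
e^{-r(t\wedge\tau)}g(X_{t\wedge\tau})=g(x)+\int_0^{t\wedge\tau}e^{-rs}\big(\LL g(X_s)-\DD_s g'(X_s)\big)\ds+\int_0^{t\wedge\tau}e^{-rs}\sigma g'(X_s)\dw_s.
\end{align*}
The It\^o integral is a true martingale because $g'$ is bounded, so it has zero expectation. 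The key pointwise inequality is that for every value $d\in[b\cc,\cc]$ one has $\LL g(y)-d\,g'(y)+d\le \LL g(y)+\cc\,\TT g(y)=0$: indeed $-d\,g'(y)+d=d(1-g'(y))\le \max_{b\cc\le d\le\cc}d(1-g'(y))=\cc\,\TT g(y)$ by the very definition of $\TT$ (rescaling the $\max$ over $[b,1]$ by $\cc$). Hence $\LL g(X_s)-\DD_s g'(X_s)\le -\DD_s$, which after taking expectations yields
\begin{align*}
\E\big[e^{-r(t\wedge\tau)}g(X_{t\wedge\tau})\big]\le g(x)-\E\Big[\int_0^{t\wedge\tau}e^{-rs}\DD_s\ds\Big].
\end{align*}

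Next I would let $t\to\infty$. On $\{\tau<\infty\}$, continuity of $X$ gives $X_\tau=0$, so $g(X_{t\wedge\tau})\to g(0)=0$; on $\{\tau=\infty\}$, $e^{-r(t\wedge\tau)}g(X_{t\wedge\tau})\to0$ since $g$ is bounded and $r>0$. By boundedness of $g$ and dominated convergence the left side tends to $0$, while monotone convergence handles the integral term, giving $\E\big[\int_0^\tau e^{-rs}\DD_s\ds\big]\le g(x)$. Taking the supremum over admissible strategies yields $V(x,\cc)\le g(x)$.

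For the reverse inequality and optimality of \eqref{boundarycontrol}, I would run the same computation with the specific feedback control $\DD^*_t$, which makes the pointwise inequality an equality: when $g'(X_t)\le1$ the maximizer of $d\mapsto d(1-g'(X_t))$ over $[b\cc,\cc]$ is $d=\cc$, and when $g'(X_t)>1$ it is $d=b\cc$, so $\LL g(X_t)-\DD^*_t g'(X_t)=-\DD^*_t$ exactly, hence $\E\big[e^{-r(t\wedge\tau^*)}g(X_{t\wedge\tau^*})\big]=g(x)-\E\big[\int_0^{t\wedge\tau^*}e^{-rs}\DD^*_s\ds\big]$. The only genuine subtlety here is that one must first know the closed-loop SDE $\dx_t=(\mu-\DD^*_t)\dt+\sigma\dw_t$ with this (discontinuous in $x$) feedback admits a solution with a well-defined ruin time and that $\{\DD^*_t\}$ is indeed admissible, i.e. $\cF_t$-adapted with values in $[b\cc,\cc]$; the latter two are immediate from the form \eqref{boundarycontrol}, and existence of a (weak) solution follows from standard results since the drift is bounded and measurable and the diffusion is a nondegenerate constant. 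Passing to the limit $t\to\infty$ exactly as above (using $g(0)=0$ and boundedness of $g$) gives $\E\big[\int_0^{\tau^*}e^{-rs}\DD^*_s\ds\big]=g(x)$, so the strategy attains the value and $V(x,\cc)=g(x)$ with $\{\DD^*_t\}$ optimal. I expect the main obstacle to be purely the bookkeeping around the closed-loop equation and the limiting argument at $\tau$; the core inequality is a direct consequence of how $\TT$ was defined.
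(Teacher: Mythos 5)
Your proposal is correct and follows essentially the same route as the paper's proof: the pointwise inequality $\LL g - d\,g' + d \le \LL g + \cc\,\TT g = 0$ from the definition of $\TT$, It\^o's formula applied to $e^{-rt}g(X_t)$ with the stochastic integral killed by boundedness of $g'$, the terminal term handled via $g(0)=0$ and boundedness of $g$, and equality for the bang-bang feedback \eqref{boundarycontrol}. Your extra care with the localization at $t\wedge\tau$ and with existence of a solution to the closed-loop SDE only strengthens an argument the paper presents more briskly.
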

\begin{proof}
We put it in \appref{proofprop:boundary}.
\end{proof}

By the above result, we need to find a solution $g$ to \eqref{Lg}. 
To this end, we first define the following parameters used throughout this paper:
\begin{align*}
\ga&=\frac{\sqrt{(\mu-\cc)^2+2 \si^2 r}+(\mu-\cc)}{\si^2}, &
\la_{1,2}&=\frac{\sqrt{(\mu-b\cc)^2+2 \si^2 r}\mp (\mu-b\cc)}{\si^2},\\
k_1&=\frac{1}{\la_1+\la_2}\left[1+\la_2\left(\frac{(1-b)\cc}{r}-\frac{1}{\ga}\right)\right], & k_2&=\frac{1}{\la_1+\la_2}\left[1-\la_1\left(\frac{(1-b)\cc}{r}-\frac{1}{\ga}\right)\right].
\end{align*}
Indeed, $\ga$, $\la_{1}$ and $\la_{2}$ are, respectively, the unique positive roots for the quadratic equations:
\begin{align*}
-\frac{1}{2}\si^2 \ga^2+(\mu-\cc)\ga+r &=0,\\
-\frac{1}{2}\si^2 \la_{1}^2-(\mu-b\cc) \la_{1}+r &=0,\\
-\frac{1}{2}\si^2 \la_{2}^2+(\mu-b\cc) \la_{2}+r &=0.
\end{align*}
We will use the simple fact that $\ga,\la_1,\la_2>0$ and $\la_1 k_1+\la_2 k_2=1$ frequently in our subsequent argument without claim.

\begin{lemma}\label{lem:g}
If $2\mu\cc \leq \si^2 r$, we then define
\[g (x)=\frac{\cc}{r}(1-e^{-\ga x}),~~ x\geq 0.\]
If $2\mu\cc > \si^2 r$, then
there is a unique positive number $y_0$ such that $$k_1 e^{-\la_1 y_0}-k_2 e^{\la_2 y_0}+\frac{b\cc}{r}=0,$$
and we define
\begin{align}\label{g_def}
	\displaystyle g (x)=
	\begin{cases}
		\displaystyle k_1 e^{\la_1(x-y_0)}-k_2 e^{-\la_2(x-y_0)}+\frac{b\cc}{r}, & 0\leq x\leq y_0; \bigskip\\
		\displaystyle -\frac{1}{\ga} e^{-\ga(x-y_0)}+\frac{\cc}{r}, & x>y_0. \smallskip
	\end{cases}
\end{align}
Then $g\in C^2(\R^+)$ satisfies \eqref{Lg}, $0\leq g\leq \cc/r$,
$0\leq g'\leq g'(0)<\infty$ and $g'$ is strictly decreasing (thus $g$ is strictly concave) on $\R^{+}$.
Moreover, $g'(y_0)=1$ in the second case.
\end{lemma}

\begin{proof}
	We put it in \appref{sec:g}
\end{proof}

\begin{figure}[htp]
\centering
\begin{minipage}[t]{\textwidth}
	\centering
	\includegraphics[width=0.35\textwidth]{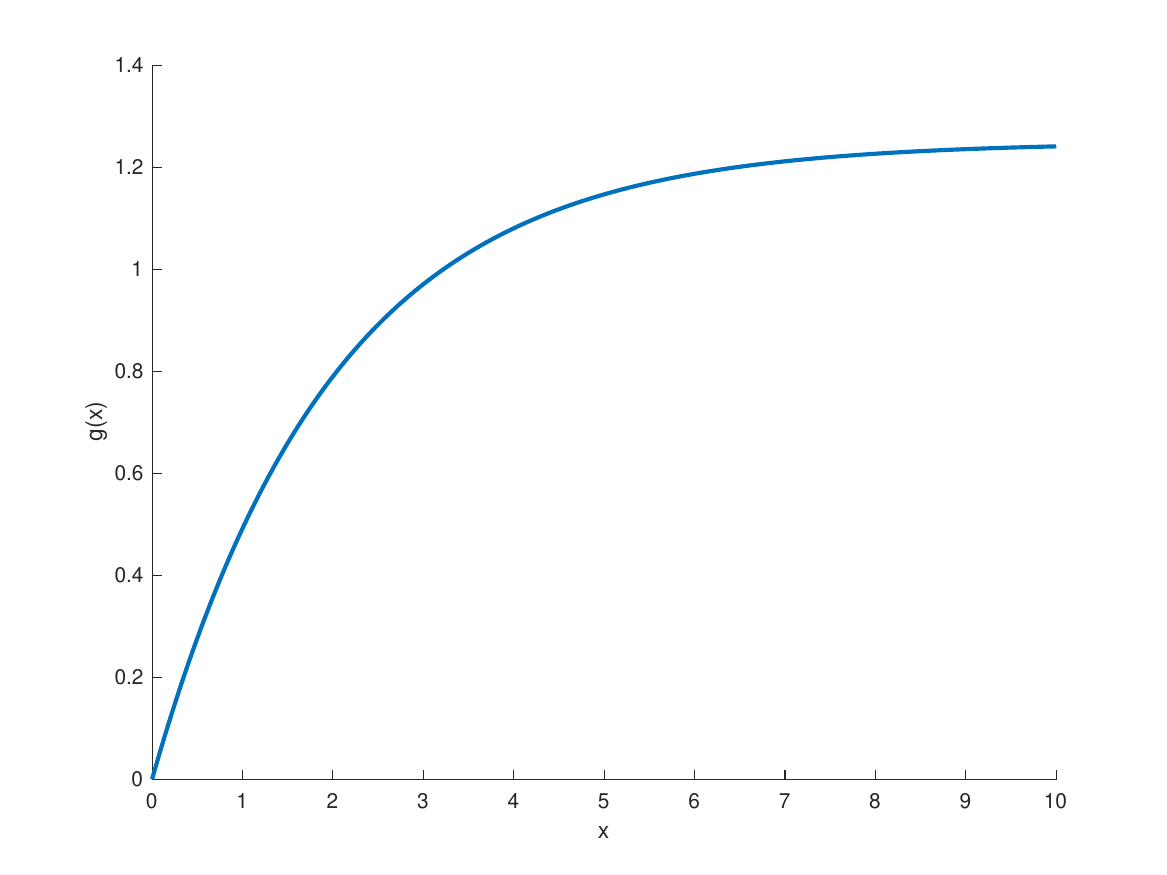}
	\caption{$2\mu \cc < \sigma^2 r$, $\mu=0.1$, $r=0.08$, $\sigma=0.8$ and $\bar{c}=0.1$}
	\label{g1}
\end{minipage}
\end{figure}

\begin{figure}[htp]
\centering
\begin{minipage}[t]{\textwidth}
	\centering
	\includegraphics[width=0.35\textwidth]{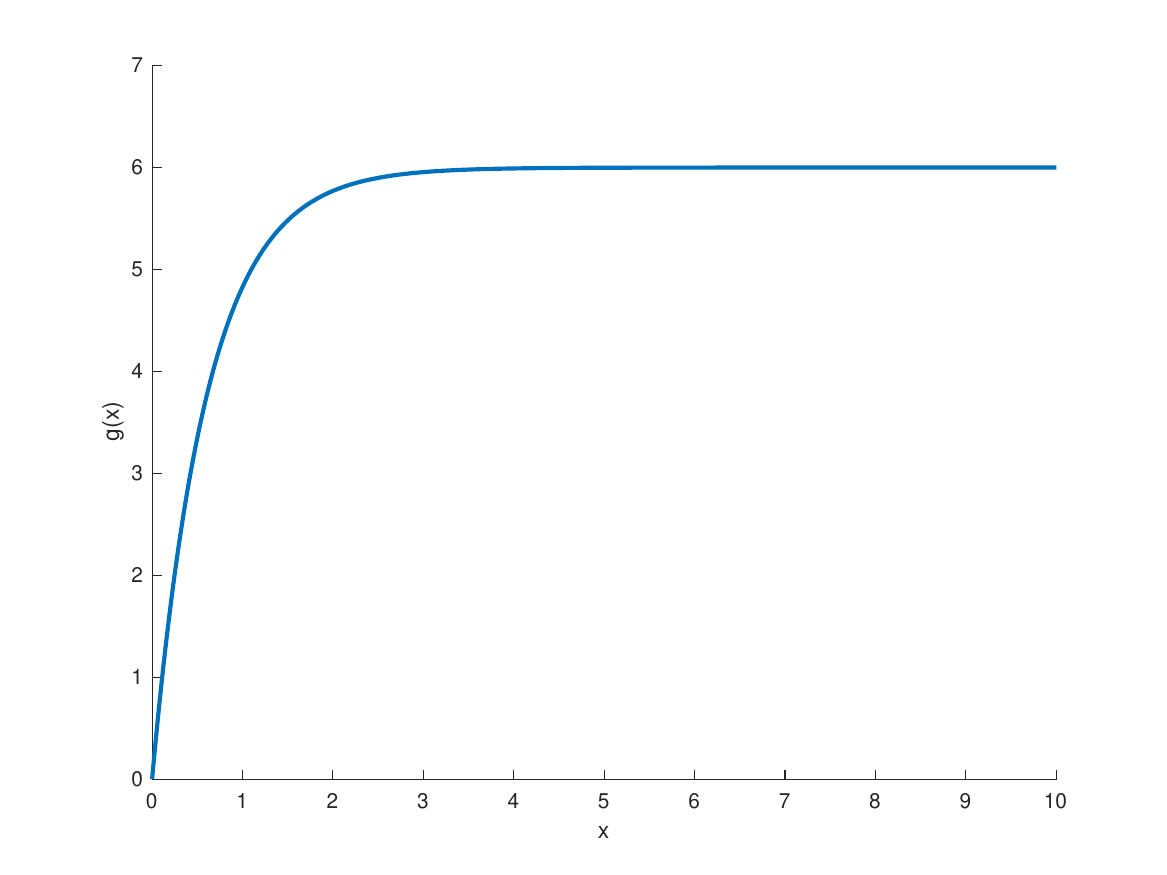}
	\caption{$2\mu \cc >\sigma^2 r$, $\mu=0.4$, $r=0.05$, $\sigma=0.4$ and $\bar{c}=0.3$ and $b=1$}
	\label{g2}
\end{minipage}
\end{figure}

We will use $g$ and $y_{0}$ defined above throughout the paper.
The pictures of $g$ are shown in Figures \ref{g1} and \ref{g2} under different parameters.

Combining \propref{prop:boundary} and \lemref{lem:g}, we solved the boundary case completely.
\begin{theorem}[Solution for the boundary case]
\label{op1}
The optimal value to the problem \eqref{boundary} is $g$ defined in \lemref{lem:g}, i.e., $V(x,\cc)=g(x)$, $x\in\R^+$, 
which, in particular, implies that $V(x,\cc)$ is concave in $x$.
Moreover, an optimal dividend payout strategy is given by $\{\DD^*_{t}\}_{t\geq0}$, where $\DD^*_t\equiv\cc$ if $2\mu\cc \leq \si^2 r$; and otherwise,
\begin{align}\label{boundarycontrol2}
	\DD^*_t=
	\begin{cases}
		\cc & \hbox{\; if\; $X_t\geq y_0$};\\
		b\cc & \hbox{\; otherwise.\;}
	\end{cases}
\end{align}
\end{theorem}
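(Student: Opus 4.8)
The proof of \thmref{op1} is essentially a matter of assembling the two preceding results, so the plan is to verify that the hypotheses of \propref{prop:boundary} are met by the function $g$ constructed in \lemref{lem:g}, and then to translate the abstract optimal feedback rule \eqref{boundarycontrol} into the explicit threshold rule \eqref{boundarycontrol2}. \lemref{lem:g} already gives us everything we need about $g$ in both regimes: $g\in C^2(\R^+)$, $g$ solves the ODE \eqref{Lg} with $g(0)=0$, and $0\le g\le\cc/r$, $0\le g'\le g'(0)<\infty$. The boundedness of $g$ and $g'$ is exactly the extra integrability hypothesis required by \propref{prop:boundary}, so that proposition applies verbatim and yields $V(x,\cc)=g(x)$ on $\R^+$, together with the optimality of the feedback strategy $\DD^*_t=\cc$ when $g'(X_t)\le 1$ and $\DD^*_t=b\cc$ otherwise. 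Concavity of $V(\cdot,\cc)$ is then immediate from the strict concavity of $g$ asserted in \lemref{lem:g}.

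The only genuine content is identifying the switching region $\{x:g'(x)\le 1\}$. I would split into the two cases of \lemref{lem:g}. In the case $2\mu\cc\le\si^2 r$ (equivalently $\cc\ga\le r$ by \lemref{lemma:equiv}), the lemma's computation shows $g'(x)=\tfrac{\cc\ga}{r}e^{-\ga x}<1$ for all $x>0$, and at $x=0$ we have $g'(0)=\cc\ga/r\le 1$; hence $g'\le 1$ on all of $\R^+$ and the feedback rule \eqref{boundarycontrol} collapses to $\DD^*_t\equiv\cc$. In the case $2\mu\cc>\si^2 r$, \lemref{lem:g} tells us $g'$ is strictly decreasing on $\R^+$ with $g'(y_0)=1$; therefore $g'(x)\le 1$ precisely when $x\ge y_0$ and $g'(x)>1$ when $x<y_0$. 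Substituting this description of the two regions into \eqref{boundarycontrol} gives exactly \eqref{boundarycontrol2}.

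There is no real obstacle here — the theorem is a corollary of \propref{prop:boundary} and \lemref{lem:g}, and the proof amounts to checking that the pieces fit. The one point that deserves a sentence of care is the boundary behaviour at $x=0$ (that $g'(0)\le 1$ in the first case, so the threshold is vacuous and $\DD^*\equiv\cc$ is well-defined as a constant strategy) and the observation that, since the feedback map $x\mapsto\DD^*(x)$ defined by \eqref{boundarycontrol2} is a genuine admissible strategy in $\Pi_{[\cc,\cc]}$ (it satisfies $b\cc\le\DD^*_t\le\cc$ and is piecewise constant in $X_t$, hence RCLL along the surplus paths), the optimality claim inherited from \propref{prop:boundary} transfers without modification. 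I would state the argument in roughly half a page, flagging \lemref{lemma:equiv} for the equivalence of the two descriptions of the dichotomy.
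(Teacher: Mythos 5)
Your proposal is correct and follows exactly the paper's route: the paper derives \thmref{op1} by combining \propref{prop:boundary} with \lemref{lem:g}, and your verification that $g$ meets the hypotheses, the case split via \lemref{lemma:equiv}, and the identification of $\{g'\leq 1\}$ with $\R^+$ (simple case) or $[y_0,\infty)$ (complicated case, using strict monotonicity of $g'$ and $g'(y_0)=1$) are precisely the details the paper leaves implicit. Nothing is missing.
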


Economically speaking, since one cannot increase the value of the running maximum in the boundary case, so it is fixed.
In this case, one should pay dividends at the maximum rate if the market risk is relative high (i.e., $2\mu\cc/r \leq \si^2$) or equivalent the maximum rate is relatively small (i.e., $ \cc \leq \si^2r/(2\mu)$). The optimal strategy does not relay on the surplus level.
On the other hand, if the market risk is low (i.e., $2\mu\cc/r > \si^2$) or equivalent the maximum rate is relatively big (i.e., $ \cc> \si^2r/(2\mu)$), then the optimal strategy depends on the surplus level.
One should pay dividends either at the maximum rate $\cc$ if the surplus level is relative high (i.e., $X_t\geq y_0$), or at the minimum possible rate $b\cc$ if the surplus level is relative low (i.e., $X_t<y_0$).

We now turn to the general case, which is divided into a simple case: $2\mu\cc \leq \si^2 r$, and the reminder complicated case: $2\mu\cc > \si^2 r$. In the former case, we will give an explicit optimal value and optimal strategy. In the latter case, the problem becomes complicated and no explicit solution or strategy is available.

\subsection{The non-boundary case: $V(x,c)$}\label{sec:HJB}

We first deal with the simple case: $2\mu\cc \leq \si^2 r$.
In the absence of the drawdown constraint \eqref{admissiblestrategy}, it is known that the constant payout strategy $\DD^*_t\equiv\cc$ is optimal to the problem \eqref{value} (see \cite{albrecher2023optimal}). Since this policy also satisfies the drawdown constraint, it remains optimal even when the constraint is imposed. We provide a simply proof in \secref{sec:op2} for the easy of reading. 
\begin{theorem} 
[Solution for the simple case $2\mu\cc \leq \si^2 r$]
\label{op2}
If $2\mu\cc \leq \si^2 r$, then the optimal value to the problem \eqref{value} is $$V(x,c)=\frac{\cc}{r}(1-e^{-\ga x}),~~ (x,c)\in \Q,$$
and an optimal dividend payout strategy is given by $\{\DD^*_{t}\}_{t\geq0}$, where $\DD^*_t\equiv\cc$.
\end{theorem}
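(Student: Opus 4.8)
The plan is to prove Theorem~\ref{op2} by a verification argument analogous to the one used for \propref{prop:boundary}, exploiting the fact that under the assumption $2\mu\cc\leq\si^2 r$ the candidate value function has a gradient small enough that the drawdown constraint is never binding in an essential way. First I would set $g(x)=\frac{\cc}{r}(1-e^{-\ga x})$, which by \lemref{lem:g} belongs to $C^2(\R^+)$, is bounded with bounded derivative, satisfies $g(0)=0$, is strictly concave, and obeys the ODE $-\LL g-\cc\TT g=0$ with $g'(x)=\frac{\cc\ga}{r}e^{-\ga x}<1$ for all $x\geq 0$ (using $\cc\ga\leq r$ from \lemref{lemma:equiv}). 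Then I would define the candidate $\overline V(x,c):=g(x)$, independent of $c$, and aim to show $\overline V=V$ on $\Q$.

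The core of the verification is: for any initial data $(x,c)\in\Q$ and any admissible strategy $\{\DD_t\}\in\Pi_{[c,\cc]}$, one has, pointwise in $t\leq\tau$,
\begin{align*}
\LL g(X_t)-g'(X_t)\DD_t+\DD_t
&\leq \LL g(X_t)+\sup_{0\leq d\leq\cc} d(1-g'(X_t))\\
&= \LL g(X_t)+\cc(1-g'(X_t))^+ = \LL g(X_t)+\cc(1-g'(X_t)) = \LL g(X_t)+\cc\TT g(X_t)=0,
\end{align*}
where the key point is that since $0\leq\DD_t\leq\cc$ the supremum over the (possibly smaller, $c$-dependent) admissible range is dominated by the supremum over all of $[0,\cc]$, and since $g'<1$ everywhere both $(1-g')^+=1-g'$ and $\TT g=1-g'$. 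Applying It\^o's formula to $e^{-rt}g(X_t)$, using that $g'$ is bounded so the stochastic integral is a true martingale, integrating over $[0,\tau]$ and taking expectations gives $g(x)\geq\E[e^{-r\tau}g(X_\tau)+\int_0^\tau e^{-rt}\DD_t\dt\mid X_0=x]$; then $r>0$ and boundedness of $g$ kill the term on $\{\tau=\infty\}$, while $g(0)=0$ kills it on $\{\tau<\infty\}$, yielding $g(x)\geq\E[\int_0^\tau e^{-rt}\DD_t\dt]$, hence $g\geq V$ on $\Q$.

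For the reverse inequality I would exhibit the explicit strategy $\DD^*_t\equiv\cc$. This is clearly admissible in $\Pi_{[c,\cc]}$ for every $c\in[0,\cc]$ since $b\RM^*_t\leq\RM^*_t=\cc\leq\cc$ and $\DD^*_t=\cc$ is right-continuous. Under $\DD^*$, the inequalities above become equalities (because $\DD^*_t=\cc=\argmax_{0\leq d\leq\cc}d(1-g'(X_t))$ as $g'<1$), so the same It\^o computation gives $g(x)=\E[\int_0^\tau e^{-rt}\DD^*_t\dt\mid X_0=x]$, whence $V(x,c)\geq g(x)$. Combined with $g\geq V$ this yields $V(x,c)=g(x)=\frac{\cc}{r}(1-e^{-\ga x})$ and optimality of $\DD^*$. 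I do not anticipate a genuine obstacle here; the only point requiring a little care is confirming that the $c$-dependence of the admissible set causes no trouble — which it does not, precisely because enlarging the feasible set of dividend rates to $[0,\cc]$ only weakens the upper bound, and the unconstrained maximizer $d=\cc$ happens to be feasible for every $c$. One could also add the routine remark (as after \propref{prop:boundary}) that the boundedness of $g'$ used for the martingale property can be dispensed with via localization, but since $g'$ is genuinely bounded here that is unnecessary.
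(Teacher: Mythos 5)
Your proposal is correct and follows essentially the same route as the paper: verify that $g(x)=\frac{\cc}{r}(1-e^{-\ga x})$ satisfies $\LL g+\cc\TT g=0$ with $g'\leq 1$, so the supremum over the (possibly $c$-dependent) admissible rates is dominated by $\sup_{0\leq d\leq\cc}d(1-g')=\cc\TT g$, and then repeat the It\^o/verification argument of \propref{prop:boundary} with the constant strategy $\DD^*_t\equiv\cc$ attaining equality. The only nitpick is that $g'(0)$ may equal $1$ when $\cc\ga=r$, so one should say $g'\leq 1$ rather than $g'<1$ everywhere, but this does not affect any step of the argument.
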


We remark that the value function in the simple case is invariant with respect $c$, i.e.,
$V(x,c)=V(x,\cc)$ for all $(x,c)\in \Q$.

From now on we study the complicated case: $2\mu\cc> \si^2 r$, which is henceforth assumed. We will adopt a novel PDE argument plus stochastic analysis method to tackle the problem.
This provides the major contributions of this paper. 

Our analysis consists of several parts. We start with introducing the HJB equation for the problem \eqref{value}. The second step is the key of approach, where we introduce and study a regime switching system, a sequence of ODEs (where are indeed single-obstacle problems), whose solvability can be established easily by standard PDE theory. 
In the next step, we establish a lot of estimates for the solution of the regime switching system (See \appref{sec:approximation} for details) and construct a solution to the HJB equation of the original problem \eqref{value} from the solution of the regime switching system.
Also, we propose an optimal dividend payout strategy depending on some free boundaries arising from the HJB equation.
The last step is to verify that the constructed solution to the HJB equation is the value function and the proposed strategy is optimal.

We introduce the following variational inequality on $v(x,c)$:
\begin{align}\label{v_pb}
\begin{cases}
	\min\{-\LL v-c \TT v , \; -v_c\}=0, & \hbox{in }\; \bQ:=\R^+\times [0,\cc),\\[3mm]
	v(0,c)=0, & c\in [0,\cc),\\[3mm]
	v(x,\cc)=g (x), & x\geq 0.
\end{cases}
\end{align}
It is indeed the HJB equation for our problem \eqref{value}. We will show its solution is the value function of \eqref{value}. 
The exact meaning of its solution is given as follows.

Denote
\begin{align*}
\A=\Big\{v:\Q\mapsto \R \;\Big|\; &v\in C(\Q), ~\mbox{$v$ is non-increasing w.r.t. $c$,}
\; \\
&v(\cdot,c)\in W^2_{p, \rm loc} [0,+\infty)\cap L^\infty [0,+\infty) ~\mbox{for each}~ c\in [0,\cc] \Big\}.
\end{align*}

\begin{definition}[Strong solution]\label{def:solution}
We call $v$ is a strong solution to \eqref{v_pb} if the follows hold:
\begin{enumerate}
	\item $v\in \A$;
	\item $v$ satisfies the boundary conditions in \eqref{v_pb};
	\item for each $c\in[0,\cc]$,
	\begin{align}\label{-Lv>=0}
		-\LL v(\cdot,c)-c \TT v(\cdot,c)\geq 0 ~\hbox{a.e. in}~ \R^+;
	\end{align}
	\item
	if there is some $(x,c)\in \bQ$ such that $v(x,c)>v(x,s)$ holds for all $s\in(c,\cc]$, then \begin{align}\label{-Lv=0}
		-\LL v(y,c)-c \TT v(y,c)=0~\hbox{for all}~y\in (0,x).
	\end{align}
\end{enumerate}

\end{definition}

\thmref{thm:u} and \thmref{thm:averi} will provide a complete answer to the problem \eqref{value} in the complicated case.
The former resolves the solvability issue of the HJB equation \eqref{v_pb} and establishes some properties of its solution. The latter gives the optimal value and optimal strategy for the problem.

\begin{theorem}\label{thm:u}
The PDE \eqref{v_pb} admits a unique strong solution $v$ in the sense of \defref{def:solution}.
Furthermore, we have $ v_{x}$ is continuous in $\Q$ and $v$ is Lipschitz continuous w.r.t. $c$, also given any $p>1$ and $\al=1-1/p$, there is a constant $K>0$ such that
\begin{gather}
	0\leq -v_{c}\leq K ~\mbox{a.e.},\label{vc}\\
	\label{v} 0\leq v\leq \frac{\cc}{r},\\
	\label{vx} 0\leq v_x\leq K,
\end{gather}
and for any $(x,c)\in \Q$ and $N\in \N$,
\begin{align}\label{vxx}
	v_x(y,c)\leq \max\{v_x(x,c), 1\},~~ \forall\; y\geq x,
\end{align}
and
\begin{align}\label{vL}
	|v(\cdot,c)|_{W^2_p[N-1,N]}+|v(\cdot,c)|_{C^{1+\al}(\R^+)}\leq K.
\end{align}
\end{theorem}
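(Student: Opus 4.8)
\textbf{Proof proposal for Theorem \ref{thm:u}.}

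The plan is to follow the discretization strategy outlined in the introduction: approximate the continuous parameter $c\in[0,\cc]$ by an equidistant grid and replace \eqref{v_pb} by a regime switching system of $n$ coupled single-obstacle ODE problems. Concretely, for $n\in\N$ set $c_i=\cc\,i/n$ and seek functions $v^{(n)}_i(x)$, $i=0,1,\dots,n$, with $v^{(n)}_n=g$, solving on $\R^+$ the obstacle problem $\min\{-\LL v^{(n)}_i - c_i\TT v^{(n)}_i,\; v^{(n)}_i - v^{(n)}_{i+1}\}=0$ together with $v^{(n)}_i(0)=0$, where the obstacle $v^{(n)}_{i+1}$ comes from the previously solved (higher) index. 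The point of moving down from $i=n$ is that each ODE sees its obstacle as already known, so the system decouples into a finite recursion of scalar obstacle problems, each of which is solvable by classical PDE theory (penalization or direct variational arguments) in $W^2_{p,\mathrm{loc}}(\R^+)\cap L^\infty$. The nonlinearity of $\TT$ is mild — it is convex, Lipschitz, and monotone in $v_x$ — so existence, uniqueness and comparison for each scalar obstacle ODE go through as in the linear case, and the ordering $v^{(n)}_0\geq v^{(n)}_1\geq\cdots\geq v^{(n)}_n$ follows from a comparison argument down the recursion.

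The second block of work is the a priori estimates, uniform in $n$. First, $0\leq v^{(n)}_i\leq \cc/r$ comes from comparison with the trivial sub/supersolutions $0$ and $\cc/r$, using $\TT(\cc/r)=b(1-0)^{\phantom{+}}+(1-b)\cdot 1 = 1$ and $-\LL(\cc/r)=\cc$ — wait, more carefully one checks $-\LL(\cc/r)-c\TT(\cc/r)=r\cdot\cc/r - c\cdot 1 = \cc - c\geq 0$, so $\cc/r$ is a supersolution. Next, monotonicity in $x$ and the gradient bounds $0\leq (v^{(n)}_i)'\leq K$: the lower bound $(v^{(n)}_i)'\geq 0$ follows because a decreasing-in-$x$ solution would contradict $v(0,c)=0\leq v(x,c)$ combined with the obstacle structure; the upper bound $K$ should be obtained by differentiating the ODE (where $-\LL v - c\TT v=0$) and running a maximum-principle argument on $(v^{(n)}_i)'$, using that on the contact set $v^{(n)}_i=v^{(n)}_{i+1}$ the derivative is inherited, so by induction down from $g$ (whose derivative is bounded by $g'(0)$, Lemma \ref{lem:g}) one keeps $K$ independent of $i$ and $n$. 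The estimate \eqref{vxx} — that $v_x$ cannot exceed $\max\{v_x(x,c),1\}$ to the right of $x$ — is the discrete analogue of the concavity-type control in Lemma \ref{lem:g} and should come from analyzing the sign of $v_{xx}$ in the region where $v_x>1$: there $\TT v = 1-v_x$ and the equation is the \emph{linear} operator $-\LL v - c(1-v_x)=0$, so one is back in a setting where the GX23-type argument applies. The Lipschitz bound \eqref{vc} in $c$ is the delicate one: it must be read off the \emph{discrete} differences $n(v^{(n)}_i - v^{(n)}_{i+1})/\cc$, bounding them uniformly; I would compare $v^{(n)}_i$ with $v^{(n)}_{i+1}$ shifted appropriately, exploiting that the two solve obstacle problems with operators differing only by the $O(1/n)$ change in the coefficient $c_i$ and with nested obstacles, and then a comparison/perturbation estimate for obstacle problems gives the uniform Lipschitz-in-$c$ bound.

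The third block passes to the limit $n\to\infty$. Define $v^{(n)}$ on $\Q$ by, say, linear interpolation in $c$ (or as a step function), then the uniform bounds from \eqref{v}, \eqref{vx}, \eqref{vc}, \eqref{vL} give, via Arzelà–Ascoli and weak-$*$ compactness in $W^2_p$ on bounded strips, a subsequential limit $v$ that is continuous, non-increasing in $c$, with $v(\cdot,c)\in W^2_{p,\mathrm{loc}}\cap L^\infty$, $C^{1+\al}$ in $x$ uniformly, and Lipschitz in $c$ — i.e.\ $v\in\A$ and all the stated estimates are inherited. Verifying that this $v$ is a solution in the sense of Definition \ref{def:solution} requires showing \eqref{-Lv>=0} holds a.e.\ (stable under weak $W^2_p$ limits since $\TT$ is continuous and the inequality is convex/closed) and the complementarity condition \eqref{-Lv=0}: if $v(x,c)>v(x,s)$ for all $s>c$, then for large $n$ the corresponding discrete node is strictly above its obstacle on $[0,x]$, hence $-\LL v^{(n)}_i - c_i\TT v^{(n)}_i=0$ there, and this passes to the limit. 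Uniqueness is then a comparison argument at the level of Definition \ref{def:solution}: given two solutions, the non-increasing-in-$c$ structure plus the complementarity condition let one run a maximum-principle/verification-style argument, treating $c$ as a parameter and using that the coincidence set structure forces the two to agree; alternatively uniqueness can be deferred to \thmref{thm:averi} by identifying any solution with the value function. The main obstacle, as flagged in the introduction, is exactly the \emph{nonlinearity of $\TT$}: it breaks the reduction to a one-dimensional problem and invalidates the penalty-method and linear-ODE arguments of \cite{GX23}, so the uniform gradient and Lipschitz-in-$c$ estimates — especially \eqref{vxx} and \eqref{vc} — must be re-derived by hand, carefully separating the region $\{v_x>1\}$ (where $\TT$ is effectively linear) from $\{v_x\leq 1\}$ (where $\TT v = b(1-v_x)$ is also linear but with a different coefficient), and controlling the free boundary between them.
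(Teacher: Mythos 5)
Your overall route coincides with the paper's: discretize the parameter $c$ into an equidistant grid, solve the resulting recursive family of single-obstacle ODE problems (the system \eqref{vi_pb}), establish estimates uniform in $n$ and $i$, interpolate in $c$ and pass to the limit via Arzel\`a--Ascoli plus weak $W^2_p$ compactness on unit intervals, recover the complementarity condition \eqref{-Lv=0} of \defref{def:solution} from the discrete non-contact equations, and prove uniqueness by a maximum-principle comparison (the paper perturbs one solution by $\ep e^{\eta x}$ with $\eta$ small). The bound $0\leq v_i\leq\cc/r$, the $L_p$/Sobolev step, and the limit passage are all as in the paper.

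The genuine gap is in the uniform gradient bound \eqref{vx}, on which everything downstream rests: the Lipschitz-in-$c$ estimate \eqref{vc} is obtained by shifting the obstacle by a constant proportional to $K\Dc$ (so it needs $K$ first), and \eqref{vL} comes from $L_p$ estimates whose right-hand side is bounded only once $v_i'$ is. You propose to keep the bound uniform ``by induction down from $g$, whose derivative is bounded by $g'(0)$''. This fails in direction: the solutions increase as $c$ decreases, so $v_i\geq g$ with $v_i(0)=g(0)=0$, hence $v_i'(0)\geq g'(0)$ --- the slope at the origin \emph{grows} as one moves away from $c=\cc$, so $g$ is a barrier from below, not above; and the maximum principle applied to $v_i'$ in the non-contact region (together with the fact that $v_{i-1}'\leq 1$ at the contact point, which is the content of \eqref{vixx}) only reduces the problem to bounding $v_i'(0)$, which your induction does not control. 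The missing idea is a single upper barrier valid for every regime: the paper uses the explicit Taksar solution $\ov$ of the unconstrained problem $\min\{-\LL\ov,\ \ov'-1\}=0$, $\ov(0)=0$; since $\ov'\geq 1$ one has $\TT\ov\leq 0$, hence $-\LL\ov-c_i\TT\ov\geq 0$ for every $i$, so $\ov$ is a supersolution of each obstacle problem, giving $v_i\leq\ov$, thus $v_i'(0)\leq\ov'(0)$, after which \eqref{vixx} propagates the bound $v_i'\leq\max\{\ov'(0),1\}$ (this is \lemref{lem:vix_ub}). With that barrier inserted your outline matches the paper's proof; without it, \eqref{vx}, \eqref{vc}, \eqref{vL} and the compactness step are unsupported. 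A smaller point: your fallback of deferring uniqueness to \thmref{thm:averi} would be circular, since the verification there uses free-boundary properties proved only for the constructed solution; your primary suggestion of a direct comparison argument at the level of \defref{def:solution} is the right one.
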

\begin{proof}
The proof is given in \appref{sec:solvability}.
\end{proof} 

In the rest part of this section,
we let $v$ denote the unique solution to \eqref{v_pb}, given in \thmref{thm:u}.

To derive the optimal strategy, our initiative thinking is as follows.
On one hand, we hope to pay dividend as much as possible to increase the objective in \eqref{value}, the total discounted dividend payouts. But this may lead to two unfavorable results: first, it may increase the running maximum so that shrinking the set of admissible strategies given by \eqref{admissiblestrategy} in the future and decreasing the value function; second, the ruin time may come earlier. To avoid these negative impacts, our strategy is to increase the running maximum to some higher level only when the value functions coincide at these two levels. This motives us to define the following
{\bf switching region}
\begin{align*}
\SS=\Big\{(x,c)\in \bQ \;\Big|\; v(x,c)=v(x,s) \text{ for some $s\in(c,\cc]$} \Big\}.
\end{align*}
In this region, if we increase the running maximum from $c$ to $s$, the value function will not change. Since $v$ is non-increasing in $c$, the complement of $\SS$ in $\bQ$, called the {\bf non-switching region}, is given by
\begin{align*}
\NS=\Big\{(x,c)\in \bQ \;\Big|\; v(x,c)>v(x,s) \text{ for all $s\in(c,\cc]$}\Big\}.
\end{align*}
In this region, increasing the running maximum is harmful, so one should not do that; in other words, one should not pay dividends exceeding the running maximum.
In particular, we have \eqref{-Lv=0} holds if $(x,c)\in\NS.$

It is nature to conjecture that one shall increase the running maximum only when his wealth is high enough. This motives us to define the {\bf switching boundary} as follows:
\begin{equation}\label{switching_boundary}
\sw(c)=\inf\Big\{x\in \R^+\;\Big|\; (x,c)\in \SS\Big\},~~ c\in[0,\cc),
\end{equation}
with the convention that $\inf\emptyset=+\infty$. Clearly it holds that $\sw(c)\leq x$ if $(x,c)\in \SS$.
The following result shows that the reverse statement is almost true. 

\begin{proposition}\label{prop:characterizationx}
The switching boundary $\sw(\cdot)$ is positive, bounded and continuous on $[0,\cc)$.
The limit $\sw(\cc):=\lim_{c\to\cc-}\sw(c)$ exits and is finite.
Also, if $(x,c)\in \SS$, then $(y,c)\in \SS$ for any $y\geq x$. As a consequence,
the switching and non-switching regions $\SS$ and $\NS$ are separated by $\sw(\cdot)$:
\begin{gather*}
	\Big\{(x,c)\in \bQ \;\Big|\; x> \sw(c)\Big\}\subseteq\SS
	\subseteq\Big\{(x,c)\in \bQ \;\Big|\; x\geq \sw(c)\Big\},\\
	\Big\{(x,c)\in \bQ \;\Big|\; x< \sw(c)\Big\}\subseteq\NS
	\subseteq\Big\{(x,c)\in \bQ \;\Big|\; x\leq \sw(c)\Big\}. 
\end{gather*}
Moreover, 
\begin{gather}\label{increasing}
	\text{$\sw(\cdot)$ is strictly increasing if and only if
		$\big\{(x,c)\in \bQ \;\big|\; x=\sw(c)\big\} \subseteq\NS$.}\\
	\label{decreasing}
	\text{$\sw(\cdot)$ is non-increasing if and only if
		$\big\{(x,c)\in \bQ \;\big|\; x=\sw(c)\big\} \subseteq\SS$.}
\end{gather}
\end{proposition}
\begin{proof}
The proof is given in \appref{sec:freeboundaries}.
\end{proof}

Unfortunately, we do not know whether
$\big\{(x,c)\in \bQ \;\big|\; x=\sw(c)\big\}$
belongs to $\SS$ or $\NS$. 
Indeed, we conjecture that $\sw(\cdot)$ is strictly increasing, which is consistent with numerical examples in \secref{sec:numerical}. Sadly, we cannot confirm this in theory except for $b=1$ which is proved in \cite{GX24}.

In order to find the optimal strategy,
we define the {\bf equivalent maximum rate} as
\begin{align*}
\swb(x,c)=\max\Big\{s\in[0,\cc]\;\Big|\; v(x,s)=v(x,c)\Big\}\in[c,\cc],~~(x,c)\in \bQ,
\end{align*}
and the {\bf equivalent minimum rate} as
\begin{align*}
\minswb(x,c)=\min\Big\{s\in[0,\cc]\;\Big|\; v(x,s)=v(x,c)\Big\}\in[0,c],~~(x,c)\in \bQ.
\end{align*}
For $(x,c)\in \bQ$, the following facts can be easily verified:
\begin{itemize}
\item $\minswb(x,c)\leq c\leq\swb(x,c)$;
\item If $s\in[\minswb(x,c),\swb(x,c)]$, then $v(x, s)=v(x,c)$, $\minswb(x,s)=\minswb(x,c)$, $\swb(x,s)=\swb(x,c)$;
\item If $\minswb(x,c)<\swb(x,c)$, then $(x, s)\in\SS$ for $s\in[\minswb(x,c),\swb(x,c))$;
\item If $\minswb(x,c)=\swb(x,c)$, then $(x,c)\in\NS$;
\item If $\swb(x,c)<\cc$, then $(x, \swb(x,c))\in\NS$;
\item $(x,c)\in\SS$ if and only if $\swb(x,c)>c$;
\item $(x,c)\in\NS$ if and only if $\swb(x,c)=c$.
\end{itemize}

Given a wealth level $x$, we can increase the running maximum dividend rate $c$ to its equivalent maximum rate $\swb(x,c)$, which is the maximum dividend rate that will not reduce the optimal value.
In particular, if $\swb(x,c)=c$, then one should not increase the running maximum dividend rate at this wealth level $x$.
Also if $\swb(x,c)=\cc$, then the optimal value of the problem \eqref{value} is known since $v(x,c)=v(x,\cc)=g(x)$.

The following result, which characterizes the equivalent maximum rate, will play a critical role in finding an optimal strategy for the problem \eqref{value}.

\begin{proposition}\label{prop:xi+}
For each $c\in[0,\cc)$, $x\mapsto \swb(x,c)$ is non-decreasing and
$x\mapsto \minswb(x,c)$ is non-increasing on $\R^{+}$.
Also, for $(x,c)\in\bQ$, it holds that
\begin{align} 
	\label{valueatboundary}
	\sw\big(\swb(x,c)\big)&=x&&\hbox{if $\minswb(x,c)<\swb(x,c)<\cc$;}\qquad\smallskip\\
	\label{lvatdecreasingpoint}
	-\LL v\big(y,\swb(x,c)\big)-\swb(x,c) \TT v\big(y,\swb(x,c)\big)&=0&&\hbox{if }\; 0 < y < x;\smallskip\\
	\label{lvatincreasingpoint}
	-\LL v\big(y,\minswb(x,c)\big)-\minswb(x,c) \TT v\big(y,\minswb(x,c)\big)&=0&&\hbox{if }\; 0<y< x\mbox{ and }\minswb(x,c)>0;\smallskip\\
	\label{xi+} \lim\limits_{y\to x+} \swb(y,c)&=\swb(x,c);\smallskip\\
	\label{xi-} \lim\limits_{y\to x+} \minswb(y,c)&=\minswb(x,c);&& \smallskip\\
	\label{v_c=0} v_c(x, s)&=0&&\hbox{if } \minswb(x,c)<s\leq\swb(x,c)<\cc.
\end{align}
Also,
\begin{gather*}
	\SS=\Big\{(x,s) \;\Big|\; \minswb(x,c)\leq s<\swb(x,c),~(x,c)\in \bQ\Big\}.
\end{gather*}
\end{proposition}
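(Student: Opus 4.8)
The plan is to prove the six displayed identities together with the final set description, exploiting the elementary properties of $\swb$ and $\minswb$ already listed in the bullet points before the proposition, plus the characterization of the switching boundary $\sw(\cdot)$ from \propref{prop:characterizationx} and the PDE information in \defref{def:solution} and \thmref{thm:u}. I would not treat the six claims in isolation but rather establish monotonicity of $x\mapsto\swb(x,c)$ first, since several other claims follow from it.

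\textbf{Step 1: monotonicity of $\swb$ and $\minswb$ in $x$.} Fix $c$ and take $x_1<x_2$. The key is that if $(x_1,c)\in\SS$, i.e.\ $v(x_1,c)=v(x_1,s)$ for some $s>c$, then $(x_2,c)\in\SS$ as well; this is exactly the ``vertical'' monotonicity of $\SS$ from \propref{prop:characterizationx} ($x\geq\sw(c)\Rightarrow(x,c)\in\SS$). More precisely I want: $v(x_1,c)=v(x_1,\swb(x_1,c))$ implies $v(x_2,s)=v(x_2,c)$ for $s=\swb(x_1,c)$, which would give $\swb(x_2,c)\geq\swb(x_1,c)$. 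To see this, note that on the strip $[0,x_1]$ both $v(\cdot,c)$ and $v(\cdot,\swb(x_1,c))$ satisfy the \emph{same} PDE $-\LL v - s\TT v=0$ (using \eqref{-Lv=0} applied at $(x_1,c)$, legitimate since $c<\swb(x_1,c)\le$ any $s\in(c,\swb(x_1,c)]$, together with the fact $(x_1,\minswb(x_1,\cdot)),\dots$); combined with $v(0,c)=v(0,\swb(x_1,c))=0$ and $v(x_1,c)=v(x_1,\swb(x_1,c))$, a uniqueness/comparison argument for the linear-in-each-regime ODE on $[0,x_1]$ forces $v(\cdot,c)\equiv v(\cdot,\swb(x_1,c))$ on $[0,x_1]$, hence in particular at $x=x_1$ and we already knew the vertical propagation to $x_2$ from \propref{prop:characterizationx}. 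Symmetrically one argues $\minswb(\cdot,c)$ is non-increasing, using that $(x,c)\in\NS\Rightarrow(y,c)\in\NS$ for $y\le x$ (contrapositive of the same statement). This simultaneously yields \eqref{lvatdecreasingpoint} and \eqref{lvatincreasingpoint}: apply \eqref{-Lv=0}/\defref{def:solution}(4) at the point $(x,\swb(x,c))$ — valid precisely because $v(x,\swb(x,c))>v(x,s)$ for $s>\swb(x,c)$ by maximality (when $\swb(x,c)<\cc$; the case $\swb(x,c)=\cc$ uses $v(\cdot,\cc)=g$ and \eqref{Lg} directly) — and similarly the contrapositive/definition at $\minswb(x,c)>0$.

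\textbf{Step 2: right-continuity \eqref{xi+} and \eqref{xi-}.} By the just-proved monotonicity, $\lim_{y\to x+}\swb(y,c)=:\bar\xi$ exists and $\bar\xi\ge\swb(x,c)$. For the reverse inequality use continuity of $v$ in $(x,c)$ jointly (from \thmref{thm:u}): $v(y,\swb(y,c))=v(y,c)$ for $y>x$, and letting $y\to x+$ along a sequence with $\swb(y,c)\to\bar\xi$ gives $v(x,\bar\xi)=v(x,c)$, hence $\bar\xi\le\swb(x,c)$ by maximality of $\swb$. The argument for $\minswb$ is identical with ``min'' in place of ``max''.

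\textbf{Step 3: the fixed-point identity \eqref{valueatboundary} and \eqref{v_c=0}.} Suppose $\minswb(x,c)<\swb(x,c)<\cc$; write $s^*=\swb(x,c)$. From the bullet list, $(x,s)\in\SS$ for $s\in[\minswb(x,c),s^*)$ while $(x,s^*)\in\NS$, so $x=\sw(s^*)$ would follow if we show $x$ is exactly the threshold: for $s$ slightly below $s^*$ we have $(x,s)\in\SS$ so $x\ge\sw(s)$; letting $s\to s^*-$ and using continuity of $\sw$ (\propref{prop:characterizationx}) gives $x\ge\sw(s^*)$. Conversely $(x,s^*)\in\NS$ gives $x\le\sw(s^*)$ via the inclusion $\{x<\sw(c)\}\subseteq\NS\subseteq\{x\le\sw(c)\}$ — but we need the strict side: if $x>\sw(s^*)$ then $(x,s^*)\in\SS$, contradiction; so $x\le\sw(s^*)$. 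Hence $x=\sw(s^*)$. For \eqref{v_c=0}, if $\minswb(x,c)<s\le s^*<\cc$ then on a neighborhood $(x-\delta,x+\delta)$ (using \eqref{xi+}, \eqref{xi-} and monotonicity) one has $v(y,s)=v(y,c)$ constant in $s$ across $[\minswb,\,s^*]$, so $v_c(x,s)=0$ pointwise wherever $v_c$ exists; since $v_x$ is continuous and $v\in W^2_{p,\mathrm{loc}}$ this identifies $v_c=0$ a.e.\ there, and by Lipschitz continuity in $c$ it holds at the stated points.

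\textbf{Step 4: the set identity for $\SS$.} The inclusion $\supseteq$ is immediate from the bullet ``$\minswb(x,c)<\swb(x,c)\Rightarrow(x,s)\in\SS$ for $s\in[\minswb,\swb)$''. For $\subseteq$, take $(x,c)\in\SS$, i.e.\ $\swb(x,c)>c$; then with the \emph{same} $x$ and the same ambient point, $c\in[\minswb(x,c),\swb(x,c))$ and $\minswb(x,c)<\swb(x,c)$, so $(x,c)$ sits in the right-hand set. I expect \textbf{Step 1}, specifically the rigorous propagation ``$v(x_1,c)=v(x_1,\swb(x_1,c))$ $\Rightarrow$ equality of the two profiles on all of $[0,x_1]$ and then at $x_2$,'' to be the main obstacle: one must carefully justify that both profiles solve the same ODE regime-by-regime on $[0,x_1]$ (the coefficient $s\TT v$ depends on the running-maximum argument, which differs between the two profiles, so one needs \eqref{lvatdecreasingpoint}-type statements to already be available, or to bootstrap them jointly) and then invoke the right uniqueness statement for the obstacle-type ODE with matching boundary data — a comparison-principle argument that must be set up so as not to be circular with the identities being proved.
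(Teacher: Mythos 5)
Your Steps 2 and 4, and the fixed-point identity \eqref{valueatboundary}, are essentially the paper's own arguments, but the remaining items contain genuine gaps. The central one is Step 1. To get $\swb(x_2,c)\geq\swb(x_1,c)$ you need the propagation statement ``$(x_1,c)\in\SS$ implies $v(y,s)=v(y,c)$ for all $y\geq x_1$ and $s\in[\minswb(x_1,c),\swb(x_1,c)]$'' (this is the paper's \lemref{lem:N}); the weaker fact quoted from \propref{prop:characterizationx}, namely $(x_2,c)\in\SS$, only yields $\swb(x_2,c)>c$. Your attempted derivation of the propagation does not work: \eqref{-Lv=0} may be invoked at $(x_1,c)$ only when $(x_1,c)\in\NS$, whereas you are assuming $(x_1,c)\in\SS$; the two profiles $v(\cdot,c)$ and $v(\cdot,\swb(x_1,c))$ do not satisfy the \emph{same} equation (the coefficient multiplying $\TT$ is $c$ versus $\swb(x_1,c)$); and the asserted conclusion $v(\cdot,c)\equiv v(\cdot,\swb(x_1,c))$ on $[0,x_1]$ is false in general, because for $y<\sw(c)$ (and $\sw(c)>0$ always) one has $(y,c)\in\NS$, hence $v(y,c)>v(y,\swb(x_1,c))$. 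The correct argument works on the region to the \emph{right} of $x_1$: on $[x_1,+\infty)\times[\minswb(x_1,c),\swb(x_1,c)]$ one uses $v_x(\cdot,\swb(x_1,c))\leq 1$ (so $\TT v\geq 0$) to check that the frozen profile $v(\cdot,\swb(x_1,c))$ solves the variational problem there, and then uniqueness identifies it with $v$; nothing of this kind appears in your sketch, and it cannot be replaced by a boundary-value uniqueness argument on $[0,x_1]$.

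Two further gaps. For \eqref{lvatincreasingpoint} you write ``similarly the contrapositive/definition at $\minswb(x,c)>0$'', but when $\minswb(x,c)<\swb(x,c)$ the point $(x,\minswb(x,c))$ lies in $\SS$, so item 4 of \defref{def:solution} cannot be applied there at all; the paper's proof takes $a_n\to\minswb(x,c)-$, shows $a_n\leq\swb(x,a_n)<\minswb(x,c)$ and $\swb(x,a_n)\to\minswb(x,c)$ with $(x,\swb(x,a_n))\in\NS$, applies \eqref{-Lv=0} at those levels and passes to the limit using the uniform $W^2_p$ and $C^{1+\al}$ bounds. For \eqref{v_c=0}, constancy of $v(x,\cdot)$ on $[\minswb(x,c),\swb(x,c)]$ settles interior values of $s$, but at the endpoint $s=\swb(x,c)$ it only controls the left derivative; your claim that $v(y,\cdot)$ is constant across $[\minswb(x,c),\swb(x,c)]$ for $y$ in a two-sided neighborhood of $x$ is false (for $y<x=\sw(\swb(x,c))$ the points $(y,s)$ with $s$ near $\swb(x,c)$ lie in $\NS$, where $v(y,\cdot)$ is strictly decreasing), and Lipschitz continuity in $c$ does not force the right derivative to vanish. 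The paper closes exactly this point with the one-sided estimate of \lemref{lem:ucL}, which gives $0\leq\frac{v(x,s)-v(x,s+\Ds)}{\Ds}\leq\frac{v(x,s-\Ds)-v(x,s)}{\Ds}+2B\Ds=2B\Ds$ and hence $v_c(x,\swb(x,c))=0$; some substitute for this semi-Lipschitz control in $c$ is indispensable in your write-up as well.
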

\begin{proof}
The proof is given in \appref{sec:proofofprop:xi+}.
\end{proof}

\begin{figure}[H]
\begin{center} \vspace{0pt}
	
	\begin{tikzpicture}[line width=0.4pt, scale=0.86]
		\thicklines
		\draw[black,thick,->] (0, 0) -- (0, 8);
		
		\draw[black,thick,->] (0, 0) -- (12, 0);
		\draw[black,thick] (11, 0) -- (11, 7.5);
		\node[rectangle] at (11, -0.3) {$\cc$};
		\node[rectangle] at (-0.3, -0.3) {$0$};

		\draw[blue,dashed] (0, 5) .. controls (0.8, 2) and (1, 1.6) .. (2.3, 1.4);
		\node[red] at (2.3, 1.4) {$\bullet$};
		\draw[red,dashed] (2.3, 1.4) .. controls (3, 1.6) and (4, 2) .. (5.1, 4.2);
		\node[blue] at (5.1, 4.2) {$\bullet$};
		\draw[blue,dashed] (5.1, 4.2) .. controls (6.1, 2) and (6.8, 0.8) .. (7.5, 0.7);
		\node[blue] at (7.5, 0.7) {$\bullet$};
		\draw[blue,thick] (7.5, 0.7) -- (8.2, 0.7);
		\node[red] at (8.2, 0.7) {$\bullet$};
		\draw[red,dashed] (8.2, 0.7) .. controls (9, 0.6) and (10, 1.9) .. (11, 6.5);

		\node[rectangle] at (2.4, -0.3) {$c_{1}$};
		\node[rectangle] at (7.9, -0.3) {$c_{2}$};
		
		\node[rectangle] at (0.8, 6.8) {$\minswb(x_{4},c)$};
		\node[rectangle] at (10.1, 6.8) {$\swb(x_{4},c)$};
		\draw[blue,thick] (0, 6.5) -- (11, 6.5);
		\node[blue] at (0, 6.5) {$\bullet$};
		
		\node[blue] at (5.1, 5.75) {$\SS$};
		
		\node[red] at (5.1, 1.75) {$\NS$};

		\node[rectangle] at (0.8, 5.3) {$\minswb(x_{3},c)$};
		\node[rectangle] at (9.8, 5.3) {$\swb(x_{3},c)$};
		\node[blue] at (0, 5) {$\bullet$};
		\draw[blue,thick] (0, 5) -- (10.67, 5);
		\node[red] at (10.67, 5) {$\bullet$};
		\draw[red,thick] (10.67, 5) .. controls (10.8, 4.5) .. (11, 4.3);

		\node[rectangle] at (1.4, 3.8) {$\minswb(x_{2},c_{1})$};
		\node[rectangle] at (3.9, 3.8) {$\swb(x_{2},c_{1})$};
		
		\node[rectangle] at (6.8, 2.7) {$\minswb(x_{2},c_{2})$};
		\node[rectangle] at (8.9, 2.7) {$\swb(x_{2},c_{2})$};

		\node[red] at (0, 4) {$\bullet$};
		\draw[red,thick] (0, 4) .. controls (0.3, 3.6) .. (0.4, 3.5);
		\node[blue] at (0.4, 3.5) {$\bullet$};
		\draw[blue,thick] (0.4, 3.5) -- (4.75, 3.5);
		\node[red] at (4.75, 3.5) {$\bullet$};
		\draw[red,thick] (4.75, 3.5) .. controls (5, 2.8) .. (6, 2.4);
		\node[blue] at (6, 2.4) {$\bullet$};
		\draw[blue,thick] (6, 2.4) -- (9.8, 2.4);
		\node[red] at (9.8, 2.4) {$\bullet$};
		\draw[red,thick] (9.8, 2.4) .. controls (10, 1.8) .. (11, 1);

		\node[red] at (0, 2) {$\bullet$};
		\draw[red,thick] (0, 2.1) .. controls (0.2, 0.9) and (5, 0.8) .. (7.5, 0.7);
		\node[rectangle] at (6.8, 0.3) {$\minswb(x_{1},c_{2})$};
		\node[rectangle] at (9, 0.3) {$\swb(x_{1},c_{2})$};
		\draw[red,thick] (8.2, 0.7) .. controls (10, 0.6) and (10.5, 0.47) .. (11, 0.4);
		\node[blue] at (7.5, 0.7) {$\bullet$};

	\end{tikzpicture}
\end{center}
\caption{
	The value functions $c\mapsto v(x_{i},c)$ are in solid curve, from below to top corresponding to $x_{1}<x_{2}<x_{3}<x_{4}$.
	The dashed curve is $\sw(\cdot)$. Its blue parts stand for $\minswb(x,c_{1})$ and $\minswb(x,c_{2})$ at all different levels of $x$, whereas the red parts stand for $\swb(x,c_{1})$ and $\swb(x,c_{2})$.
	If a point belongs to $\SS$ (resp. $\NS$), then so does any point above (resp. below) it.
	Hence, all points above (resp. below) dashed curve belong to $\SS$ (resp. $\NS$).
	The blue (resp. red) parts of the dashed curve belong to $\SS$ (resp. $\NS$). Therefore, the blue parts of the solid curve are expanding as $x$ gets bigger and eventually the whole curve becomes blue above certain threshold (which is indeed $\sup_{c\in[0,\cc)}\sw(c)$).
	Although the dashed curve in above demonstration figure is $W$-shaped, we believe it should be increasing as $c$ gets bigger. Unfortunately, we cannot prove this, but numerical examples in \secref{sec:numerical} strongly support our conjecture.
}
\end{figure}
On the other hand, the objective is linear in the dividend payout, so one may expect a bang-bang dividend payout strategy: paying dividend at the maximum possible rate if the surplus is relatively high; at the minimum rate otherwise.
If the marginal utility is bigger than one, then one should keep the surplus as high as possible to wait for a better future, so one should save money and pay dividend at the minimum possible rate; otherwise, at the maximum rate.
The above thinking motives us to define the following
{\bf minimum dividend payout region}
\begin{align*}
\Dmin=\Big\{(x,c)\in \Q \;\Big|\; v_x(x,c)> 1\Big\}, 
\end{align*}
and the {\bf maximum dividend payout region}
\begin{align*}
\Dmax=\Big\{(x,c)\in \Q \;\Big|\; v_x(x,c)\leq 1\Big\}. 
\end{align*}
These two regions will be separated by the following {\bf converting boundary}
\begin{equation}\label{converting boundary}
\sd(c)=\inf\Big\{x\in\R^{+}\;\Big|\; v_x(x,c)\leq 1\Big\},~~ c\in[0,\cc].
\end{equation}
The following result characterizes the converting boundary.

\begin{proposition}\label{prop:characterizationy}
The converting boundary $\sd(\cdot)$ is continuous on $[0,\cc]$, and satisfies
$\sd(\cc)=y_0$,
$ v_x(\sd(c),c)=1$,
and
\begin{align}\label{y<x}
	0<\sd(c)<\sw(c),~~ c\in[0,\cc].
\end{align}
Also, it separates the minimum and the maximum dividend payout regions $\Dmin$ and $\Dmax$:
\begin{align*}
	\Dmin &=\Big\{(x,c)\in \Q \;\Big|\; x< \sd(c)\Big\},~~
	\Dmax=\Big\{(x,c)\in \Q \;\Big|\; x\geq \sd(c)\Big\}.
\end{align*}
\end{proposition}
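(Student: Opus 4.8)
The plan is to establish \propref{prop:characterizationy} in four stages, using the regularity of $v$ from \thmref{thm:u} together with the structural properties of $\swb(\cdot,\cdot)$ and $\sw(\cdot)$ already recorded in Propositions \ref{prop:characterizationx} and \ref{prop:xi+}.

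\textbf{Step 1: The value $\sd(\cc)=y_0$ and the identity $v_x(\sd(c),c)=1$.} Since $v(\cdot,\cc)=g$ and \lemref{lem:g} gives that $g'$ is strictly decreasing with $g'(y_0)=1$ (in the complicated case $2\mu\cc>\si^2r$), we get $\{x:g'(x)\leq 1\}=[y_0,\infty)$, hence $\sd(\cc)=y_0$. For general $c$, because $v_x(\cdot,c)$ is continuous (by \thmref{thm:u}) and, by \eqref{vxx}, once $v_x$ drops to or below $1$ it stays $\leq\max\{v_x(x,c),1\}=1$ for all larger $x$, the set $\{x:v_x(x,c)\leq1\}$ is exactly $[\sd(c),\infty)$, and continuity of $v_x(\cdot,c)$ forces $v_x(\sd(c),c)=1$ provided $\sd(c)$ is finite and positive. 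Finiteness will follow from \eqref{v}: since $0\le v\le\cc/r$ and $v(0,c)=0$, $v_x(\cdot,c)$ cannot stay strictly above $1$ on all of $\R^+$ (that would make $v$ unbounded along $x$), so $\sd(c)<\infty$. Positivity $\sd(c)>0$ follows because $v(0,c)=0$ forces $v_x(0,c)$ to be large (indeed one should argue $v_x(0,c)>1$: near $x=0$ ruin is imminent, marginal value of surplus exceeds marginal value of immediate payout; this can be read off the ODE \eqref{-Lv=0}/\eqref{lvatdecreasingpoint} which holds near $0$ since $(0,c)$ lies in the closure of $\NS$, combined with $v(0,c)=0$).

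\textbf{Step 2: The sandwich $0<\sd(c)<\sw(c)$.} The left inequality is Step 1. For $\sd(c)<\sw(c)$: on $\NS$, i.e. for $x<\sw(c)$, equation \eqref{-Lv=0} holds, so $-\LL v-c\TT v=0$ there; I will show this ODE, together with $v(0,c)=0$, $0\le v\le\cc/r$, is incompatible with $v_x\leq 1$ on an initial stretch. Concretely, if $v_x(x,c)\le1$ for all $x\in[0,\sw(c))$ then $\TT v = b(1-v_x)+(1-b)(1-v_x)^+ = 1-v_x$ on that interval, so $v$ solves $-\frac12\si^2 v_{xx}-\mu v_x+rv-c(1-v_x)=0$; but the relevant "exit" solution of this linear ODE vanishing at $0$ has derivative exceeding $1$ at $0$ precisely when $c\ga_c>r$ for the associated root — and in the complicated regime this forces $v_x(0,c)>1$, a contradiction. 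Hence $\sd(c)\ge\sw(c)$ is impossible, i.e. $\sd(c)<\sw(c)$. (Alternatively, at $x=\sw(c)$ one is in $\SS$, where $v(\cdot,c)=v(\cdot,s)$ for some $s>c$ on a neighbourhood; matching with the boundary case analysis and $g'(y_0)=1$ gives $v_x(\sw(c),c)<1$ strictly, again forcing $\sd(c)<\sw(c)$.)

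\textbf{Step 3: The region description.} Given Step 1, for each fixed $c$ the set $\{x:v_x(x,c)\le1\}=[\sd(c),\infty)$ and its complement in $\R^+$ is $[0,\sd(c))$; this is exactly the claimed decomposition $\Dmax=\{x\ge\sd(c)\}$, $\Dmin=\{x<\sd(c)\}$. The only input is the monotone-threshold property \eqref{vxx}, which guarantees $v_x(\cdot,c)\le1$ is an "up-set" in $x$, so the infimum defining $\sd(c)$ genuinely separates the two regions with no gaps.

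\textbf{Step 4 (the main obstacle): continuity of $\sd(\cdot)$ on $[0,\cc]$.} This is the delicate part. Lower semicontinuity: if $c_n\to c$ and $\sd(c_n)\to \ell$, then by \eqref{vx} and \eqref{vL} the family $v(\cdot,c_n)$ is bounded in $C^{1+\al}$ and, since $v$ is Lipschitz in $c$, $v(\cdot,c_n)\to v(\cdot,c)$ in $C^1_{\rm loc}$; passing to the limit in $v_x(\sd(c_n),c_n)=1$ gives $v_x(\ell,c)=1$, so $\ell\ge\sd(c)$ (using that $v_x(\cdot,c)>1$ strictly on $[0,\sd(c))$, which itself needs an argument: strict inequality before the threshold, obtainable from the strong maximum principle applied to the equation satisfied by $v_x$ on $\Dmin$, where $\TT v=b(1-v_x)+(1-b)(1-v_x)=1-v_x$ too). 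Upper semicontinuity is the harder direction and is where I expect to spend the most effort: one must rule out that $\sd$ jumps down in the limit, i.e. show $\limsup \sd(c_n)\le \sd(c)$. The natural route is again $C^1_{\rm loc}$ convergence of $v(\cdot,c_n)\to v(\cdot,c)$: for any $x>\sd(c)$ we have $v_x(x,c)<1$ strictly (strict, by the strong maximum principle on $\Dmax$ where the equation is $-\LL v-c\TT v=0$ with $\TT v=1-v_x$ and $v_x(\sd(c),c)=1$ being a boundary extremum — so $v_x<1$ on the open set $x>\sd(c)$), hence $v_x(x,c_n)<1$ for large $n$, giving $\sd(c_n)\le x$; letting $x\downarrow\sd(c)$ finishes. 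The subtle points are: (i) upgrading $v_x\leq 1$ to the strict inequalities $v_x<1$ on $\Dmax^{\circ}$ and $v_x>1$ on $\Dmin^{\circ}$ — this needs the governing PDE to be nondegenerate and a Hopf-type lemma, and one must be careful that $\TT v$ changes form across $\sd(c)$ only in the $(1-b)(\cdot)^+$ term, which is continuous, so the equation for $v_x$ is uniformly elliptic across the boundary; (ii) the endpoint $c=\cc$, where one must check $\lim_{c\to\cc-}\sd(c)=\sd(\cc)=y_0$ — this uses $v(\cdot,c)\to v(\cdot,\cc)=g$ in $C^1_{\rm loc}$ (itself from Lipschitz-in-$c$ plus the $C^{1+\al}$ bound \eqref{vL}) and $g'(y_0)=1$ with $g'$ strictly decreasing. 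Assembling (i) and (ii) with the two semicontinuity arguments yields continuity on the closed interval $[0,\cc]$.
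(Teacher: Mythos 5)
The decisive gap is in your Step 2, which is precisely where the paper has to work hardest. Given \eqref{vxxc} and the continuity of $v_x$, the claim $\sd(c)<\sw(c)$ is equivalent to the \emph{strict} inequality $v_x(\sw(c),c)<1$; its negation is $\sd(c)=\sw(c)$, i.e.\ $v_x(\cdot,c)>1$ on $[0,\sw(c))$ with $v_x(\sw(c),c)=1$. Your contradiction hypothesis --- ``$v_x(x,c)\le 1$ for all $x\in[0,\sw(c))$'' --- is not this negation; the configuration you exclude is already impossible for trivial reasons (monotonicity of $v$ in $c$ and $v(0,c)=g(0)=0$ give $v_x(0,c)\ge g'(0)>1$, cf.\ \eqref{vx0>1}), and excluding it says nothing about whether $v_x$ first reaches the value $1$ exactly at $\sw(c)$. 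Your parenthetical alternative assumes $(\sw(c),c)\in\SS$, which is not known (the paper explicitly leaves open whether the graph of $\sw(\cdot)$ lies in $\SS$ or $\NS$), and ``matching with the boundary case analysis'' is not an argument: \lemref{lem:N} only yields $v_x(\sw(c),c)\le 1$. In the paper this strict inequality is \lemref{lem:vx<1} and consumes the whole final subsection: one assumes $v_x(\sw(c),c)=1$, shows via a maximum principle for $v_{xx}$ (using $\cc\le\mu$) that then $v_x>1$ on $[0,\sw(c))$ (\lemref{lem:vxx>=0}), and derives a contradiction from two Hopf-lemma arguments (\lemref{lemnovx=1} and \lemref{lemnovx=2}) showing that a level $s<c$ with $v(\sw(c),s)=v(\sw(c),c)$ simultaneously cannot and must exist; this also requires extending the constraint range to negative $c$ to cover $c=0$, and a separate case analysis when $\swb(\sw(c),c)>c$. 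None of this machinery is present or replaced by anything in your proposal, so the inequality \eqref{y<x} is not established.

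Two secondary points. First, positivity of $\sd(c)$ in your Step 1 is only heuristic (``ruin is imminent, marginal value exceeds\ldots''); the actual one-line proof is the monotonicity comparison $v(x,c)\ge v(x,\cc)=g(x)$ with $v(0,c)=g(0)=0$, giving $v_x(0,c)\ge g'(0)>1$, or the argument in \lemref{lem:yc} comparing $v_x(0,c)$ with $v_x(0,\cc)$. Second, your upper-semicontinuity argument in Step 4 hinges on $v_x(x,c)<1$ strictly for all $x>\sd(c)$, obtained ``by the strong maximum principle on $\Dmax$ where the equation is $-\LL v-c\TT v=0$''; but that equation holds only in $\NS$ (essentially $x<\sw(c)$), while in $\SS$ only the inequality $-\LL v-c\TT v\ge 0$ is available for the coefficient $c$, and one cannot differentiate an inequality to get an equation for $v_x$ there, so this strict bound is unsupported as stated. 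The paper's continuity proof (\lemref{lem:yc}) sidesteps this entirely: using \eqref{vxx} and continuity of $v_x$, any discrepancy among limit points of $\sd(s)$ (or with $\sd(c)$) would force $v_x(\cdot,c)\equiv 1$ on a nondegenerate interval, and substituting $v_x\equiv 1$ into $-\LL v-c\TT v=0$ yields $v\equiv\mu/r$, hence $v_x\equiv 0$, a contradiction --- no strict inequality off the boundary is needed. I would recommend adopting that argument for continuity and confronting the strict inequality $v_x(\sw(c),c)<1$ head-on, since it is the genuine content of the proposition.
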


Using the above results, we now give the optimal value and the optimal strategy for the problem \eqref{value} in the complicated case as follows.
\begin{theorem}[Optimal value and optimal strategy in the complicated case.]
\label{thm:averi}
Let $v$ be the unique solution to \eqref{v_pb}, given in \thmref{thm:u}. Then it coincides with the optimal value to the problem \eqref{value}.
Moreover, $\{\DD^*_{t}\}_{t\geq 0}$ is an optimal strategy for the problem \eqref{value} with $(x,c)\in\bQ$, where the triple $(X^*_t, \RM^*_t, \DD^*_t)$ is determined by
\[\begin{cases}
	\displaystyle{X^*_t=x+\int_0^t(\mu-\DD^*_s) \ds+\int_0^t\sigma \dw_s,}\\[5mm]
	\RM^*_t=\swb\Big(\max\limits_{s\in[0,t]}X^*_s,c\Big),
	\\[5mm]
	\DD^*_t=\Big(b+(1-b)1_{\{X^*_t\geq\sd(\RM^*_t)\}}\Big)\RM^*_t.
\end{cases}
\]
\end{theorem}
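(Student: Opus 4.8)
The plan is a two‑sided verification argument of the same flavour as Proposition~\ref{prop:boundary}, now driven by the HJB characterization in Definition~\ref{def:solution} together with the free‑boundary identities of Propositions~\ref{prop:xi+} and~\ref{prop:characterizationy}. Throughout I write $M^*_t:=\max_{0\leq s\leq t}X^*_s$.

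\emph{Step 1: $v\geq V$ on $\Q$.} Fix $(x,c)\in\bQ$ and an arbitrary $\DD\in\Pi_{[c,\cc]}$, with surplus $X$, running maximum $\RM_t$ and ruin time $\tau$. Since $v(\cdot,c)\in W^2_{p,\mathrm{loc}}(\R^+)$ with $v_x$ bounded and continuous and $v$ Lipschitz in $c$ with $v_c\leq 0$ (Theorem~\ref{thm:u}), I would apply a generalized (Itô--Krylov) change–of–variables formula to $e^{-rt}v(X_t,\RM_t)$, splitting $\RM$ into its continuous non‑decreasing part and its jumps. The finite‑variation contribution
\[
\int_{(0,t]}e^{-rs}v_c(X_s,\RM_s)\,d\RM^{c}_s+\sum_{0<s\leq t}e^{-rs}\big(v(X_s,\RM_s)-v(X_s,\RM_{s-})\big)\leq 0
\]
is non‑positive because $v_c\leq 0$ and $v$ is non‑increasing in $c$ while $\RM$ is non‑decreasing. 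For the drift part, note $\DD_s\leq\RM_s$ always, so $\DD_s(1-v_x)\leq\RM_s\max_{b\leq d\leq1}d(1-v_x)=\RM_s\,\TT v$, whence by \eqref{-Lv>=0}
\[
\LL v(X_s,\RM_s)-\DD_s v_x(X_s,\RM_s)+\DD_s\leq\LL v(X_s,\RM_s)+\RM_s\,\TT v(X_s,\RM_s)\leq0 .
\]
Thus $d\big(e^{-rt}v(X_t,\RM_t)\big)\leq-e^{-rt}\DD_t\,dt+e^{-rt}\sigma v_x\,dW_t$; integrating on $[0,\tau\wedge T]$, taking expectations (the stochastic integral is a martingale since $v_x$ is bounded), using $v\geq0$ and letting $T\to\infty$ by monotone convergence gives $v(x,c)\geq\E[\int_0^\tau e^{-rt}\DD_t\,dt\mid X_0=x,\RM_0=c]$. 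Taking the supremum over $\Pi_{[c,\cc]}$ yields $v\geq V$ on $\bQ$; on $\{c=\cc\}$ it is Theorem~\ref{op1}.

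\emph{Step 2: the feedback triple attains $v$.} First I would show $(X^*,\RM^*,\DD^*)$ is well defined and admissible: on each stretch of time on which $\RM^*$ is constant, say $\equiv c'$, the drift $\mu-\DD^*_t=\mu-\big(b+(1-b)1_{\{X^*_t\geq\sd(c')\}}\big)c'$ is a bounded measurable function of $X^*_t$ alone, so (by a Zvonkin‑type argument, $\sigma$ being a positive constant) the SDE for $X^*$ has a strong solution there, and these pieces patch into a global solution because $\RM^*_t=\swb(M^*_t,c)$ is non‑decreasing ($\swb(\cdot,c)$ being non‑decreasing by Proposition~\ref{prop:xi+} and $M^*_t$ being non‑decreasing). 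Since $\DD^*_t\in\{b\RM^*_t,\RM^*_t\}$ we have $b\RM^*_t\leq\DD^*_t\leq\RM^*_t\leq\cc$, and $\RM^*$ is the running maximum of $\DD^*$: whenever $M^*$ attains a new maximum at which $\RM^*$ strictly increases and $\swb(M^*_t,c)<\cc$, \eqref{valueatboundary} and \eqref{y<x} give $X^*_t=M^*_t=\sw(\swb(M^*_t,c))>\sd(\swb(M^*_t,c))=\sd(\RM^*_t)$, so $\DD^*_t=\RM^*_t$ there (a short additional argument using Theorem~\ref{op1} and $\sd(\cc)=y_0$ covers the case $\RM^*_t=\cc$). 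Then I would rerun the computation of Step~1 with $\DD=\DD^*$ and check that every inequality is now an equality: \eqref{lvatdecreasingpoint}, with $c$ replaced by $\RM^*_t=\swb(M^*_t,c)$ and $y=X^*_t\in[0,M^*_t]$, gives $\LL v(X^*_t,\RM^*_t)+\RM^*_t\,\TT v(X^*_t,\RM^*_t)=0$; by Proposition~\ref{prop:characterizationy}, $\DD^*_t=\RM^*_t$ exactly on $\{X^*_t\geq\sd(\RM^*_t)\}=\{v_x\leq1\}$ and $\DD^*_t=b\RM^*_t$ otherwise, so $\DD^*_t$ maximizes $d\mapsto d(1-v_x(X^*_t,\RM^*_t))$ over $[b\RM^*_t,\RM^*_t]$ and $\DD^*_t(1-v_x)=\RM^*_t\,\TT v$; and the $\RM^*$‑terms vanish identically, because $\RM^*$ increases only when $M^*_t$ lies where $(\cdot,c)\in\SS$, where $\minswb<\swb\leq\cc$ and hence $v_c(M^*_t,s)=0$ for $s\in(\minswb,\swb]$ by \eqref{v_c=0} (continuous part), while at a jump time $\RM^*_{t-}=\lim_{m\to M^*_t-}\swb(m,c)\in[\minswb(M^*_t,c),\swb(M^*_t,c)]$ so that $v(X^*_t,\RM^*_{t-})=v(X^*_t,\RM^*_t)$ (jump part). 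Hence $d\big(e^{-rt}v(X^*_t,\RM^*_t)\big)=-e^{-rt}\DD^*_t\,dt+e^{-rt}\sigma v_x\,dW_t$; integrating, taking expectations and letting $T\to\infty$ — using $v(0,\cdot)=0$ on $\{\tau<\infty\}$ and $0\leq v\leq\cc/r$ on $\{\tau=\infty\}$ for the boundary term, and $v(X^*_0,\RM^*_0)=v(x,\swb(x,c))=v(x,c)$ — yields $v(x,c)=\E[\int_0^\tau e^{-rt}\DD^*_t\,dt]$. Combined with Step~1 this proves $V=v$ and the optimality of $\DD^*$.

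\emph{Main obstacle.} The verification inequality itself is routine, having been carried out for the boundary case. The genuinely delicate points are: (i) justifying the generalized change–of–variables formula for the merely $W^2_{p,\mathrm{loc}}$ solution $v$ coupled with the finite‑variation, possibly discontinuous, path‑dependent process $\RM^*$ — this is what forces the jump bookkeeping above and a localization in $x$; and (ii) the well‑posedness and internal consistency of the feedback system, in particular the identity $\RM^*_t=\sup_{0\leq s\leq t}\DD^*_s$ and the behaviour of $\swb,\minswb$ near $c=\cc$, for which the geometric free‑boundary identities \eqref{valueatboundary}, \eqref{v_c=0}, \eqref{y<x} of Propositions~\ref{prop:xi+}--\ref{prop:characterizationy} are exactly what is needed.
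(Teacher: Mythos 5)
Your proposal is correct and follows essentially the same route as the paper's proof: a two-sided verification via It\^o's formula applied to $e^{-rt}v(X_t,\RM_t)$, with the supersolution inequality \eqref{-Lv>=0} and monotonicity of $v$ in $c$ giving $v\geq V$, and the identities \eqref{lvatdecreasingpoint}, \eqref{valueatboundary}, \eqref{v_c=0} together with Proposition~\ref{prop:characterizationy} turning every inequality into an equality along the feedback triple. The only difference is organizational: the paper splits time at the stopping times $\zeta_1$ (when $\RM^*$ leaves $c$) and $\zeta_2$ (when $\RM^*$ reaches $\cc$) and treats the three pieces separately, whereas you apply \eqref{lvatdecreasingpoint} uniformly and handle the jump and continuous-variation bookkeeping in one pass, which is an equivalent (and slightly more compact) presentation of the same argument.
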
 

From an economic perspective, this result indicates that dividend payouts should be adjusted according to the surplus level. Specifically, when the surplus is relatively low ($X^*_t < \sd(\RM^*_t)$), dividends should be paid at the minimum permitted rate, $b\RM^*_t$. If the surplus falls within the intermediate range ($\sd(\RM^*_t) \leq X^*_t \leq \sw(\RM^*_t)$), dividends should be paid at the historical maximum payout rate, $\RM^*_t$. Finally, when the surplus is sufficiently high ($X^*_t > \sw(\RM^*_t)$), dividends should exceed the historical maximum payout rate, $\RM^*_t$, thereby establishing a new historical maximum payout rate $\swb\Big(\max\limits_{s\in[0,t]}X^*_s,c\Big)$.


\section{The no-ceiling model: $\cc=+\infty$} \label{sec:no ceiling}

In this section, we investigate the no-ceiling model, where the payout rate may be arbitrarily large, i.e., $\cc = \infty$. 
We first derive the value function for $\cc = +\infty$ by taking the limit $\cc \to +\infty$. However, many key estimates from the ceiling model depend implicitly on $\cc$, preventing a direct extension of some proofs and conclusions to the no-ceiling case. Consequently, novel arguments are required to address these challenges. All proofs for the results in this section are deferred to \appref{proof:no ceiling}.

We denote by $v^{\cc}(x, c)$ the value function with the finite ceiling $\cc$ (that is \eqref{value}) on $c$ and by $v^{\infty}$ the no-ceiling problem.
Starting from the definition of the value function, it can be directly shown (see Property 3.3 and Property 3.5 in \cite{albrecher2023optimal}) that
$v^{\cc}(x, c)$ is non-decreasing w.r.t $\cc$ and converges to $v^{\infty}(x, c)$, meaning:
\begin{align}\label{inftyvv_def}
v^{\infty}(x, c)=\lim_{\cc \to +\infty}v^{\cc}(x, c),~~(x,c)\in \bQ^{\infty}.
\end{align}
and the following estimates hold 
\begin{gather}
\label{inftyvv}
x\leq v^{\infty}(x,c)\leq x+\mu/r,\\
\label{inftyvv_limc} \lim\limits_{c \to +\infty}v^{\infty}(x, c) = x.
\end{gather}

\subsection{The value function and HJB equation} \label{sec:vvalue}
Now, we introduce the variational inequality problem for the no-ceiling model:
\begin{align}\label{inftyvv_pb}
\begin{cases}
	\min\{-\LL v^{\infty}-c \TT v^{\infty} , \; -v^{\infty}_c\}=0, & \hbox{in }\; \bQ^{\infty}:=\R^+\times [0,+\infty),\\[3mm]
	v^{\infty}(0,c)=0, & c\in [0,+\infty),\\[3mm]
	\lim\limits_{c\to+\infty}v^{\infty}(x,c)=x, & x\geq 0,
\end{cases}
\end{align}
Because we cannot prove the no-ceiling value function satisfies $v^{\infty}(\cdot,c)\in W^2_{p,\rm loc}(\R^+)$ for each $c>0$, we need to relax the constraints on the solution and define its solution as in \defref{def:solution2}.

Denote 
\begin{align*}
\A^{\infty}_0=\Big\{v:\bQ^{\infty}\mapsto \R \;\Big|\; &v\in C(\bQ^{\infty}), ~\mbox{$v$ is non-increasing w.r.t. $c$,}
\; \\
&v(\cdot,c)\in C^1 [0,+\infty)\cap L^\infty [0,+\infty) ~\mbox{for each}~ c\in [0,+\infty) \Big\}.
\end{align*}

\begin{definition}[Weak solution]\label{def:solution2}
We call $v^{\infty}$ is a weak solution to \eqref{inftyvv_pb} if the follows hold:
\begin{enumerate}
	\item $v^{\infty}\in \A^{\infty}_0$;
	\item $v^{\infty}$ satisfies the boundary conditions for $x=0$ and the growth condition for $c\to +\infty$ in \eqref{inftyvv_pb};
	\item for each $c\in[0,+\infty)$,
	\begin{align}\label{infty-Lv>=0}
		-\LL v^{\infty}(\cdot,c)-c \TT v^{\infty}(\cdot,c)\geq 0 ~\hbox{weakly in}~ \R^+;
	\end{align}
	\item
	if there is some $(x,c)\in \bQ^{\infty}$ such that $v^{\infty}(x,c)>v^{\infty}(x,s)$ holds for all $s\in(c,+\infty)$, then \begin{align}\label{infty-Lv=0}
		-\LL v^{\infty}(y,c)-c \TT v^{\infty}(y,c)=0~\hbox{for all}~y\in (0,x).
	\end{align}
\end{enumerate}
\end{definition}

Similar to the ceiling model, we have the following result on the no-ceiling value function $v^{\infty}$. 

\begin{theorem}\label{inftythm:uno-ceiling}
The function $v^{\infty}$ defined in \eqref{inftyvv_def} is the unique weak solution, in the sense of \defref{def:solution2}, to the PDE \eqref{inftyvv_pb}.
Furthermore, we have $ v^{\infty}_{x}$ is continuous in $\bQ^{\infty}$ and $v^{\infty}$ is Lipschitz continuous w.r.t. $c$, also, there is a constant $K>0$ such that
\begin{gather}
	0\leq -v^{\infty}_{c}\leq K ~\mbox{a.e.},\label{inftyvc}\\
	\label{inftyvv}
	x\leq v^{\infty}(x,c)\leq x+\mu/r,\\
	\label{inftyvv_limc} \lim\limits_{c \to +\infty}v^{\infty}(x, c) = x,\\
	\label{inftyvx} 0\leq v^{\infty}_x\leq K,
\end{gather}
and for any $(x,c)\in \bQ^{\infty}$, 
\begin{align}\label{inftyvxx}
	v^{\infty}_x(y,c)\leq \max\{v^{\infty}_x(x,c), 1\},~~ \forall\; y\geq x.
\end{align} 
\end{theorem} 

Accordingly, we can define the {\bf switching region}:
\begin{align*}
\SS^{\infty}=\Big\{(x,c)\in \bQ^{\infty} \;\Big|\; v^{\infty}(x,c)=v^{\infty}(x,s) \text{ for some $s\in(c,+\infty)$} \Big\},
\end{align*}
the {\bf non-switching region}:
\begin{align*}
\NS^{\infty}=\Big\{(x,c)\in \bQ^{\infty} \;\Big|\; v^{\infty}(x,c)>v^{\infty}(x,s) \text{ for all $s\in(c,+\infty)$}\Big\},
\end{align*}
the {\bf equivalent maximum rate}:
\begin{align*}
\swb^{\infty}(x,c)=\max\Big\{s\in[0,+\infty)\;\Big|\; v^{\infty}(x,s)=v^{\infty}(x,c)\Big\}\in[c,+\infty),~~(x,c)\in \bQ^{\infty},
\end{align*}
{\bf equivalent minimum rate}:
\begin{align*}
\minswb^{\infty}(x,c)=\min\Big\{s\in[0,+\infty)\;\Big|\; v^{\infty}(x,s)=v^{\infty}(x,c)\Big\}\in[0,c],~~(x,c)\in \bQ^{\infty},
\end{align*}
the {\bf switching boundary}:
\begin{equation}\label{inftysswitching_boundary}
\sw^{\infty}(c)=\inf\Big\{x\in \R^+\;\Big|\; (x,c)\in \SS^{\infty}\Big\},~~ c\in[0,+\infty),
\end{equation}
and the {\bf converting boundary}:
\begin{align*}
\sd^{\infty}(c)=\inf\Big\{x\in\R^{+}\;\Big|\; v_x(x,c)\leq 1\Big\},~~ c\in[0,+\infty).
\end{align*}

We have the following characterization. 
\begin{proposition}\label{inftyprop:characterizationx-noceiling}
The switching boundary $\sw^{\infty}(\cdot)$ is positive, bounded and continuous on $[0,+\infty)$, and satisfies the estimate 
\begin{align}\label{inftyxl_ub}
	\limsup\limits_{c\to +\infty}\sw^{\infty}(c)\leq \frac{\mu}{r b} \left(1 + \sqrt{1 + 2b - 2b^2}\right).
\end{align}
Also, if $(x,c)\in \SS^{\infty}$, then $(y,c)\in \SS^{\infty}$ for any $y\geq x$. As a consequence,
the switching and non-switching regions $\SS^{\infty}$ and $\NS^{\infty}$ are separated by $\sw^{\infty}(\cdot)$:
\begin{gather*}
	\Big\{(x,c)\in \bQ^{\infty} \;\Big|\; x> \sw^{\infty}(c)\Big\}\subseteq\SS^{\infty}
	\subseteq\Big\{(x,c)\in \bQ^{\infty} \;\Big|\; x\geq \sw^{\infty}(c)\Big\},\\
	\Big\{(x,c)\in \bQ^{\infty} \;\Big|\; x< \sw^{\infty}(c)\Big\}\subseteq\NS^{\infty}
	\subseteq\Big\{(x,c)\in \bQ^{\infty} \;\Big|\; x\leq \sw^{\infty}(c)\Big\}. 
\end{gather*}
The converting boundary $\sd^{\infty}(\cdot)$ is continuous on $[0, +\infty)$, and satisfies
\begin{align}\label{inftyy<xnoceiling}
	0<\sd^{\infty}(c)<\sw^{\infty}(c),~~ c\in[0,+\infty),
\end{align}
and 
\begin{align}\label{inftyyl}
	\lim\limits_{c\to +\infty}\sd^{\infty}(c)=\frac{\mu}{r}.
\end{align}
\end{proposition} 

Similar to \thmref{thm:averi}, we can get the optimal control for the no-ceiling model. Due to page limit, we leave this to the interested readers.

\section{Numerical study} \label{sec:numerical}
In this section, we present the numerical results of our proposed model. The benchmark parameters are set as follows: $\mu=0.3$, $\bar{c}=0.3$, $\sigma=0.3$, $r=0.05$ and $b=0.5$. The subsequent analysis demonstrates the model's behavior when each parameter is varied individually. The numerical solution is obtained by solving the variational inequalities associated with the approximating regime-switching ODE systems (see Appendix \ref{sec:approximation}) through the standard penalty method.

To validate our approach, we conduct a comprehensive comparison with established results in the literature on optimal dividend payout problems under various constraints. Figure \ref{fig:b=0 and 1} illustrates the model predictions for two limiting cases.

First, we consider the case where $b=0$, corresponding to the absence of constraints on dividend payments. For this scenario, we benchmark our results against the classical analytical solution derived by \cite{asmussen1997controlled} and \cite{gerber2004optimal}. These works establish that the optimal threshold below which no dividends are paid is given by
\begin{equation}\label{analytical_threshold_GS04}
	x^{*}
	= \frac{1}{\theta_{1}(0) - \theta_{2}(0)}
	\log \frac{\theta_{2}(0)\bigl(\theta_{2}(0) - \theta_{2}(\bar{c})\bigr)}
	{\theta_{1}(0)\bigl(\theta_{1}(0) - \theta_{2}(\bar{c})\bigr)}.
\end{equation}

Second, we examine the case where $b=1$, which represents the optimal dividend strategy under ratcheting constraints. Here, we compare our results with the numerical solution obtained from \cite{albrecher2022optimal}, where the governing ODE is solved using an explicit Euler scheme. The boundary conditions for this comparison are numerically derived based on the specific functional forms defined in their work.

It is important to note the distinct characteristics of each case: when $b=0$, the absence of constraints on adjusting the running maximum of historical dividend payments necessitates reporting only the converting boundary $\sd(\cdot)$. Conversely, when $b=1$, the ratcheting constraint permits only increases or maintenance of the running maximum payout rate, requiring us to report solely the switching boundary $\sw(\cdot)$. The excellent agreement between our numerical results and the existing solutions in both limiting cases validates the accuracy and reliability of our computational approach.

\begin{figure}[H]
	\centering
	\begin{subfigure}[b]{0.48\linewidth}
		\centering
		\includegraphics[width=\linewidth]{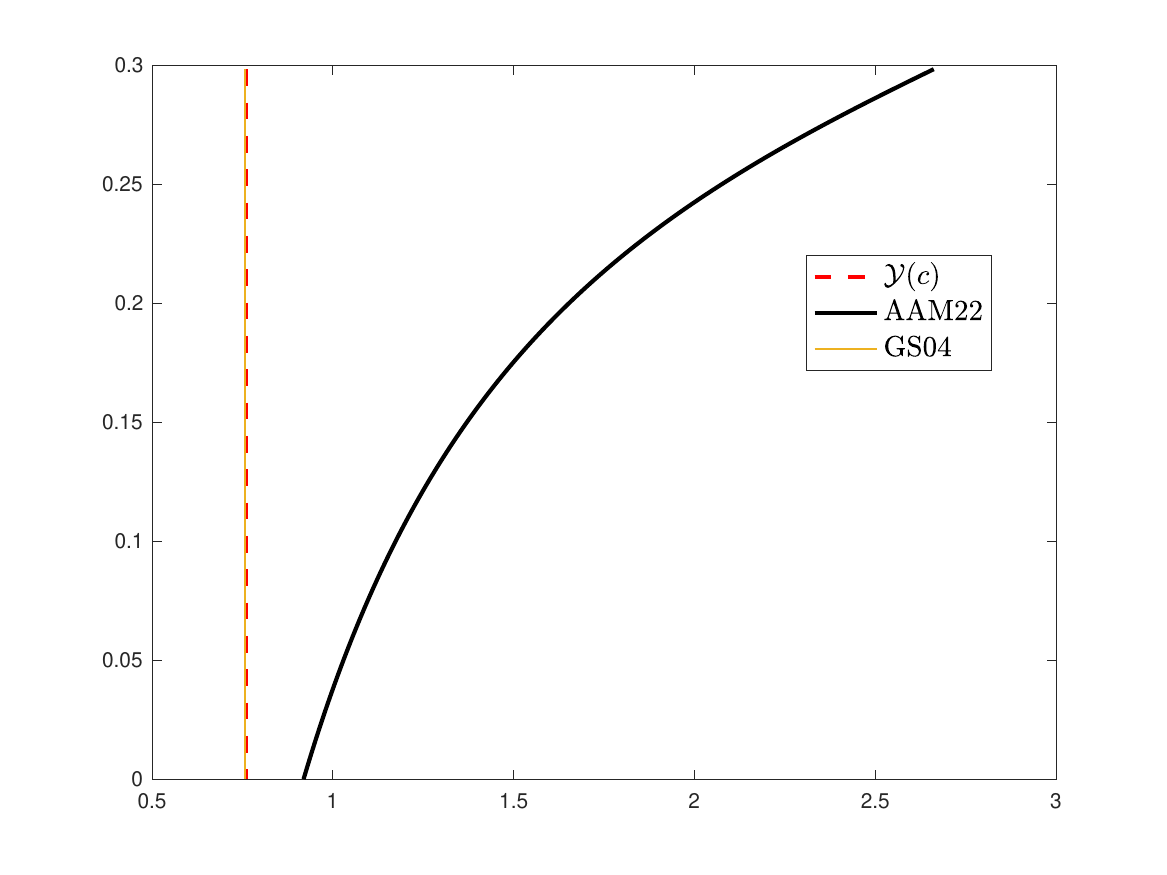}
		\caption{$b=0$, orange line is the case for no constraint collected from \eqref{analytical_threshold_GS04}, and the red curve is the numerical solution from AAM22 with ratcheting ($b=1$).The red dashed line is the output of the PDE with $b=0$,}
		\label{fig:b00}
	\end{subfigure}
	\hfill
	\begin{subfigure}[b]{0.48\linewidth}
		\centering
		\includegraphics[width=\linewidth]{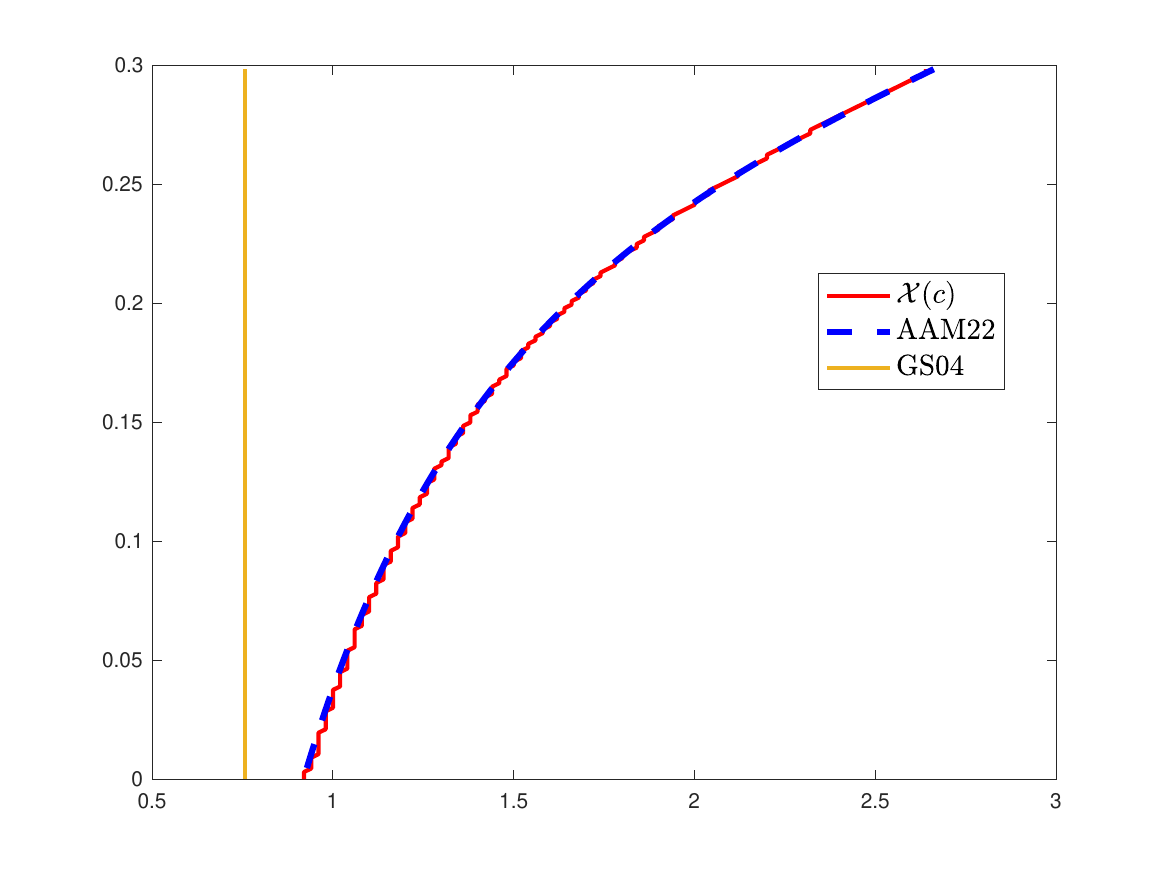}
		\caption{$b=0$, orange line is the case for no constraint collected from \eqref{analytical_threshold_GS04}, and the blue dashed line is the numerical solution from AAM22 with ratcheting ($b=1$).The red line is the output of the PDE with $b=1$.}
		\label{fig:b1}
	\end{subfigure}
	\caption{Comparison of numerical solutions and cases with no constraints for $b=0$ and $b=1$.}
	\label{fig:b=0 and 1}
\end{figure}

\subsection{Optimal dividend payout strategies}
Having validated our numerical approach against established benchmarks, we now turn to a detailed characterization of the optimal dividend payout strategies determined by the switching boundary $\sw(\cdot)$ and the converting boundary $\sd(\cdot)$. Figure~\ref{fig:phase_diagram} provides a graphical illustration of how these boundaries govern the evolution of dividend policies across different regions of the state space.

\begin{figure}[H]
	\centering
	\begin{subfigure}[b]{0.48\linewidth}
		\centering
		\includegraphics[width=\linewidth]{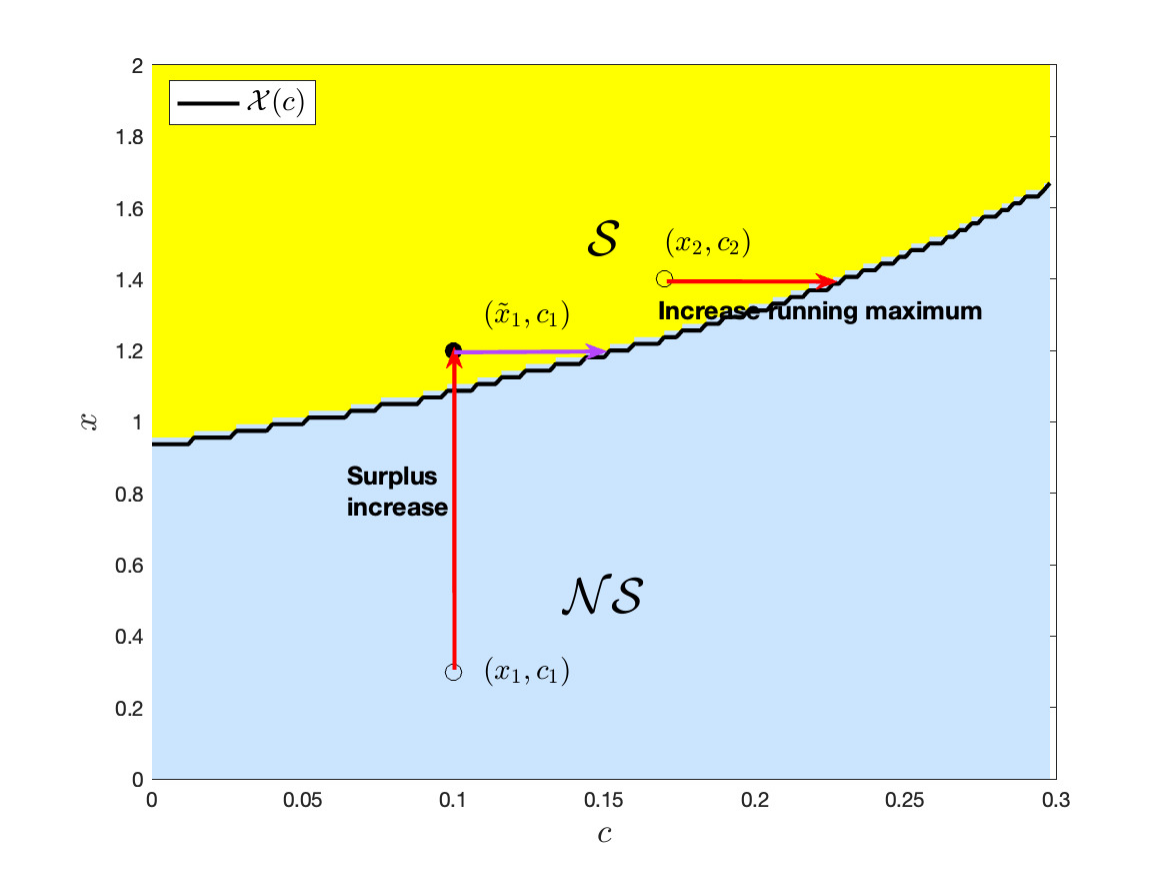}
		\caption{Switching boundary $\sw(\cdot)$.}
		\label{fig:switching_boundary}
	\end{subfigure}
	\hfill
	\begin{subfigure}[b]{0.48\linewidth}
		\centering
		\includegraphics[width=\linewidth]{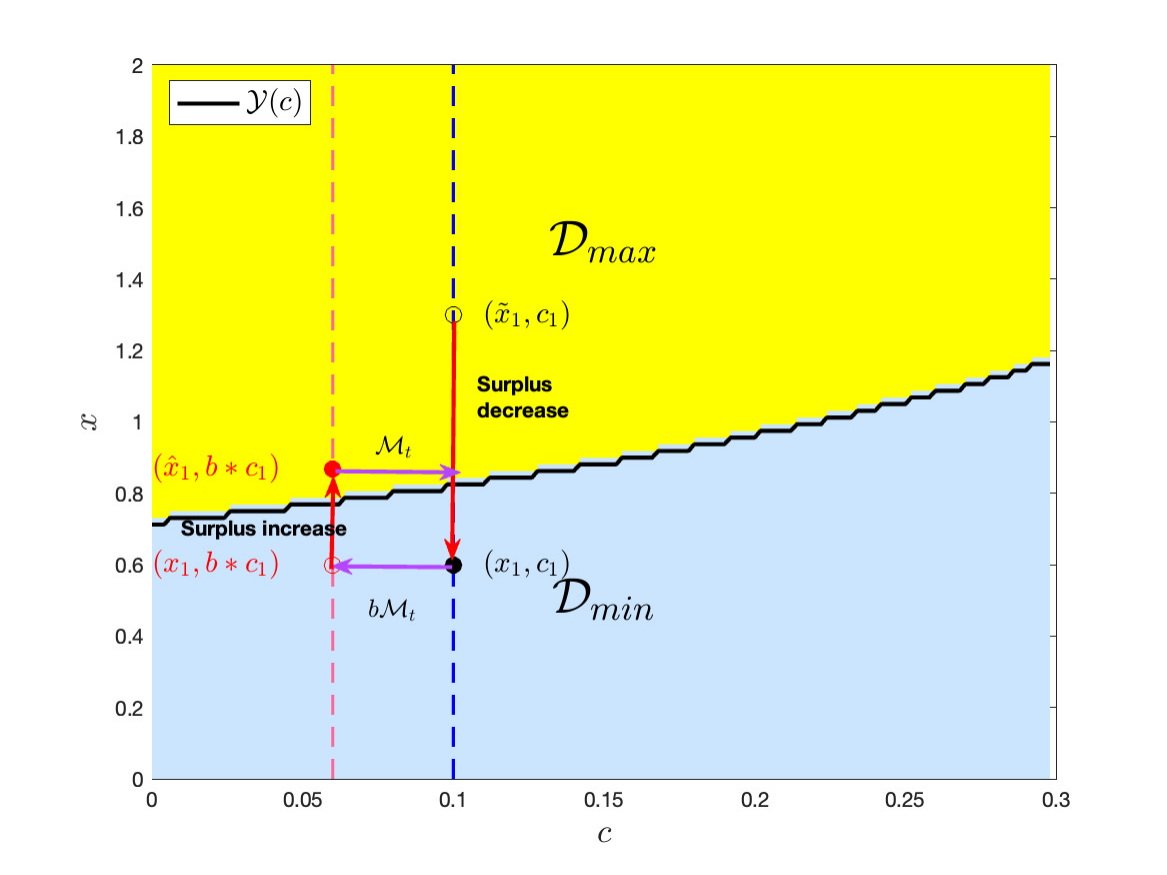}
		\caption{Converting boundary $\sd(\cdot)$.}
		\label{fig:converting_boundary}
	\end{subfigure}
	\caption{Illustration of optimal trajectories for different starting points $(x,c)$ across regions. The arrows indicate the evolution paths from initial states in different domains. Parameters: $\mu=0.3$, $b=0.6$, $r=0.05$, $\sigma=0.3$, $\bar{c}=0.3$.}
	\label{fig:phase_diagram}
\end{figure}

Figure~\ref{fig:phase_diagram} illustrates the optimal dividend payment strategies and their adjustments as the surplus level evolves across different regions. In both panels, hollow circles represent the initial state $(x,c)$, while solid circles indicate the state after surplus changes.

Panel~\ref{fig:switching_boundary} depicts the dynamics near the switching boundary $\sw(\cdot)$. For an $(x,c)$ pair initially located in the non-switching region $\mathcal{NS}$, the surplus increases vertically and approaches the switching boundary $\sw(\cdot)$. At the instant the surplus crosses this boundary upward, the running maximum dividend payment rate $\mathfrak{M}(x,c)$ is immediately increased to the level that touches the boundary, and the instantaneous dividend rate adjusts to this new running maximum level. Conversely, for an $(x,c)$ pair starting in the switching region $\mathcal{S}$, increasing the running maximum does not deteriorate the value function. Therefore, the dividend payment rate increases horizontally until it reaches the switching boundary.

Panel~\ref{fig:converting_boundary} demonstrates a complete cycle of cross-regional evolution near the converting boundary $\sd(\cdot)$. Starting from an $(x,c)$ pair with relatively high surplus in the $\mathcal{D}_{\max}$ region, as the surplus decreases, it moves vertically downward toward the converting boundary. At the instant the surplus crosses $\sd(\cdot)$ from above, the dividend payment rate is immediately reduced to the minimum permissible level $b\mathfrak{M}_t$. Subsequently, as the surplus recovers and increases, it moves vertically upward toward the converting boundary again. Once the surplus crosses $\sd(\cdot)$ from below, the dividend payment rate instantaneously jumps back to the running maximum level $\mathfrak{M}_t$.

These trajectories illustrate the bang-bang nature of the optimal dividend policy, where adjustments occur instantaneously at the boundaries, and the evolution between boundaries follows paths driven by surplus dynamics.

\begin{figure}[H]
	\centering
	
	\begin{subfigure}[b]{0.48\textwidth}
		\centering
		\includegraphics[width=\linewidth]{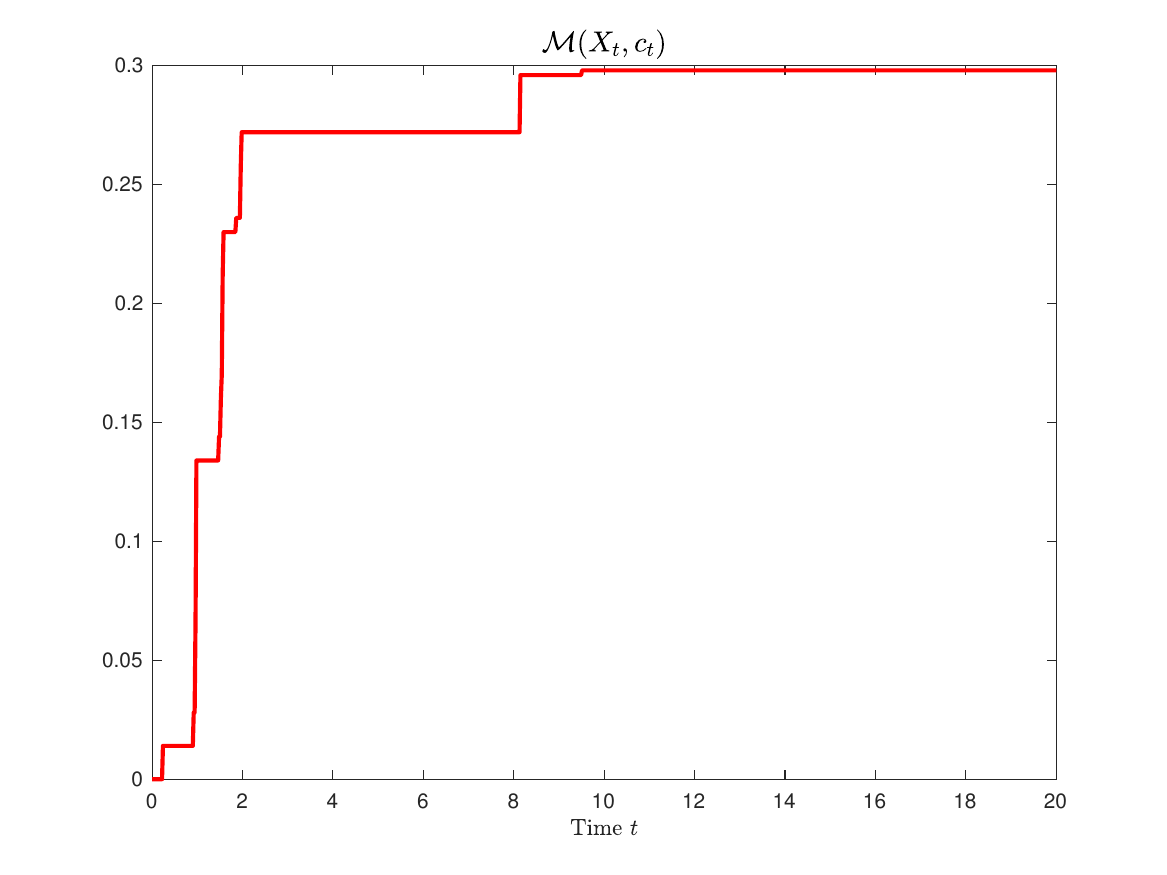}
		\caption{Evolution of running maximum $\mathfrak{M}(X_t,\mathcal{C}_t)$ over time.}
		\label{fig:running_max_path}
	\end{subfigure}~~~
	\begin{subfigure}[b]{0.48\textwidth}
		\centering
		\includegraphics[width=\linewidth]{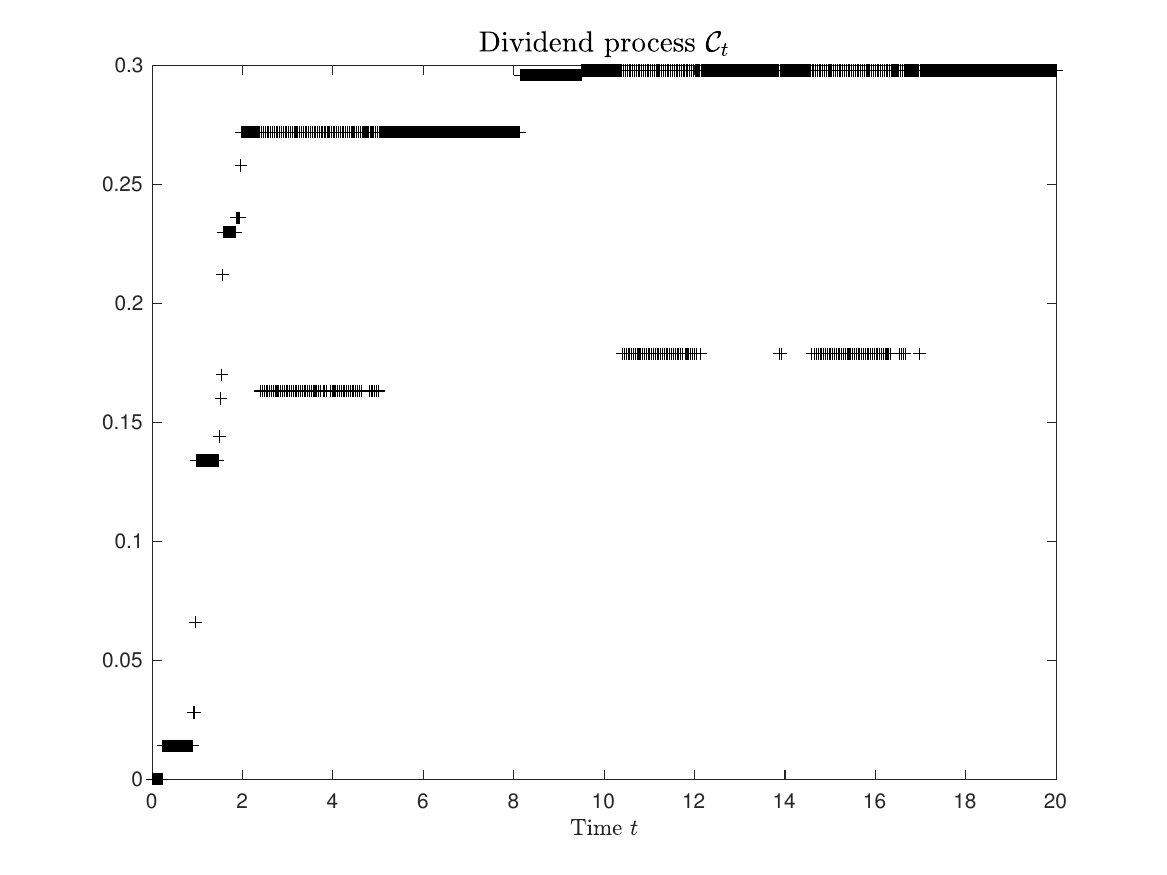}
		\caption{Evolution of instantaneous dividend payment rate over time.}
		\label{fig:dividend_rate_path}
	\end{subfigure}
	
	\par\bigskip
	
	\begin{subfigure}[b]{0.65\textwidth}
		\centering
		\includegraphics[width=\linewidth]{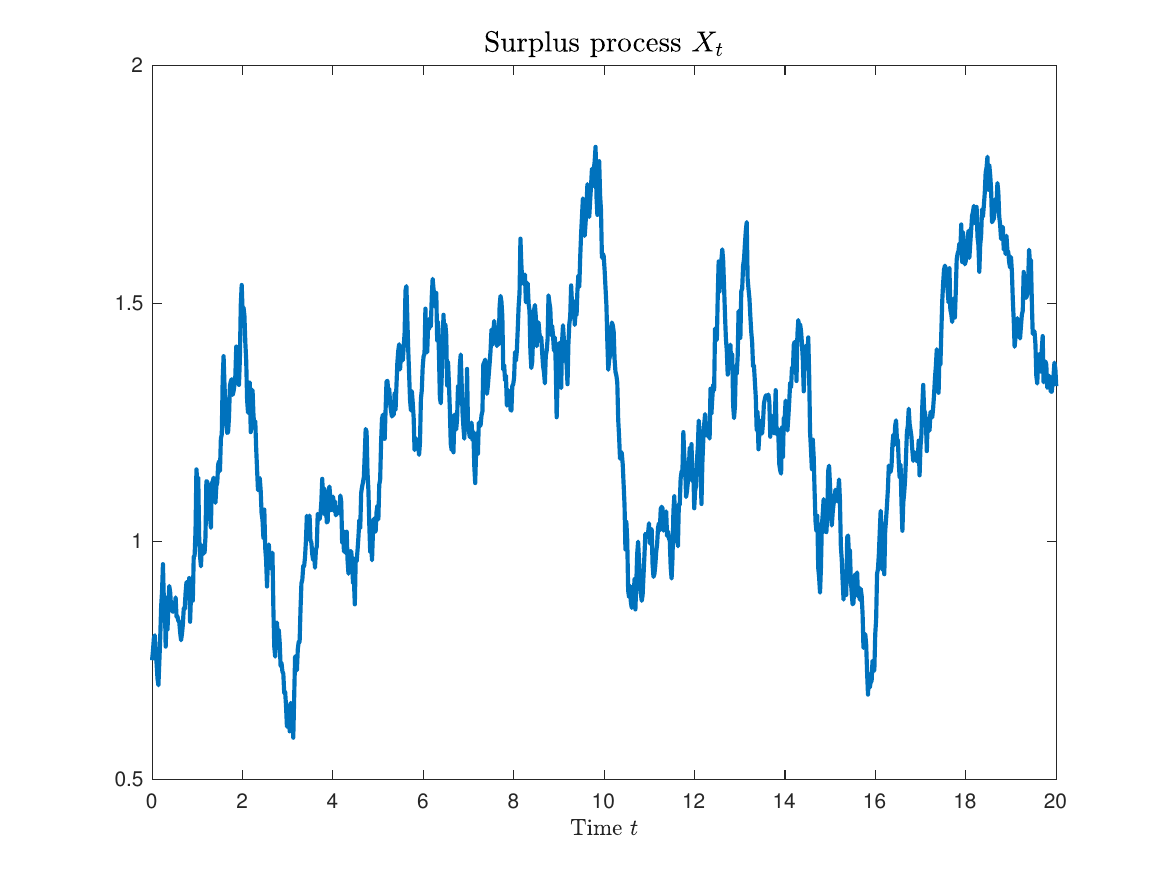}
		\caption{Evolution of surplus level $X_t$ over time.}
		\label{fig:surplus_path}
	\end{subfigure}
	
	\caption{Simulated sample paths under the optimal dividend payment policy determined by $\mathfrak{M}_t$. Panel \ref{fig:running_max_path} shows the evolution of the running maximum $\mathfrak{M}(X_t,\mathcal{C}_t)$, Panel \ref{fig:dividend_rate_path} displays the corresponding instantaneous dividend payment rate, and Panel \ref{fig:surplus_path} illustrates the dynamics of the surplus level. Parameters: $\mu=0.3$, $b=0.6$, $r=0.05$, $\sigma=0.3$, $\bar{c}=0.3$.}
	\label{fig:simulation_paths}
\end{figure}

Using the same parameter configuration as in Figure~\ref{fig:phase_diagram}, we conduct a simulation to illustrate the dynamic evolution of the optimal dividend policy. Figure~\ref{fig:simulation_paths} presents sample paths that demonstrate how the running maximum $\mathfrak{M}_t$ evolves with the surplus dynamics and how the dividend payment rate responds to these changes. Panel~\ref{fig:running_max_path} shows that $\mathfrak{M}(X_t,\mathcal{C}_t)$ increases progressively as the surplus fluctuates, reflecting the adjustments at the switching boundary $\sw(\cdot)$ characterized in Figure~\ref{fig:phase_diagram}. Panel~\ref{fig:dividend_rate_path} illustrates the corresponding instantaneous dividend payment rate, which exhibits jump discontinuities when the surplus crosses either the switching or converting boundaries. The bang-bang nature of the policy is evident as the dividend rate alternates between the running maximum $\mathfrak{M}_t$ and the minimum permissible level $b\mathfrak{M}_t$ in response to surplus movements across the converting boundary $\sd(\cdot)$. Panel~\ref{fig:surplus_path} displays the stochastic evolution of the surplus level $X_t$, which drives all the strategic adjustments in the dividend policy.

\subsection{Comparative statics}
The simulation results presented above demonstrate the operational mechanics of the optimal dividend policy in a single representative scenario. To provide a comprehensive understanding of how the optimal boundaries and resulting dividend strategies respond to changes in model parameters, we now conduct an extensive comparative statics analysis. The subsequent numerical experiments examine the sensitivity of both the switching boundary $\sw(\cdot)$ and the converting boundary $\sd(\cdot)$ to key parameters, with systematic comparisons to existing results in the literature.

Figure \ref{fig:Varying drawdown ratio b} presents a comprehensive comparison of the converting boundary and switching boundary with the two-curve strategy reported in \cite{albrecher2023optimal}, where the drawdown ratio $b$ is varied from 0.3 to 0.9. To provide a complete perspective on the evolution of the free boundaries, we include the two limiting cases ($b=0$ and $b=1$) in our analysis. This allows for a clear visualization of how the free boundaries transition as the drawdown ratio increases monotonically. The comparison reveals excellent agreement between our results and the numerical ODE solutions presented in \cite{albrecher2023optimal}, with no significant discrepancies observed across the entire range of $b$ values examined. This consistency further validates the robustness of our numerical approach. We can observe that increasing value of $b$ shall result bigger slopes of both the switching and converting curve, this is consistent with the financial intuition that stricter drawdown constraint should discourage the further increase on the maximum dividend payout rate. 

Additionally, Figure \ref{fig:varying_b_value} illustrates the corresponding value functions for different drawdown ratios, revealing a clear economic trade-off inherent in the drawdown constraint. Panel \ref{fig:value_b} demonstrates that as the drawdown ratio $b$ increases from 0.1 to 0.9, the value function uniformly decreases across all surplus levels. This monotonic relationship indicates that more restrictive drawdown constraints (higher $b$ values) systematically reduce the firm's ability to optimize dividend distributions, thereby diminishing shareholder value. Panel \ref{fig:delta_value_b} quantifies this value erosion by displaying the differences in value functions relative to the baseline case of $b=0.1$. The magnitude of value loss intensifies as $b$ approaches unity, with the most pronounced deterioration occurring in the intermediate surplus range. Notably, the value difference reaches its maximum negative impact around surplus levels of 2-4 units, suggesting that firms with moderate capital reserves are most affected by stringent drawdown constraints. Panel \ref{fig:delta_3d_b} provides a three-dimensional visualization that captures the joint dynamics of surplus level, dividend rate, and the resulting value differentials across various drawdown ratios, further emphasizing how tighter constraints fundamentally reshape the optimization landscape and impose substantial costs on dividend flexibility.

\begin{figure}[H]
	\centering
	\begin{subfigure}[b]{0.47\linewidth}
		\centering
		\includegraphics[width=\linewidth]{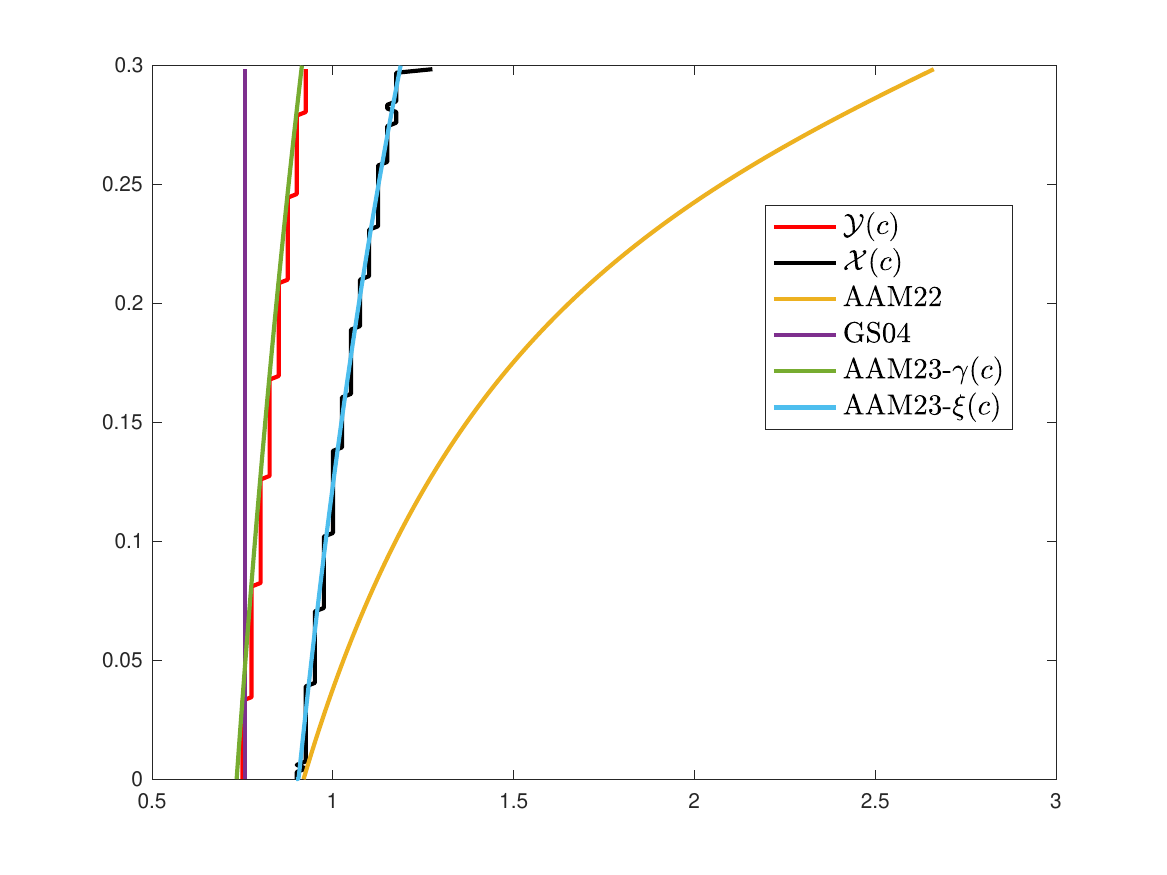}
		\caption{$b=0.3$}
		\label{fig:large11}
	\end{subfigure}~~
	\begin{subfigure}[b]{0.47\linewidth}
		\centering
		\includegraphics[width=\linewidth]{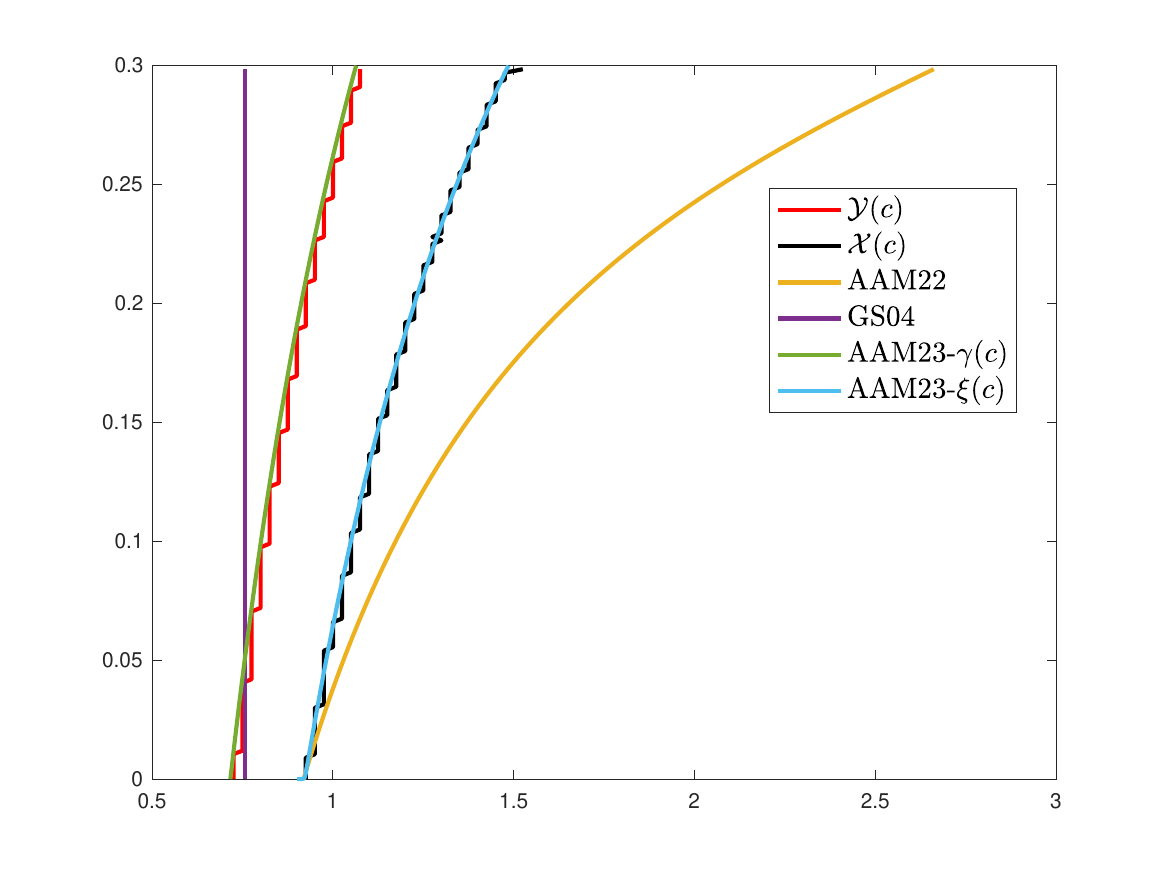}
		\caption{$b=0.5$}
		\label{fig:b1gx1}
	\end{subfigure}\\
	\begin{subfigure}[b]{0.47\linewidth}
		\centering
		\includegraphics[width=\linewidth]{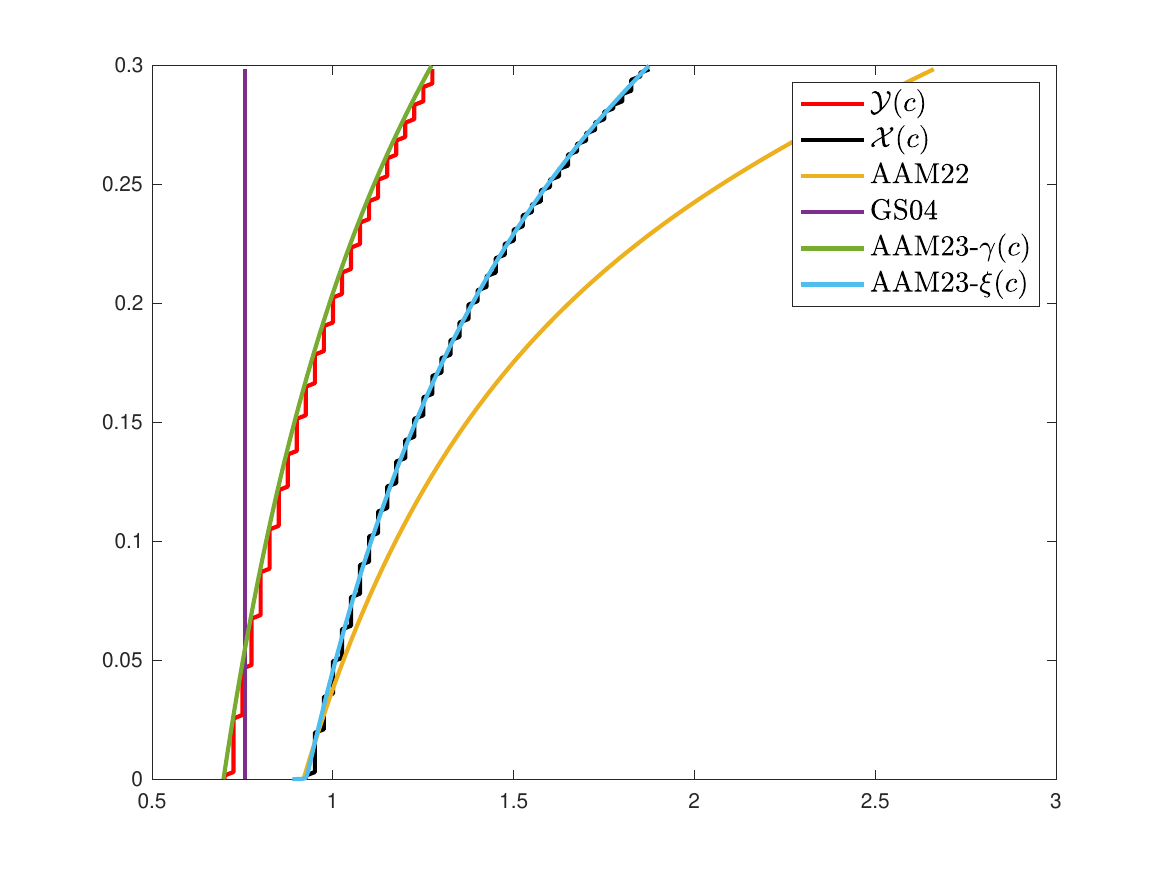}
		\caption{$b=0.7$}
		\label{fig:b0}
	\end{subfigure}~~
	\begin{subfigure}[b]{0.47\linewidth}
		\centering
		\includegraphics[width=\linewidth]{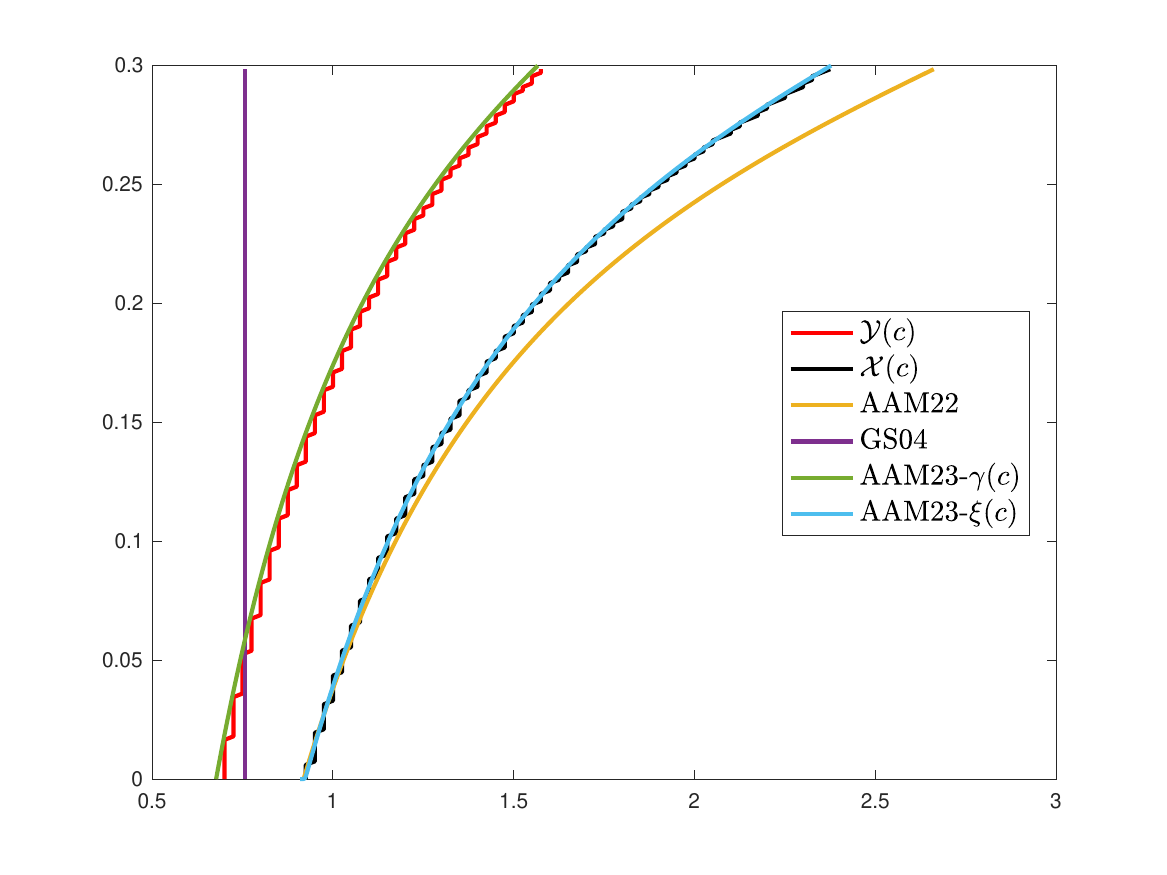}
		\caption{$b=0.9$}
		\label{fig:b02}
	\end{subfigure}
	\caption{Comparison of the optimal dividend payout strategies with different drawdown ratios. The left purple and the right orange curve corresponding to two extreme cases $b=0$ and $b=1$ derived from \eqref{analytical_threshold_GS04} and \cite{albrecher2022optimal}, the green and blue cure were solved from the two curved strategies by \cite{albrecher2023optimal}, the red and black curve were the numerical out of our PDEs.}
	\label{fig:Varying drawdown ratio b}
\end{figure}

\begin{figure}[H]
	\centering
	
	\begin{subfigure}[b]{0.48\textwidth}
		\centering
		\includegraphics[width=\linewidth]{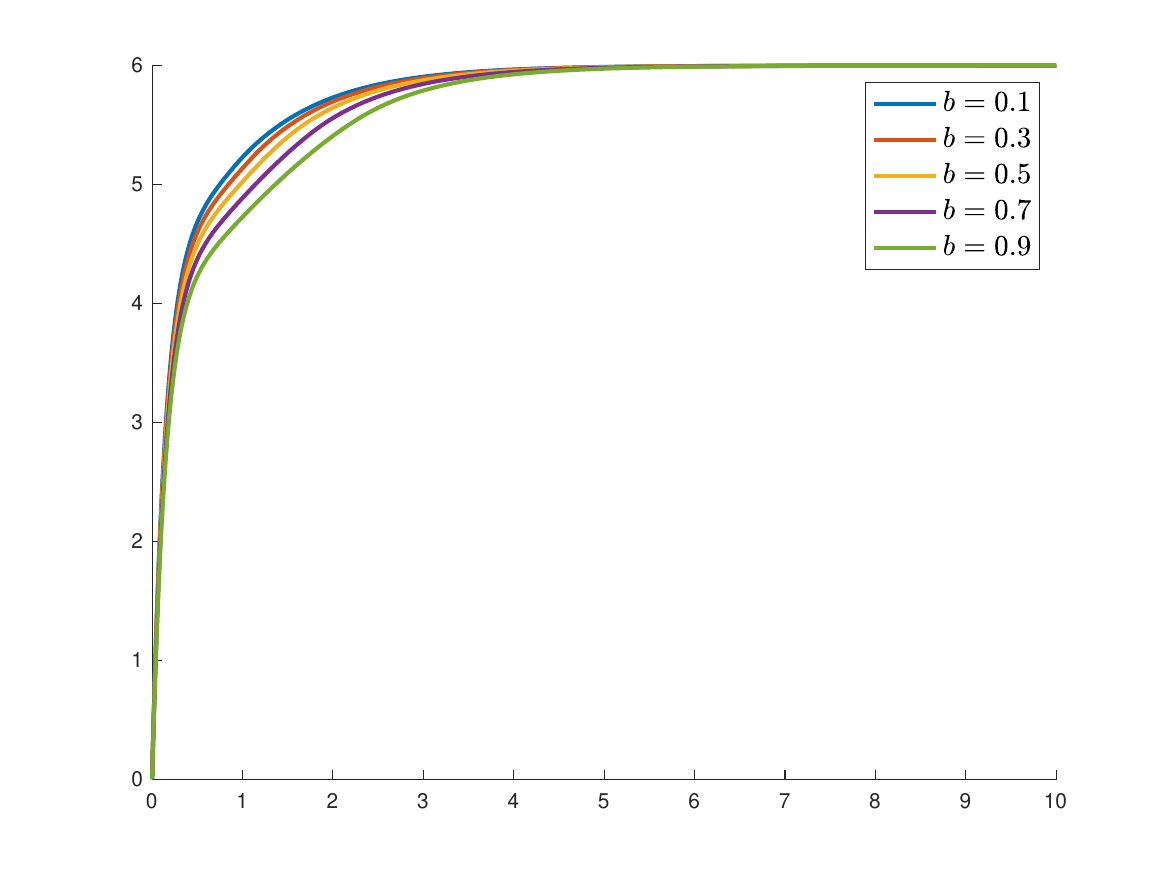}
		\caption{Values at $c=0$ across $b\in[0.1,0.9]$.}
		\label{fig:value_b}
	\end{subfigure}~~~
	\begin{subfigure}[b]{0.48\textwidth}
		\centering
		\includegraphics[width=\linewidth]{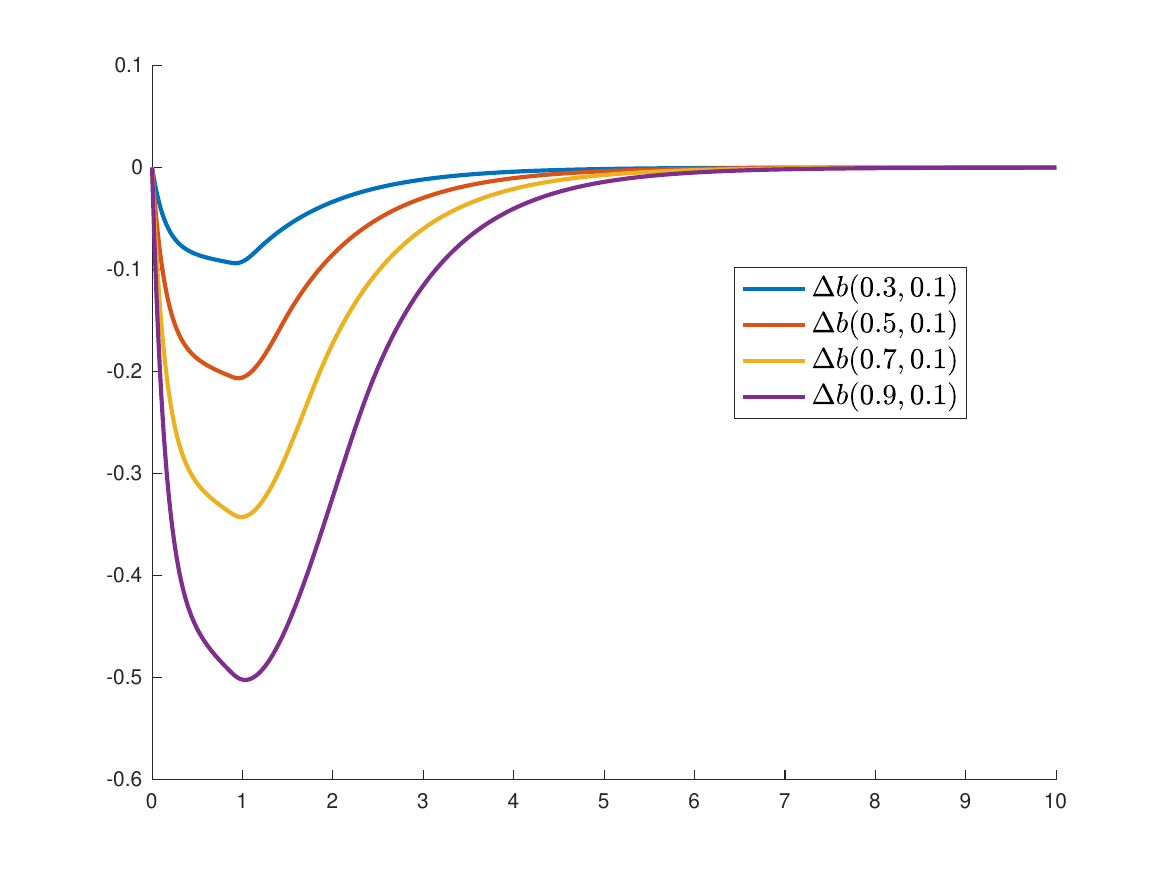}
		\caption{Value differences at $c=0$ relative to $b=0.1$.}
		\label{fig:delta_value_b}
	\end{subfigure}
	
	\par\bigskip
	
	\begin{subfigure}[b]{0.65\textwidth}
		\centering
		\includegraphics[width=\linewidth]{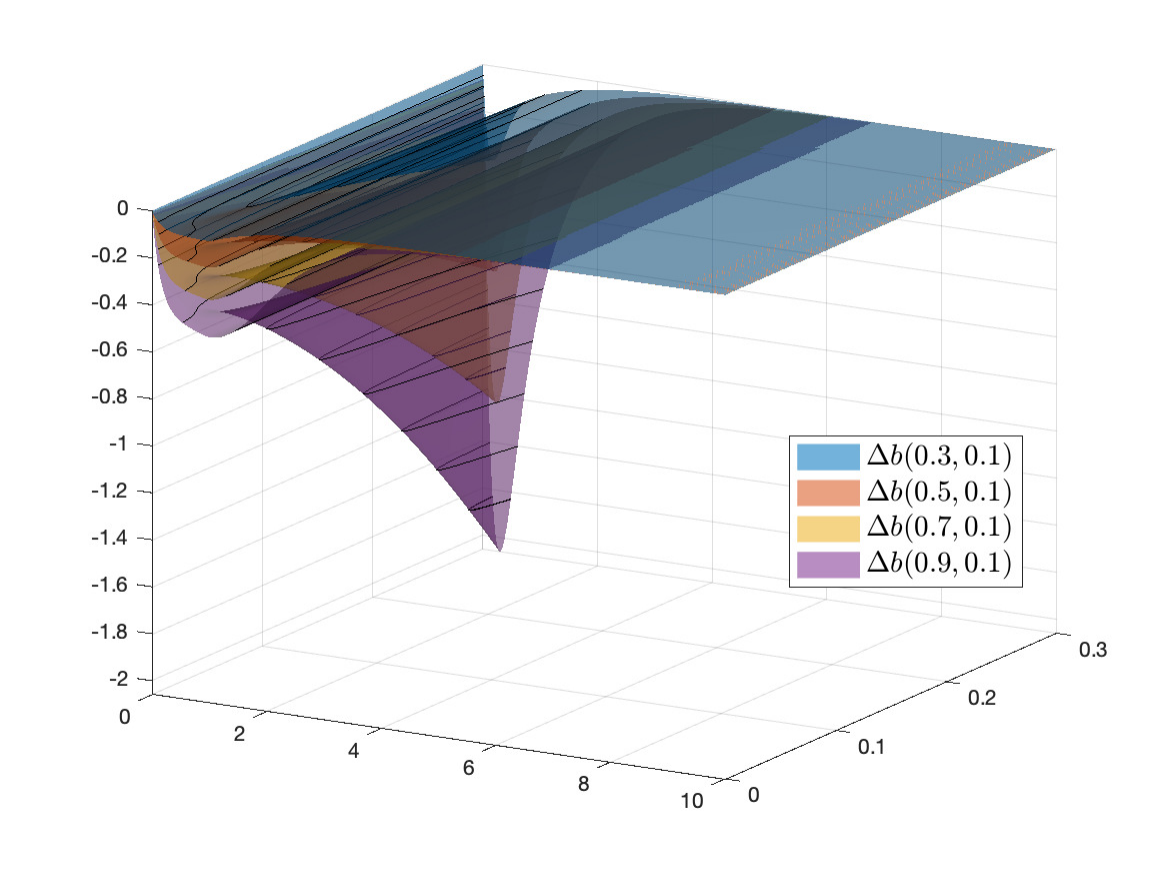}
		\caption{3D differences over $(x,c)$; $x\in[0,10]$, $c\in[0,0.3]$, measured relative to $b=0.1$.}
		\label{fig:delta_3d_b}
	\end{subfigure}
	
	\caption{Effect of $b$ on the value function. X: wealth $x$; Y: value; Z (in 3D): difference. For (a)–(b) we fix $c=0$; (b) shows differences vs. $b=0.1$; (c) shows differences over $x\in[0,10]$ and $c\in[0,0.3]$. We vary $b$ from $0.1$ to $0.9$.}
	\label{fig:varying_b_value}
\end{figure}

Figure \ref{fig:Maximum Dividend payout ration} depicts the free boundaries associated with various maximum dividend payout ratios. Consistent with the observations in \cite{albrecher2023optimal}, the parameter $\bar{c}$ significantly influences the location of both the converting boundary $\sd(\cdot)$ and the switching boundary $\sw(\cdot)$. The switching boundary $\sw(\cdot)$ determines when to increase the running maximum dividend payout ratio, while the converting boundary $\sd(\cdot)$ distinguishes between paying dividends at the minimum allowable rate (given by the drawdown ratio $b$ times the running maximum) and the maximum allowable rate (the running maximum itself). As $\bar{c}$ increases from 0.2 to 0.4, both boundaries shift rightward, requiring higher surplus levels before updating the running maximum or transitioning between minimum and maximum payout rates. This pattern reveals a key economic insight: larger values of $\bar{c}$ allow for more generous maximum dividend distributions, but the firm must accumulate greater reserves before it becomes optimal to either increase the running maximum or switch from the constrained minimum rate to the full maximum rate. This reflects the inherent tension between dividend capacity and capital adequacy under the drawdown constraint framework. 

\begin{figure}[H]
	\centering
	\begin{subfigure}[b]{0.48\textwidth}
		\centering
		\includegraphics[width=\linewidth]{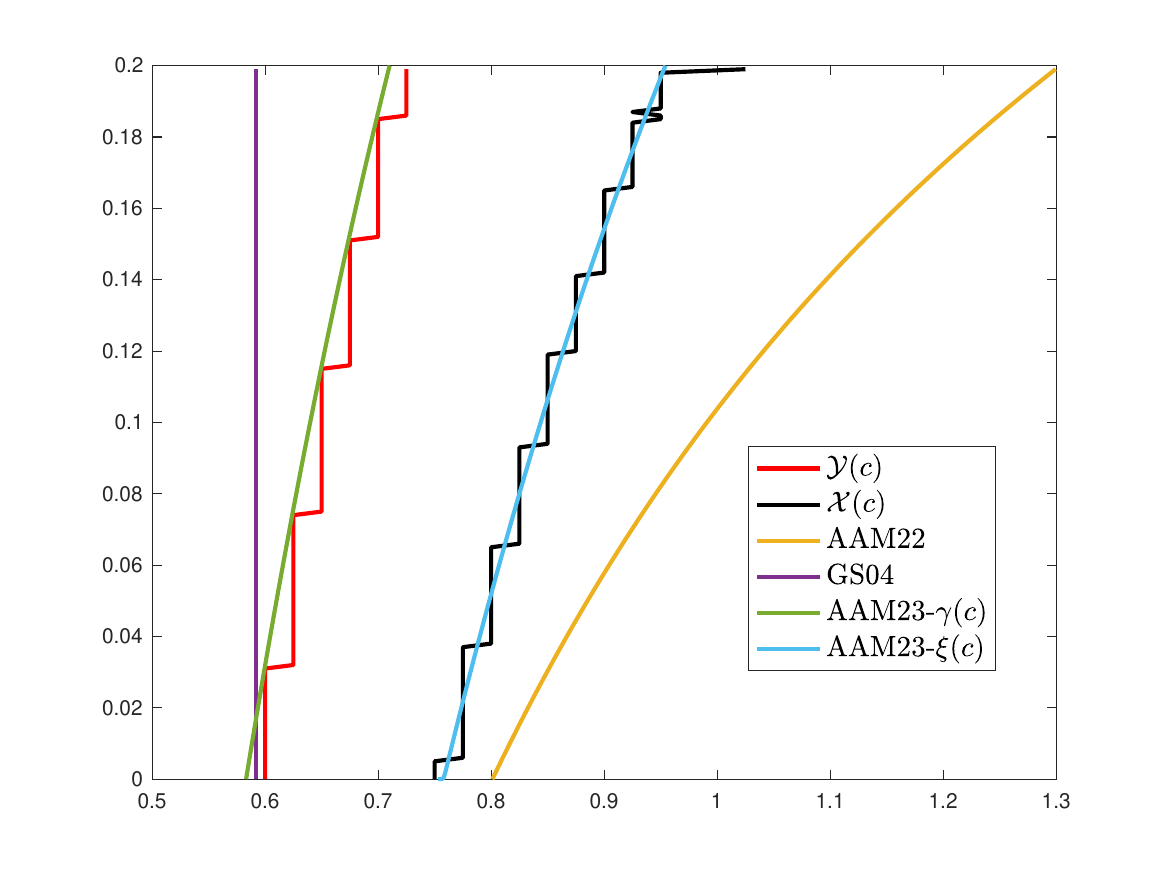}
		\caption{$\bar{c}=0.2$, }
		\label{fig:c=0.2 boundary}
	\end{subfigure}
	~~
	\begin{subfigure}[b]{0.48\textwidth}
		\centering
		\includegraphics[width=\linewidth]{b05_graph_new.pdf}
		\caption{$\bar{c}=0.3$, }
		\label{fig:c=0.3 boundary}
	\end{subfigure}
	\bigskip\\
	\begin{subfigure}[b]{0.48\textwidth}
		\centering
		\includegraphics[width=\linewidth]{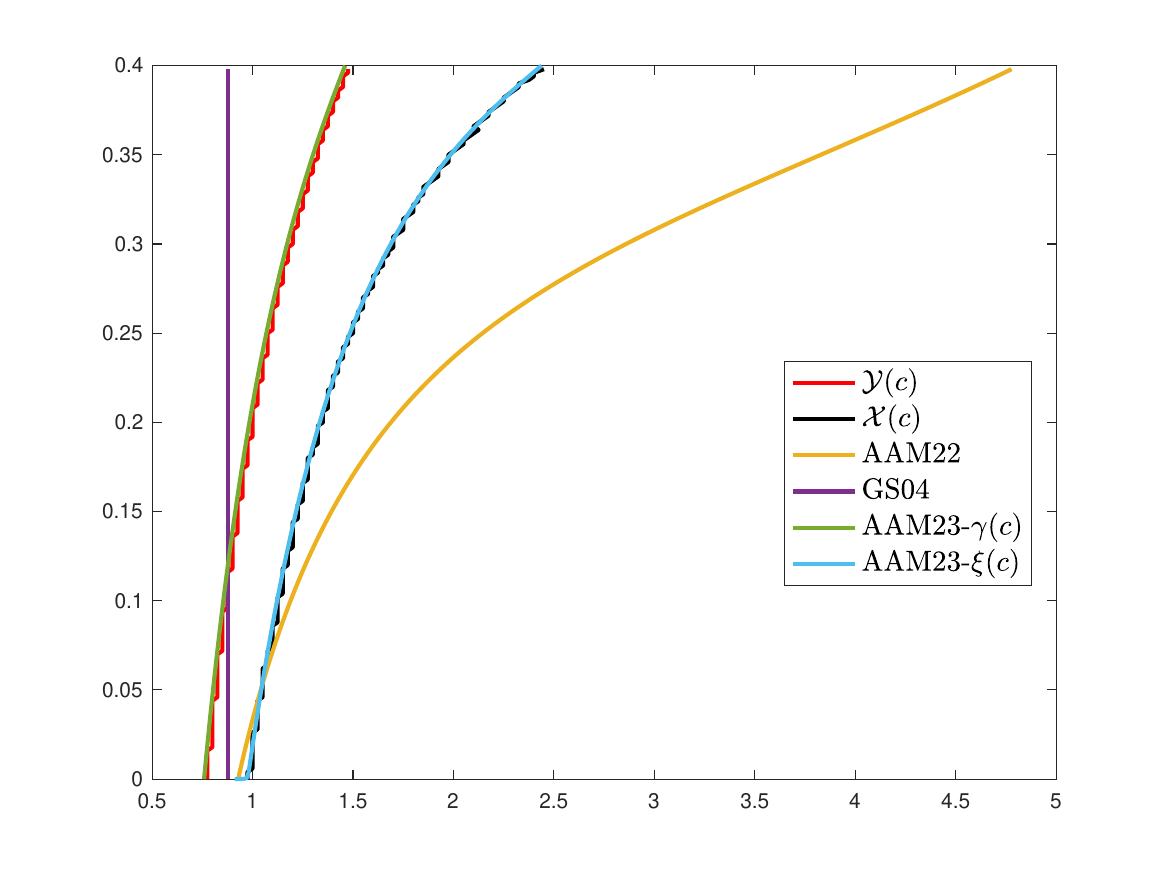}
		\caption{$\bar{c}=0.4$. }
		\label{fig: c=0.4 boundary}
	\end{subfigure}
	\caption{Comparison of the optimal dividend payout strategies with different maximum dividend payout ratio $\bar{c}$. The left purple and the right orange curve corresponding to two extreme cases $b=0$ and $b=1$ derived from \eqref{analytical_threshold_GS04} and \cite{albrecher2022optimal}, the green and blue cure were solved from the two curved strategies by \cite{albrecher2023optimal}, the red and black curve were the numerical out of our PDEs. }
	\label{fig:Maximum Dividend payout ration}
\end{figure}	

Figure \ref{fig:varying_cbar_value} quantifies the economic implications of these structural changes. Panel \ref{fig:value_cbar} demonstrates that increasing $\bar{c}$ monotonically enhances the value function across all surplus levels, with the most dramatic improvements occurring at higher surplus values. Panel \ref{fig:delta_value_cbar} reveals that the value differential relative to $\bar{c}=0.2$ becomes increasingly pronounced as surplus grows, indicating that firms with substantial capital reserves benefit disproportionately from relaxed payout constraints. The three-dimensional visualization in Panel \ref{fig:delta_3d_cbar} further illustrates this nonlinear relationship, showing how the value function's sensitivity to $\bar{c}$ intensifies at higher surplus levels. While the drawdown constraint itself may not impose severe efficiency losses for a given $\bar{c}$ (as noted in \cite{albrecher2023optimal} for $\bar{c}=3$), the magnitude of $\bar{c}$ fundamentally determines the scale of achievable dividend values, underscoring its critical role in dividend optimization under regulatory constraints.

\begin{figure}[H]
	\centering
	
	\begin{subfigure}[b]{0.48\textwidth}
		\centering
		\includegraphics[width=\linewidth]{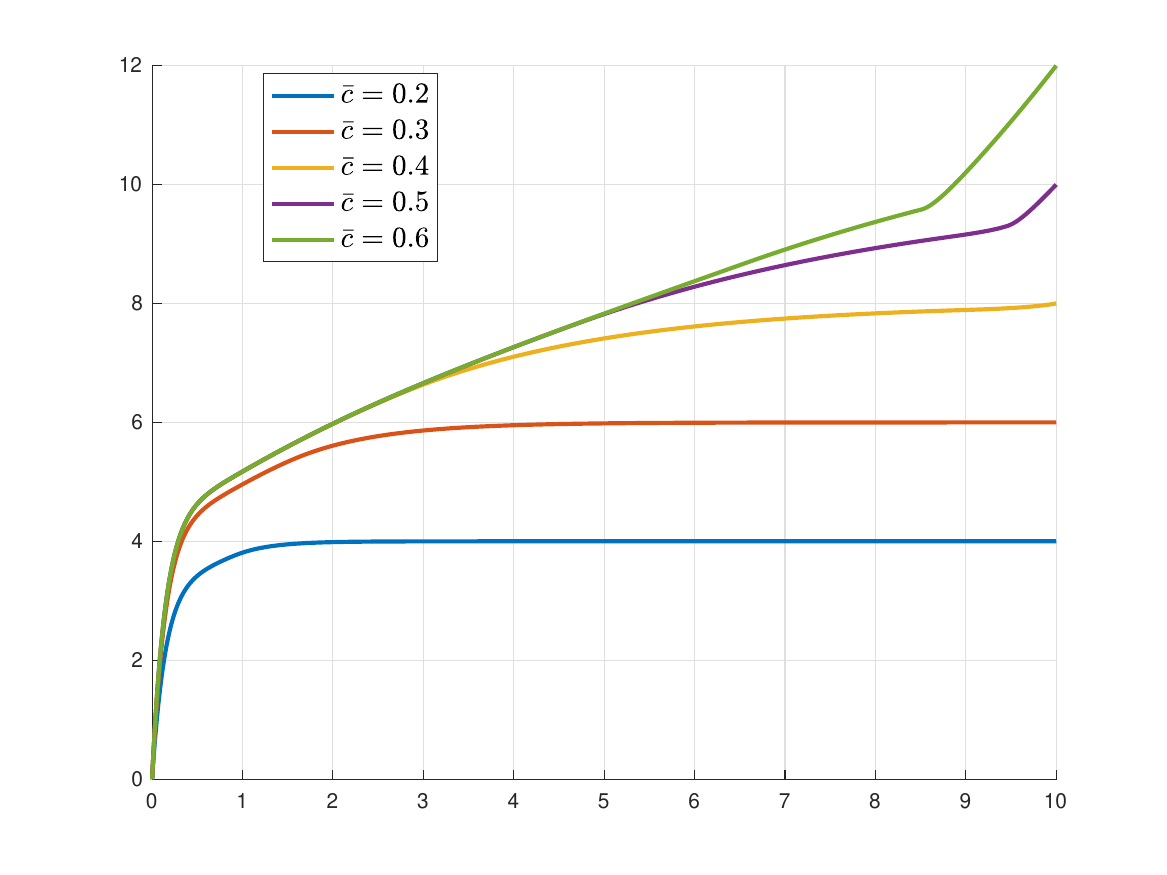}
		\caption{Values at $c=0$ across $\bar{c}\in[0.2,0.6]$,}
		\label{fig:value_cbar}
	\end{subfigure}\hfill
	\begin{subfigure}[b]{0.48\textwidth}
		\centering
		\includegraphics[width=\linewidth]{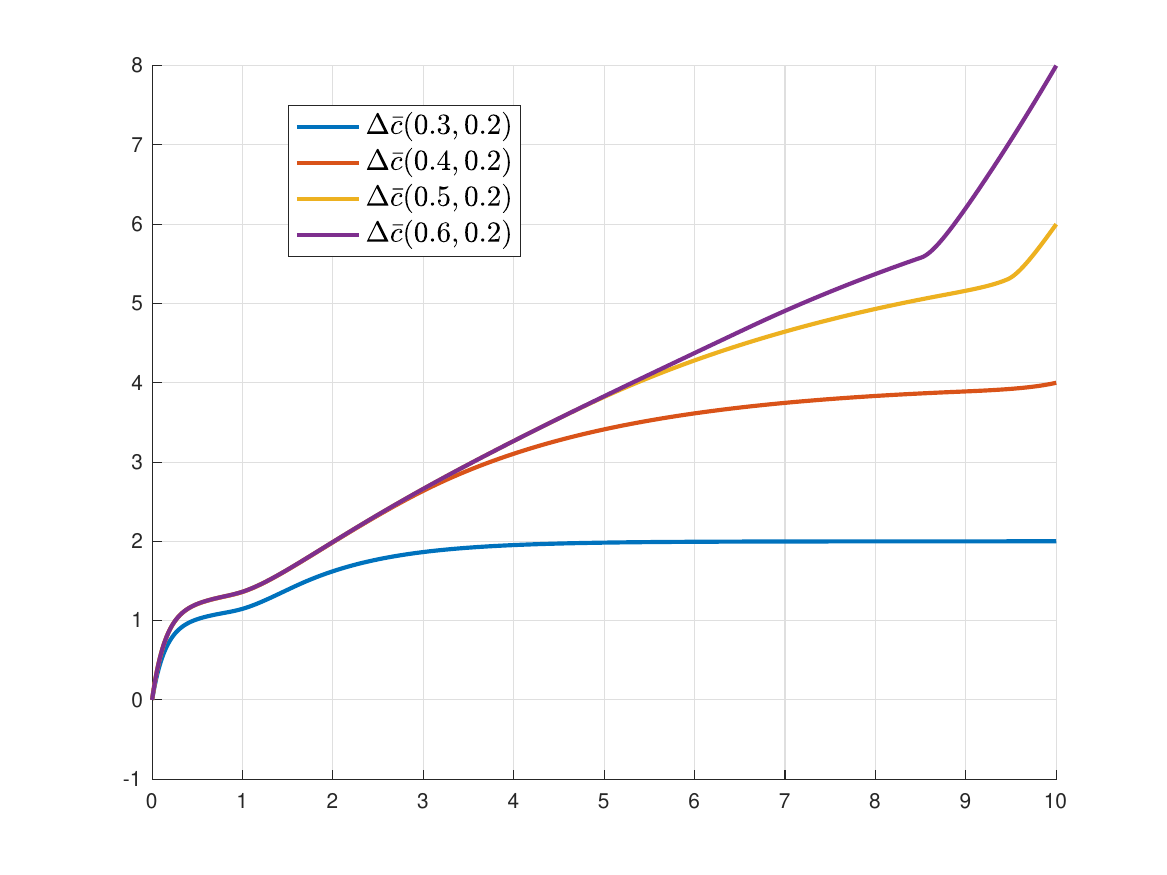}
		\caption{Value differences at $c=0$ relative to $\bar{c}=0.2$,}
		\label{fig:delta_value_cbar}
	\end{subfigure}
	
	\par\bigskip
	
	\begin{subfigure}[b]{0.72\textwidth}
		\centering
		\includegraphics[width=\linewidth]{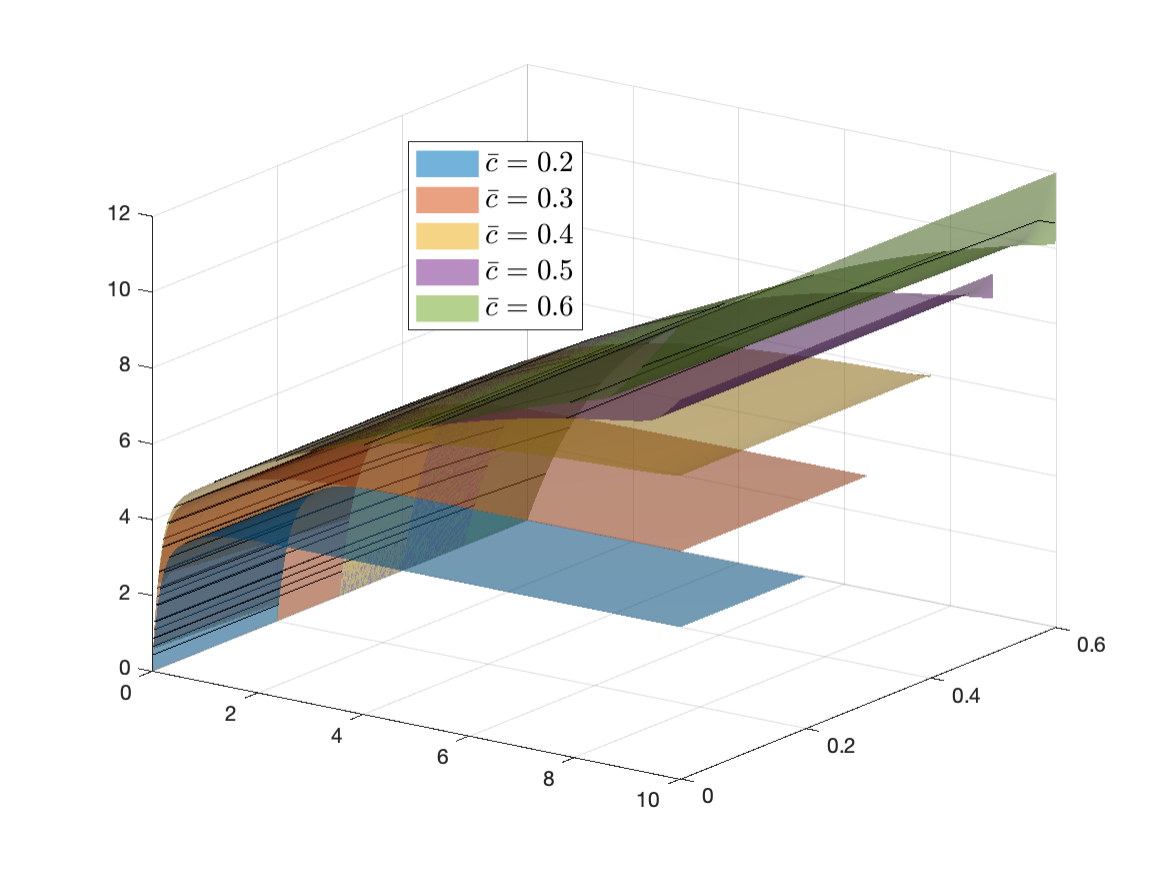}
		\caption{3D values over $(x,c)$; $x\in[0,10]$with maximum $\bar{c}$ range from $[0.2,0.6]$.}
		\label{fig:delta_3d_cbar}
	\end{subfigure}
	
	\caption{Effect of $\bar{c}$ on the value function. X: wealth $x$; Y: value; Z (in 3D): value. For (a)–(b) we fix $c=0$; (b) shows differences vs. $\bar{c}=0.2$; (c) shows values over $x\in[0,10]$ and maximum $\bar{c}$ range from $[0.2,0.6]$. We fix $b=0.6$ in these cases.}
	\label{fig:varying_cbar_value}
\end{figure}	

Figures \ref{fig:varying volatility}, \ref{fig:varying mu}, and \ref{fig:varying r} present comprehensive sensitivity analyses of the optimal dividend payout policies under varying market conditions. 

Figure \ref{fig:varying volatility} examines the impact of volatility levels ($\sigma \in \{0.2, 0.3, 0.4\}$). As volatility increases, both the switching boundary $\sw(\cdot)$ and converting boundary $\sd(\cdot)$ shift rightward, requiring higher surplus thresholds before adjusting dividend strategies. This conservative shift reflects the heightened risk environment: greater volatility necessitates larger capital buffers before firms can optimally increase their running maximum or transition from minimum to maximum payout rates. The boundaries also become more separated as volatility rises, indicating that the distinction between different dividend regimes becomes more pronounced under increased uncertainty. 

Figure \ref{fig:varying mu} illustrates the effect of drift coefficients ($\mu \in \{0.2, 0.3, 0.4\}$). Higher drift values, representing improved expected profitability, enable more aggressive dividend policies with boundaries shifting leftward. This allows firms to initiate dividend adjustments at lower surplus levels, capitalizing on favorable growth prospects. Notably, the impact is most pronounced for the converting boundary, suggesting that expected profitability particularly influences the decision between minimum and maximum payout rates. 

Figure \ref{fig:varying r} demonstrates the influence of interest rates ($r \in \{0.03, 0.05, 0.08\}$). Rising interest rates compress both boundaries toward lower surplus values, reflecting the increased opportunity cost of retained earnings. The effect is particularly striking at $r=0.03$, where the boundaries extend significantly rightward, indicating that low-rate environments encourage capital retention before dividend optimization. 

Across all scenarios, our PDE-based solutions (shown in red and black) align closely with the analytical benchmarks for extreme cases ($b=0$ and $b=1$) and the curved strategies from existing literature, validating our numerical approach while revealing the nuanced interplay between market parameters and optimal dividend policies under drawdown constraints.

\begin{figure}[H]
	\centering
	\begin{subfigure}[b]{0.48\textwidth}
		\centering
		\includegraphics[width=\linewidth]{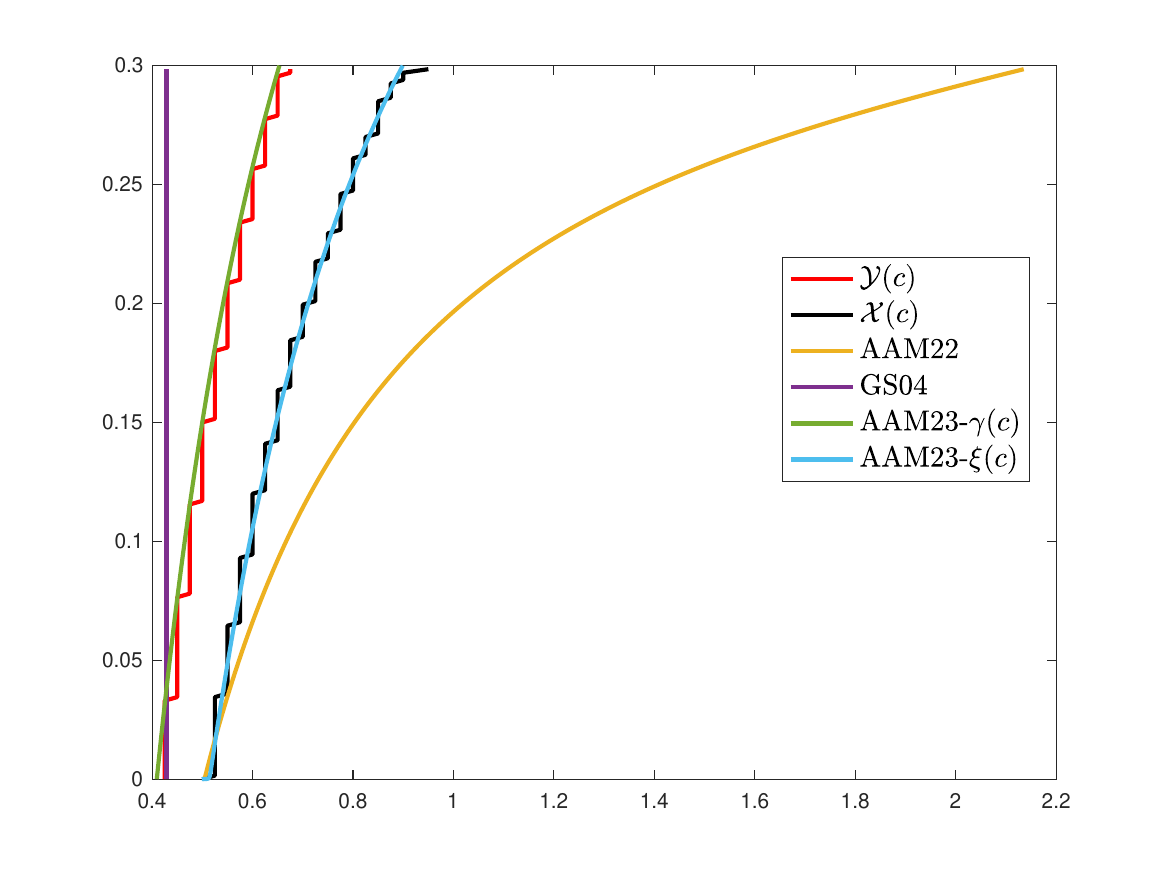}
		\caption{$\sigma=0.2$,}
		\label{fig:sigma=0.2}
	\end{subfigure}
	~~
	\begin{subfigure}[b]{0.48\textwidth}
		\centering
		\includegraphics[width=\linewidth]{b05_graph_new.pdf}
		\caption{$\sigma=0.3$,}
		\label{fig:sigma=0.3}
	\end{subfigure}
	\bigskip\\
	\begin{subfigure}[b]{0.48\textwidth}
		\centering
		\includegraphics[width=\linewidth]{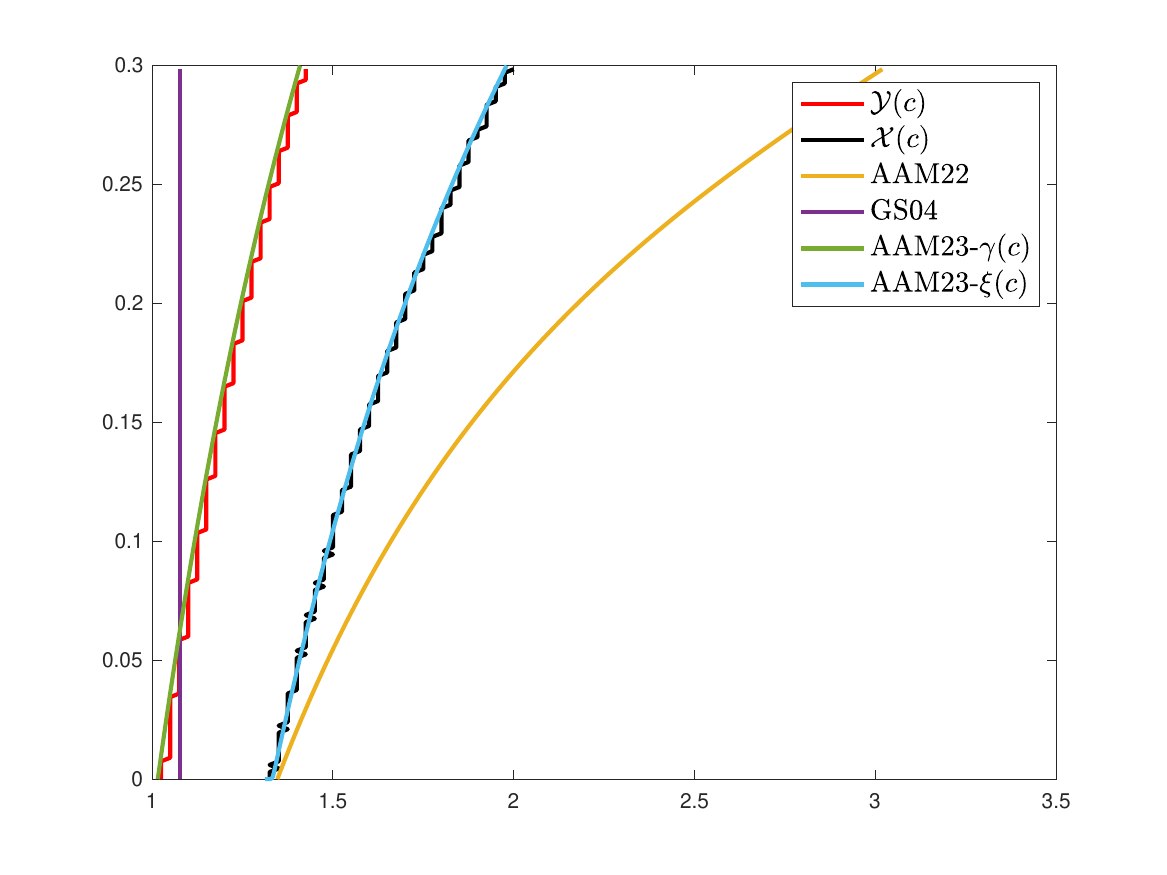}
		\caption{$\sigma=0.4$.}
		\label{fig:sigma=0.4}
	\end{subfigure}
	\caption{Comparison for dividend payout polices with different volatility levels. The left purple and the right orange curve corresponding to two extreme cases $b=0$ and $b=1$ derived from \eqref{analytical_threshold_GS04} and \cite{albrecher2022optimal}, the green and blue cure were solved from the two curved strategies by \cite{albrecher2023optimal}, the red and black curve were the numerical out of our PDEs.}
	\label{fig:varying volatility}
\end{figure}

\begin{figure}[H]
	\centering
	\begin{subfigure}[b]{0.48\textwidth}
		\centering
		\includegraphics[width=\linewidth]{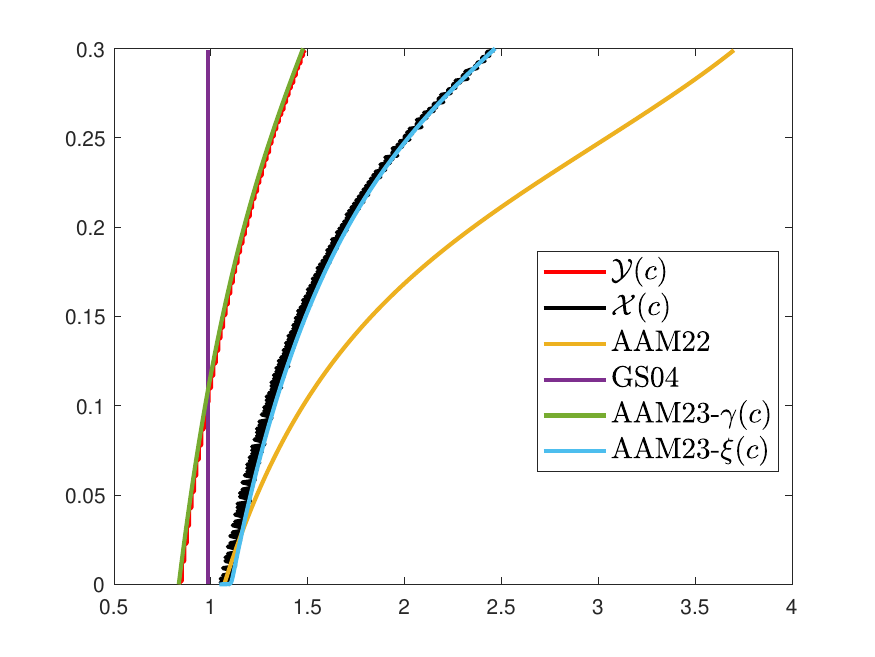}
		\caption{$\mu=0.2$}
		\label{fig:mu=0.2}
	\end{subfigure}
	~~
	\begin{subfigure}[b]{0.48\textwidth}
		\centering
		\includegraphics[width=\linewidth]{b05_graph_new.pdf}
		\caption{$\mu=0.3$}
		\label{fig:mu=0.3}
	\end{subfigure}
	\bigskip\\
	\begin{subfigure}[b]{0.48\textwidth}
		\centering
		\includegraphics[width=\linewidth]{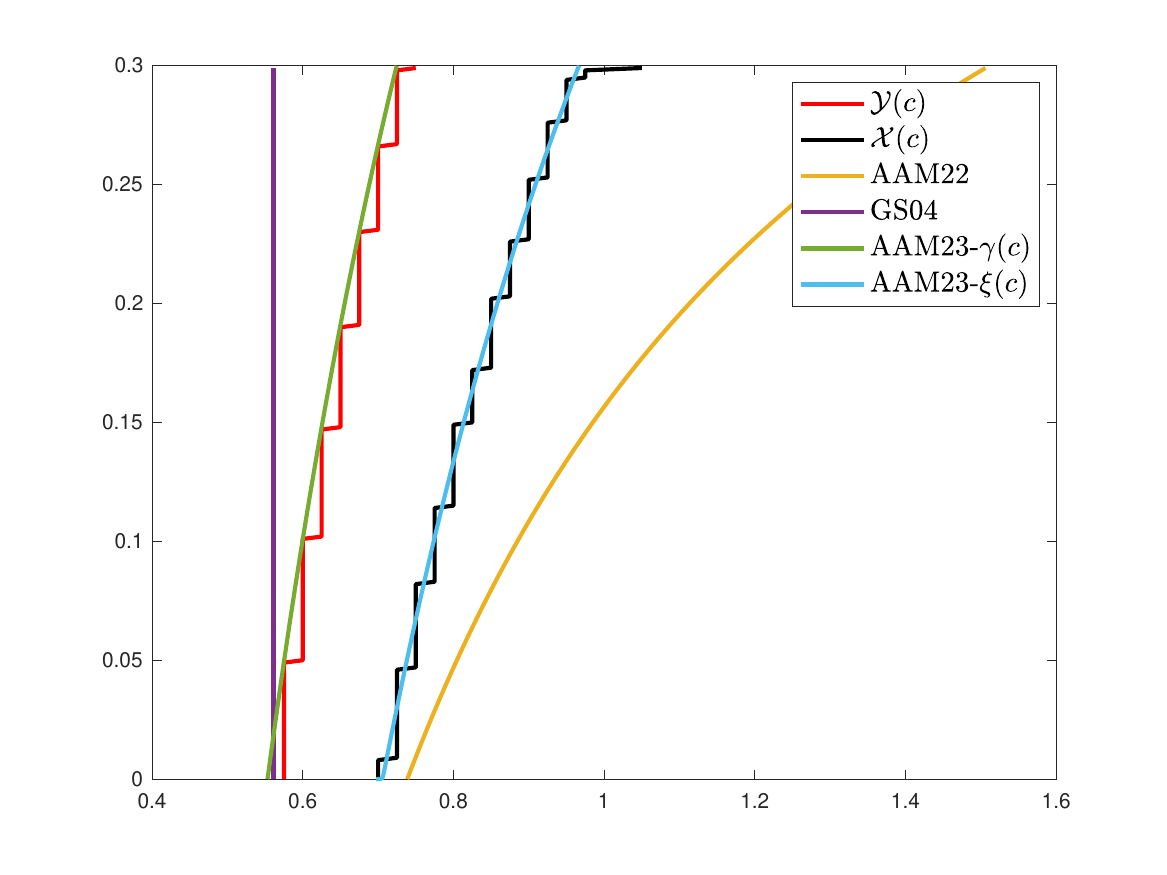}
		\caption{$\mu=0.4$}
		\label{fig:mu=0.4}
	\end{subfigure}
	\caption{Comparison for dividend payout polices with different dfift levels. The left purple and the right orange curve corresponding to two extreme cases $b=0$ and $b=1$ derived from \eqref{analytical_threshold_GS04} and \cite{albrecher2022optimal}, the green and blue cure were solved from the two curved strategies by \cite{albrecher2023optimal}, the red and black curve were the numerical out of our PDEs.}
	\label{fig:varying mu}
\end{figure}

\begin{figure}[H]
	\centering
	\begin{subfigure}[b]{0.48\textwidth}
		\centering
		\includegraphics[width=\linewidth]{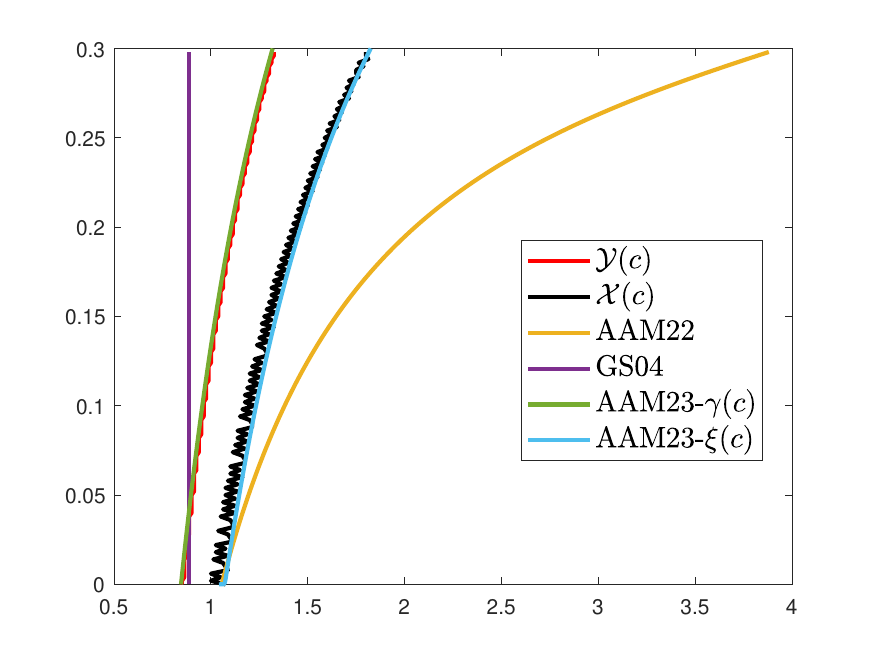}
		\caption{$r=0.03$}
		\label{fig:r=0.03}
	\end{subfigure}
	~~
	\begin{subfigure}[b]{0.48\textwidth}
		\centering
		\includegraphics[width=\linewidth]{b05_graph_new.pdf}
		\caption{$r=0.05$}
		\label{fig:r=0.05}
	\end{subfigure}
	\bigskip\\
	\begin{subfigure}[b]{0.48\textwidth}
		\centering
		\includegraphics[width=\linewidth]{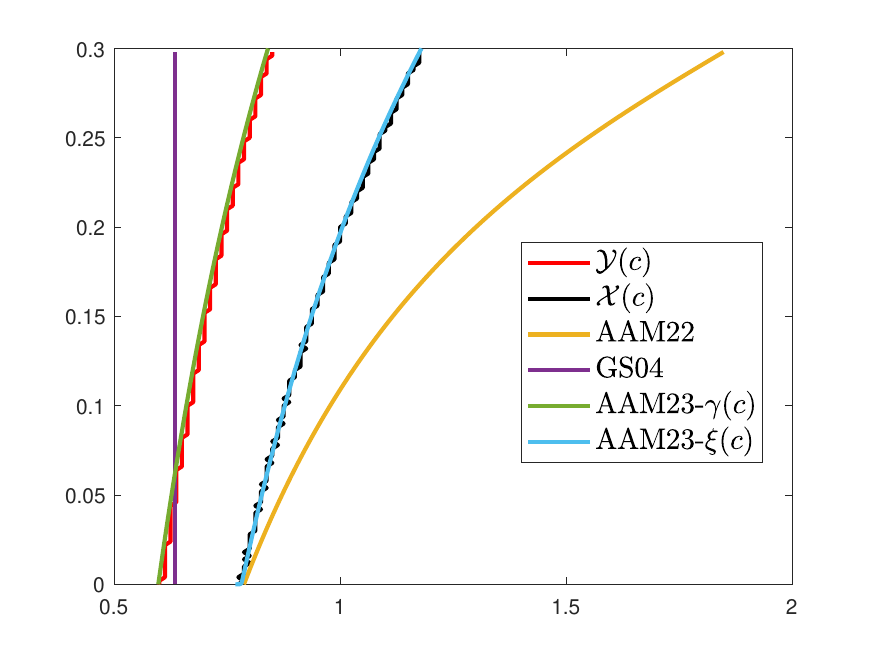}
		\caption{$r=0.08$}
		\label{fig:r=0.08}
	\end{subfigure}
	\caption{Comparison for dividend payout polices with different interest rate levels. The left purple and the right orange curve corresponding to two extreme cases $b=0$ and $b=1$ derived from \eqref{analytical_threshold_GS04} and \cite{albrecher2022optimal}, the green and blue cure were solved from the two curved strategies by \cite{albrecher2023optimal}, the red and black curve were the numerical out of our PDEs.}
	\label{fig:varying r}
\end{figure}

\section{Concluding remarks} 
\label{sec:comparison}
Our work introduces a {novel PDE-based framework} that provides stronger theoretical results and resolves foundational questions left open in \citeauth{albrecher2023optimal}. The key advantages are not merely methodological but substantive, leading to three major improvements:

\begin{enumerate} 
\item {\bf Stronger solutions enabling optimal strategies.} 
The viscosity solution technique in \citeauth{albrecher2023optimal} is a powerful existence tool but is inherently abstract. In contrast, our method proves the existence and uniqueness of a {\bf strong solution} (for the ceiling model) and a {\bf weak solution} (for the no-ceiling model) to the HJB variational inequality. These solution concepts possess significantly higher regularity than a viscosity solution. This is not just a technicality; it is the very reason we can {explicitly derive the optimal feedback control}, which is generally not possible from a viscosity solution alone. This is a fundamental advancement in the analysis of this problem.

\item {\bf A more general and robust proof technique.}
Our approach replaces the delicate parameter-based calculations and verification arguments common in the stochastic control literature with a systematic {PDE a priori estimation method}. We first prove the existence and regularity of the solution, from which the existence and properties of the free boundaries naturally follow. This framework is more general and robust, as it focuses on the intrinsic structure of the PDE rather than on guesswork about the solution's form. {The generality of our method is not merely theoretical; we have already begun applying this same PDE framework to jump-diffusion models, confirming its transferability to settings beyond the pure diffusion case studied in this paper.} This demonstrates that our approach provides a versatile blueprint for a wider class of path-dependent control problems.

\item {\bf Resolution of an open problem on free boundaries.}
The referee correctly identifies the structure of the optimal boundaries as a critical point. \citeauth{albrecher2023optimal} transforms the problem of finding the free boundaries into solving a system of ODEs, but {rigorously proves the existence of this system's solution only under restrictive parameter conditions} (i.e., when $\bar{c}$ is very large). Our method provides a {complete and unconditional proof} of the existence, boundedness, and continuity of the two free boundaries for the full range of model parameters. This resolves a significant theoretical gap in the existing literature.
\end{enumerate}

\citeauth{GX24} corresponds to the special case $b=1$ (non-decreasing dividends), where the linear structure of the problem permits stronger regularity results. For $b<1$ in the current work, the nonlinearity introduced by the drawdown constraint necessitates entirely new analytical techniques, particularly for establishing free boundary regularity. This structural difference is further reflected in the problem dimension: while \citeauth{GX24} deals with a single free boundary, our current model requires the characterization of two free boundaries, representing a substantially more complex free boundary problem. Additionally, our current work covers the singular control case $\bar{c}=\infty$, which was not addressed in \citeauth{GX24}.
\bigskip\\
{\bf Acknowledgements.} The authors would like to thank the editor and reviewers for their valuable comments and suggestions that have led to a much improved version of this paper.

\section*{Declarations}

\textbf{Conflict of interest} The authors have no conflict of interest to declare.


\begin{appendices}

\section{Proofs of \propref{prop:boundary}, \lemref{lem:g}, and \thmref{op2}}\label{sec:A}
\subsection{Proof of \propref{prop:boundary}}\label{proofprop:boundary}
Suppose $g$ fulfills the hypothesis of the proposition.
Then for any admissible strategy $\{\DD_t\}_{t\geq 0}$ in $\Pi_{[c, \cc]}$, we have
\begin{align}
\LL g(X_t)- g'(X_t)\DD_t+\DD_t
&\leq \LL g(X_t)+\sup\limits_{b\cc\leq d\leq \cc}d (1-g'(X_t))=\LL g(X_t)+\cc \TT g(X_t)=0.\label{suppermartingale}
\end{align}
This together with It\^o's formula gives
\begin{align*}
\d (e^{-rt}g(X_t)) &=e^{-rt}(\LL g(X_t)-g'(X_t) \DD_t)\dt+e^{-rt}g'(X_t)\sigma\dw_t\\
&\leq -e^{-rt} \DD_t\dt+e^{-rt}g'(X_t)\sigma\dw_t.
\end{align*}
Since $g'$ is bounded, after integrating on $[0,\tau]$ and taking expectation, it follows
\begin{align*}
g(x)\geq \E\bigg[e^{-r\tau}g(X_{\tau})+\int_0^{\tau}e^{-rt} \DD_t\dt\;\bigg|\; X_0=x\bigg].
\end{align*}
Because $r>0$ and $g$ is bounded, it follows
$e^{-r\tau}g(X_{\tau})1_{\tau=+\infty}=0;$
and since $g(0)=0$, it gives
$$e^{-r\tau}g(X_{\tau})1_{\tau<\infty}=e^{-r\tau}g(0)1_{\tau<\infty}=0.$$
So
\begin{align*}
g(x)\geq \E\bigg[\int_0^{\tau}e^{-rt} \DD_t\dt\;\bigg|\; X_0=x\bigg].
\end{align*}
This shows that $g$ is an upper bound for the value of the problem \eqref{boundary}, namely $g\geq V$.

If we choose the feedback dividend payout strategy $\DD^*_t$ as in \eqref{boundarycontrol}, then
$$\DD^*_t=\argmax\limits_{b\cc\leq d\leq \cc}d (1-g'(X_t)).$$
Also, all the above inequalities become equations, so
$$g(x)=\E\bigg[\int_0^{\tau}e^{-rt} \DD^*_t\dt\;\bigg|\; X_0=x\bigg].$$
This together with $g\geq V$ shows that $\{\DD^*_{t}\}_{t\geq0}$ is an optimal strategy for the problem \eqref{boundary}, and $g$ is the value function.

\subsection{Proof of \lemref{lem:g}}\label{sec:g}
We first show that the inequality $\cc \ga \leq r$ holds if and only if $2\mu\cc \leq \si^2 r$.
Indeed, the quadratic function $f(y)=-\frac{1}{2}\si^2 y^2 +(\mu-\cc) y+r$ satisfies $f(y)> 0$ when $y\in (0,\ga)$ and $f(y)\leq 0$ when $y\geq \ga$, so $\cc \ga \leq r$ is equivalent to $f(r/\cc)\leq 0$, i.e. $2\mu\cc\leq \si^2r$.

The estimates $0\leq g\leq \cc/r$ and $0\leq g'\leq g'(0)<\infty$ can be easily deduced from the strictly monotonicity of $g$ and $g'$ on $\R^{+}$. So we only need to establish the latter and $g\in C^2(\R^+)$ satisfies \eqref{Lg}.

Suppose $2\mu\cc \leq \si^2 r$. Then $\cc \ga \leq r$, and evidently $g\in C^2(\R^+)$ and $g'$ is strictly decreasing on $\R^{+}$. Since $\ga>0$, we have
$$g'(x)=\frac{\cc \ga}{r}e^{-\ga x}< 1,~~ x>0.$$
It thus yields
$$\TT g=b(1-g')+(1-b)(1-g')^+=1-g',$$
and
\begin{align*}
-\LL g-\cc\TT g=-\frac{1}{2}\si^2 g''-\mu g'+r g -\cc(1-g')=0,
\end{align*}
where the last equation is due to that $\ga$ is a root for the equation
$$-\frac{1}{2}\si^2 \ga^2+(\mu-\cc)\ga+r=0.$$
So $g$ satisfies \eqref{Lg}. This completes the proof for the first case $\cc \ga \leq r$.

In the rest of the proof, we assume $2\mu\cc > \si^2 r$ so that $\cc \ga > r$.

We first prove that
\begin{align}\label{k2la2}
\la_2 k_2 \geq\frac{\la_1}{\la_1+\la_2},
\end{align}
which in particular implies $k_2>0$.
In fact, from the trivial inequality
$$\mu-\cc \geq -\sqrt{(\mu-\cc)^2+2 \si^2 r},$$
we get 
\begin{align*}
\frac{\mu-b\cc}{r}-\frac{(1-b)\cc}{r}=\frac{2(\mu-\cc)}{2r}\geq\frac{(\mu-\cc)-\sqrt{(\mu-\cc)^2+2 \si^2 r}}{2r}=-\frac{1}{\ga}.
\end{align*}
Since
$\la_{1}$ and $-\la_{2}$ are the two roots for the equations
$$ -\frac{1}{2}\si^2 \la^2-(\mu-b\cc) \la+r=0,$$
we get from the Viete theorem that 
\begin{align*}
\frac{\la_1+(-\la_2)}{\la_1(-\la_2)}=\frac{\mu-b\cc}{r}\geq \frac{(1-b)\cc}{r}-\frac{1}{\ga}.
\end{align*}
This together with $\la_1, \la_2>0$ gives
\begin{align*}
\frac{\la_2}{\la_1+\la_2}\left[1-\la_1\left(\frac{(1-b)\cc}{r}-\frac{1}{\ga}\right)\right]
\geq\frac{\la_2}{\la_1+\la_2}\left[1-\la_1 \frac{\la_1+(-\la_2)}{\la_1(-\la_2)}\right]
=\frac{\la_1}{\la_1+\la_2},
\end{align*}
which yields the desired inequality \eqref{k2la2}.

We next prove the existence of $y_0$.
Write
\begin{align} \label{funf}
f(y)=k_1 e^{-\la_1 y}-k_2 e^{\la_2 y}+\frac{b\cc}{r}.
\end{align}
Since $\la_1, \la_2, k_{2}>0$ and $\la_1 k_1+\la_2 k_2=1$, it follows
\begin{align*}
f'(y)&=-\la_1 k_1 e^{-\la_1 y}-\la_2 k_2 e^{\la_2 y}
\leq -\la_1 k_1 e^{-\la_1 y}-\la_2 k_2 e^{-\la_1 y}=-e^{-\la_1 y}<0,~~y\geq 0.
\end{align*}
Thanks to $\la_1, \la_2, k_{2}>0$ and $\cc \ga > r$, we have $f(+\infty)=-\infty$ and 
\begin{align*}
f(0)&=k_1 -k_2+\frac{b\cc}{r}=\frac{\cc}{r}-\frac{1}{\ga}>0.
\end{align*}
Hence, $f$ admits unique positive root $y_0$.

By the definitions of $y_0$ and $g$, we have $g(0)=0.$ Because
$$
k_1 -k_2+\frac{b\cc}{r}=\frac{\cc}{r}-\frac{1}{\ga}, \quad \la_1 k_1+\la_2 k_2=1,
$$
we see that $g$ and $g'$ are continuous at $y_0$, and so are on $\R^+$. In particular, $g'(y_0)=1$.

Using the same argument for the previous case, one can show that $g$ satisfies \eqref{Lg} on $[y_0, \infty)$ and clearly $g'$ is strictly decreasing there.
We now show that the same properties hold on $(0,y_0)$ too.
Using the fact that $\la_{1}$ and $-\la_{2}$ are the roots for the equations
$ -\frac{1}{2}\si^2 \la^2-(\mu-b\cc) \la+r=0,$
one can show that
\begin{align}\label{Lg2}
-\frac{1}{2}\si^2 g''-\mu g'+r g -b\cc(1-g')=0,~~ x\in[0, y_0].
\end{align}
We now suppose that $g'$ is strictly decreasing on $(0,y_0)$, then since $g'(y_0)=1$, we have $g'\geq 1$ on $[0,y_{0}]$, which together with \eqref{Lg2} implies that $g$ satisfies \eqref{Lg}.

To show that $g'$ is strictly decreasing on $(0,y_{0})$, we fix arbitrary $x$ and $y$ such that $0<x<y<y_{0}$. Then $e^{-\la_2 t}>1>e^{\la_1t}$ for any $t\in (x-y_{0}, y- y_{0})$, so
\begin{align*}
\frac{e^{-\la_2(x-y_0)}-e^{-\la_2(y-y_0)}}{\la_2}=\int_{x-y_0}^{y-y_0} e^{-\la_2 t}\dt>\int_{x-y_0}^{y-y_0} e^{\la_1t}\dt
=\frac{e^{\la_1(y-y_0)}-e^{\la_1(x-y_0)} }{\la_1}.
\end{align*}
After rearrangement it reads
\begin{align*}
\frac{\la_1}{\la_2 }> \frac{e^{\la_1(y-y_0)}-e^{\la_1(x-y_0)} }{e^{-\la_2(x-y_0)}-e^{-\la_2(y-y_0)}},
\end{align*}
which yields
\begin{align*}
\frac{\la_1}{\la_1+\la_2}> \frac{e^{\la_1(y-y_0)}-e^{\la_1(x-y_0)} }{e^{-\la_2(x-y_0)}-e^{-\la_2(y-y_0)}+e^{\la_1(y-y_0)}-e^{\la_1(x-y_0)}}.
\end{align*}
Recalling the inequality \eqref{k2la2}, we deduce
\begin{align*}
\la_2 k_2> \frac{e^{\la_1(y-y_0)}-e^{\la_1(x-y_0)} }{e^{-\la_2(x-y_0)}-e^{-\la_2(y-y_0)}+e^{\la_1(y-y_0)}-e^{\la_1(x-y_0)}},
\end{align*}
which can also be rewritten as
\begin{align*}
(1-\la_2 k_2) (e^{\la_1(x-y_0)}-e^{\la_1(y-y_0)})+\la_2 k_2 (e^{-\la_2(x-y_0)}-e^{-\la_2(y-y_0)})>0.
\end{align*}
Since $\la_1 k_1+\la_2 k_2=1$, it follows
\begin{align*}
\la_1 k_1 (e^{\la_1(x-y_0)}-e^{\la_1(y-y_0)})+\la_2 k_2 (e^{-\la_2(x-y_0)}-e^{-\la_2(y-y_0)})>0,
\end{align*}
leading to
\begin{align*}
g'(x)=\la_1 k_1 e^{\la_1(x-y_0)}+\la_2 k_2 e^{-\la_2(x-y_0)}>\la_1 k_1 e^{\la_1(y-y_0)}+\la_2 k_2 e^{-\la_2(y-y_0)}=g'(y).
\end{align*} This confirms that $g'$ is strictly decreasing on $(0,y_{0})$.

Finally, since $g$ and $g'$ are continuous and $g$ satisfies \eqref{Lg} on $\R^{+}$,
we conclude that $g\in C^2(\R^{+})$. This completes the proof.

\subsection{Proof of \thmref{op2}}\label{sec:op2}

If $2\mu\cc \leq \si^2 r$, then $\cc \ga \leq r$ and $g'\leq 1$.
Therefore, for any strategy $\{\DD_t\}_{t\geq 0}\in\Pi_{[c,\cc]}$, we have
\begin{align*}
\LL g(X_t)- g'(X_t)\DD_t+\DD_t
&\leq \LL g(X_t)+\sup\limits_{0\leq \DD\leq \cc}\DD (1-g'(X_t))\\[3mm]
&=\LL g(X_t)+\cc (1-g'(X_t))=\LL g(X_t)+\cc \TT g(X_t)=0.
\end{align*}
Repeating the proof of \propref{prop:boundary} with
\eqref{suppermartingale} replaced by the above inequality, one can prove the claim.

\section{An approximation regime switching problem}\label{sec:approximation}
The proof of \thmref{thm:u} is very delicate. To this end, we first introduce an approximation regime switching problem and study its properties in this section.

We notice that \eqref{v_pb} is a PDE, which is generally harder to deal with than ODEs of the same order. Following Albrecher et al. \cite{albrecher2022optimal}, we reduce it to the study of a sequence of ODEs. We use a pure differential equation methods to study it, so that a better solution will be given.

Let us explain our approach in details as follows. Since $\DD_t$ is increasing, the admissible strategy set is shrinking as the increasing of $\DD_t$. If there are only finite discrete values for $\DD_t$, one can treat the cases one by one backwardly: from a lager value to a smaller one. We have already done the first step where $\DD_t$ achieves its maximum $\cc$ that is the boundary case studied in \secref{sec:boundary}.
We can deal with the case where $\DD_t$ is fixed as a smaller value than $\cc$. Since the running maximum is fixed, the HJB equation reduces to an ODE, which can be treated similarly to the boundary case studied in \secref{sec:boundary}. By mathematical induction, we can solve the discrete value case. Once this is done, we will take a limit argument to derive the desired solution to the original HJB equation \eqref{v_pb}.

We now realize the above idea.
Suppose the dividend payout rate can only take the following discrete values $$c_i=\cc- i\Dc ,~~ i=0,1,2,\cdots,n,$$ with $\Dc=\cc/n>0$.
Let $v_{-1}=0$, we consider the following regime switching ODE system:
\begin{align}\label{vi_pb}
\begin{cases}
	\min\{-\LL v_i-c_i \TT v_i, \; v_i-v_{i-1}\}=0, & x>0,~~ i=0, 1,2,\cdots,n,\\[3mm]
	v_i(0)=0, & i=0,1,2,\cdots,n,
\end{cases}
\end{align}
under bounded growth condition.
Here, we can think of $v_i(x)$ as an approximation of $v(x,c_i)$.

Clearly, the system \eqref{vi_pb} can be solved recursively: for each $i$, it is a single-obstacle ODE problem if $v_{i-1}$ is known. One could easily construct a numerical scheme (such as the standard penalty method in \cite{forsyth2002quadratic}) to generate the solution $v_i(x)$ to \eqref{vi_pb} given $v_{i-1}$ is known. The free boundaries $\sw(\cdot)$ and $\sd(\cdot)$ can thereby be connected for all $i=1,2,...,n$ as shown in the previous section.

\begin{lemma}\label{lem:vi}
The system \eqref{vi_pb} has a solution $v_i\in W^2_{p,\rm loc}(\R^+)\cap C^{1+\al}(\R^+),\;i=0,1,2,\cdots,n$ for any $p>1$ and $\al=1-1/p, $ which satisfies
\begin{gather}\label{vi}
	0\leq v_i \leq \frac{\cc}{r},\\\label{vix}
	v_i' \geq 0,
\end{gather}
and
\begin{align}\label{vixx}
	v_i'(y)\leq \max\{v_i'(x),1\}, \quad\forall\; 0\leq x\leq y.
\end{align}
Moreover, for each $i=1,2,\cdots, n,$ there is a free boundary point $x_i>0$ such that $v_i(x)=v_{i-1}(x)$ if $x\geq x_i$ and $v_i(x)>v_{i-1}(x)$ if $x<x_i$.
Also, $v'_i(x)\leq 1$ if $x\geq x_i$.
\end{lemma}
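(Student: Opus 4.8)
\textbf{Proof proposal for Lemma~\ref{lem:vi}.}
The plan is to solve the system \eqref{vi_pb} by induction on $i$, and at each stage to obtain the stated uniform bounds and the free boundary $x_i$ from the single-obstacle structure. For the base case $i=0$, since $v_{-1}=0$, the obstacle problem reads $\min\{-\LL v_0-c_0\TT v_0,\,v_0\}=0$ with $v_0(0)=0$; but in fact one expects $v_0\equiv g$ here (the boundary case of \secref{sec:boundary}), because $c_0=\cc$, so \lemref{lem:g} already gives a solution with all the required properties, and the obstacle $v_0\geq 0$ is automatically satisfied by \eqref{vi}. For the inductive step, assume $v_{i-1}\in W^2_{p,\mathrm{loc}}(\R^+)\cap C^{1+\al}(\R^+)$ satisfies \eqref{vi}, \eqref{vix}, \eqref{vixx}. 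The single-obstacle ODE problem $\min\{-\LL v_i-c_i\TT v_i,\,v_i-v_{i-1}\}=0$ with $v_i(0)=0$ is a standard variational inequality; its solvability in $W^2_{p,\mathrm{loc}}$ under a bounded growth condition follows from classical obstacle-problem theory (e.g. penalization plus interior $W^2_p$ estimates, noting that $\TT v = b(1-v_x)+(1-b)(1-v_x)^+$ is Lipschitz in $v_x$, so the operator $v\mapsto -\LL v - c_i\TT v$ is a uniformly elliptic ODE operator with Lipschitz first-order nonlinearity, and $0$-th order coefficient $r>0$ giving a maximum principle). The $C^{1+\al}$ regularity is then the usual bootstrap from $W^2_p\hookrightarrow C^{1+\al}$ for $p>1$, $\al=1-1/p$.

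Next I would establish the a priori bounds. For \eqref{vi}: the constant $\cc/r$ is a supersolution since $-\LL(\cc/r) - c_i\TT(\cc/r) = \cc - c_i \geq 0$ (using $\TT(\cc/r) = b + (1-b)^+ \cdot 1 = 1$ as $v_x=0$, wait — more carefully $\TT$ applied to a constant gives $b(1-0)+(1-b)(1-0)^+ = 1$, and $c_i \le \cc$), and $\cc/r \ge v_{i-1}$ by induction, so $\cc/r$ dominates $v_i$ by comparison; the lower bound $v_i\geq 0$ follows because $v_i \ge v_{i-1}\ge 0$. For \eqref{vix}, $v_i'\geq 0$: on the non-coincidence set $\{v_i>v_{i-1}\}$ one has $-\LL v_i - c_i\TT v_i=0$, i.e. $\tfrac12\si^2 v_i'' + \mu v_i' - r v_i + c_i(b(1-v_i')+(1-b)(1-v_i')^+)=0$; differentiating (or arguing via a comparison/maximum-principle argument on $w=v_i'$, exploiting that the equation for $w$ is again a linear-type elliptic ODE in regions where the $(\cdot)^+$ does not switch) together with the boundary behavior $v_i(0)=0$, $v_i\geq 0$ forcing $v_i'(0)\geq 0$, and the fact that on the coincidence set $v_i'=v_{i-1}'\geq 0$, yields $v_i'\geq 0$ throughout. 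The estimate \eqref{vixx}, $v_i'(y)\leq \max\{v_i'(x),1\}$ for $y\geq x$, should follow by the same kind of argument as for $g'$ in \lemref{lem:g} (or for $v_x$ in \thmref{thm:u}): where $v_i' > 1$ the operator $\TT$ takes its linear form $b(1-v_i') + (1-b)(1-v_i')$, the equation is linear, and a maximum-principle comparison on intervals shows $v_i'$ cannot have an interior local maximum exceeding $\max\{1, \text{boundary values}\}$; combined with $v_i'=v_{i-1}'$ on the coincidence set and the inductive hypothesis \eqref{vixx} for $v_{i-1}$ (so $v_{i-1}'\le\max\{1,v_{i-1}'(x)\}\le \max\{1,v_i'(x)\}$ since $v_i\geq v_{i-1}$ with equality to the right forces slope comparison), this propagates.

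Finally I would identify the free boundary $x_i$. Define $x_i = \inf\{x\in\R^+ : v_i(x)=v_{i-1}(x)\}$. Positivity of $x_i$: near $x=0$, $v_i$ strictly exceeds $v_{i-1}$ — this needs an argument, e.g. comparing the right-hand derivatives at $0$ or using that $v_{i-1}$ fails to satisfy $-\LL v_{i-1} - c_i\TT v_{i-1}\geq 0$ near $0$ because $v_{i-1}$ solves the same-type equation with the \emph{larger} rate $c_{i-1}$ and $-\LL v_{i-1} - c_i\TT v_{i-1} = (c_{i-1}-c_i)\TT v_{i-1}$-type correction has the wrong sign where $v_{i-1}' < 1$, which holds near the origin... actually one must be careful: near $0$, $v_{i-1}'$ is largest, possibly $>1$, in which case $\TT v_{i-1}<0$ and then $v_{i-1}$ \emph{is} a strict subsolution for the $c_i$-operator, forcing $v_i>v_{i-1}$ there. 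The connectedness of the coincidence set $\{x\geq x_i\}$ (i.e. that once $v_i$ touches $v_{i-1}$ it stays equal) should follow from the structure: on $\{v_i > v_{i-1}\}$, $v_i$ solves the ODE, and a comparison argument using \eqref{vixx} shows $v_i$ cannot re-separate. The last claim, $v_i'(x)\leq 1$ for $x\geq x_i$: on the coincidence set $v_i' = v_{i-1}'$, and one shows $v_{i-1}' \le 1$ on $\{x\ge x_i\}$; this is where the inductive structure on the converting boundary $\sd$ enters — $x_i$ is shown to lie to the right of the point where $v_{i-1}'=1$, using \eqref{vixx} for $v_{i-1}$ and the fact that the contact must occur in the region where $v_{i-1}$ is ``flat enough''. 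I expect the main obstacle to be precisely this interplay near the free boundary: proving $x_i>0$, the connectedness of the coincidence region, and the slope bound $v_i'\leq 1$ past $x_i$ all require carefully combining the obstacle-problem regularity (so that $v_i$ and $v_i'$ match $v_{i-1}$ and $v_{i-1}'$ in a $C^1$ fashion at $x_i$), the differentiated-equation maximum principle for $v_i'$, and the inductive hypotheses — the nonlinearity of $\TT$ (the switching of the $(\cdot)^+$ term) means one cannot simply quote a linear comparison theorem and must track which regime ($v_i'\gtrless 1$) one is in.
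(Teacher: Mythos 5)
Your skeleton (induction on $i$, penalization plus $L_p$ estimates for existence, the constant supersolution $\cc/r$, the differentiated equation for $v_i'\geq 0$ and \eqref{vixx}, and $x_i$ defined as the first contact point) is the same as the paper's, but the steps you defer — ``flat enough'', ``cannot re-separate'', ``$v_i'\leq 1$ past $x_i$'' — are precisely the core of the lemma, and they are not routine. The single missing idea is the paper's claim \eqref{claim1}: there is \emph{no} contact point $x>0$ with $v_j(x)=v_{j-1}(x)$ and $v_{j-1}'(x)>1$. Once this is known, everything else follows mechanically from the inductive hypothesis \eqref{vixx} for $v_{j-1}$: one gets $v_{j-1}'(x_j)\leq 1$, hence $v_{j-1}'\leq 1$ and $\TT v_{j-1}\geq 0$ on $[x_j,\infty)$, hence $-\LL v_{j-1}-c_j\TT v_{j-1}\geq -\LL v_{j-1}-c_{j-1}\TT v_{j-1}\geq 0$ there, so $v_{j-1}$ itself solves the obstacle problem on $[x_j,\infty)$ with boundary value $v_{j-1}(x_j)$, and uniqueness of that problem forces $v_j\equiv v_{j-1}$ on $[x_j,\infty)$; this simultaneously gives the connectedness of the coincidence set and the bound $v_j'\leq 1$ beyond $x_j$. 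Your appeal to ``the inductive structure on the converting boundary $\sd$'' is not available: no statement about $\sd$ or about where $v_{i-1}'=1$ sits relative to $x_{i-1}$ is part of the induction, so it cannot substitute for \eqref{claim1}.

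Moreover, the mechanism you sketch for ruling out bad contact points (``where $v_{i-1}'>1$, $v_{i-1}$ is a strict subsolution of the $c_i$-equation, contradiction'') does not go through as stated: at a general contact point $v_{i-1}$ may itself be glued to $v_{i-2},v_{i-3},\dots$, so its own equation $-\LL v_{i-1}-c_{i-1}\TT v_{i-1}=0$ need not hold in any neighborhood of that point. The paper's proof of \eqref{claim1} handles this by descending the chain: it takes the largest $k$ with $v_j(x)=\dots=v_{j-k}(x)>v_{j-k-1}(x)$ (or $k=j$), so the $c_{j-k}$-equation does hold near $x$, and then shows the nonnegative difference $v_j-v_{j-k}$, which has an interior minimum $0$ at $x$ with matching first derivatives, would be strictly concave in a neighborhood of $x$ because the forcing term $k\Dc\,\TT v_{j-k}$ is strictly negative there when $v_{j-1}'(x)>1$ — a contradiction obtained without ever evaluating second derivatives pointwise, which you cannot do for $W^2_p$ functions. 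Finally, your positivity of $x_i$ is conditional on $v_{i-1}'>1$ near the origin, which you leave as ``possibly''; the paper establishes it from $v_{i-1}(x)-v_{i-1}(0)\geq g(x)-g(0)$, so $v_{i-1}'(0)\geq g'(0)>g'(y_0)=1$ by the strict monotonicity of $g'$ in \lemref{lem:g} under the standing assumption $2\mu\cc>\si^2 r$; combined with \eqref{claim1} and the continuity of $v_{i-1}'$ this gives $x_i>0$. Without \eqref{claim1} and this boundary-slope bound, the positivity, the half-line structure of the coincidence set, and the slope bound $v_i'\leq 1$ on $[x_i,\infty)$ all remain unproved.
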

\begin{proof}
It follows from \eqref{Lg}, $c_0=\cc$ and $g\geq 0=v_{-1}$ that
\begin{align*}
	\begin{cases}
		\min\{-\LL g-c_0 \TT g, \; g-v_{-1}\}=0, & x>0, \\
		g(0)=0. &
	\end{cases}
\end{align*}
Therefore, the solution to \eqref{vi_pb} if $i=0$ is $v_0=g$.

By the standard penalty method and $L_p$ theorem, we can prove
$v_i\in W^2_{p,\rm loc}(\R^+)$
step by step from $i=1,2,\cdots,n$.
By the Sobolev embedding theorem, we also have
$v_i\in C^{1+\al}(\R^+),$
since the process is standard (see, e.g., \cite{friedman1975parabolic}), we omit the details.

From \eqref{vi_pb}, we have
\begin{align}\label{increasingvi}
	v_i\geq v_{i-1}\geq\cdots \geq v_0\geq 0,
\end{align}
so the lower bound in \eqref{vi} holds. Furthermore, since $v_i(x)-v_{i}(0)\geq v_{0}(x)-v_{0}(0)=g(x)-g(0)$ for $i\geq 0$, it follows that $v'_i(0)\geq g'(0)>g'(y_0)=1$
by recalling that $\cc \ga > r$ and \lemref{lem:g}.

We now prove the remaining claims by mathematical induction.
By \lemref{lem:g}, we have $v_0(x)=g(x) $ satisfies \eqref{vi}-\eqref{vixx}.
Suppose all the desired results of the lemma hold for some $i=j-1$ ($1\leq j<n$), we are going to prove they also hold if $i=j$.

By the induction hypothesis, we have $v_{j-1}\leq \cc/r$. It is clearly that the constant function $\cc/r$ is a super solution to \eqref{vi_pb} for $v_j$, so $v_j$ also satisfies the upper bound in \eqref{vi}.

We now prove the set $$ E_{j}:=\{x> 0\;|\;v_j(x)=v_{j-1}(x) \}$$ is not empty. Suppose, on the contrary it was empty, then by \eqref{vi_pb} we would have $-\LL v_j- c_j \TT v_j=0$ on $\R^+$. Applying the comparison principle we can prove that $v_j\leq c_j/r$ under the bounded growth condition and $v_j(0)=0$, which contradicts to the order \eqref{increasingvi} since $v_0(+\infty)=\cc/r>c_j/r$.
Therefore, $E_{j}$ is not empty, and consequently, $x_j:=\inf E_{j}<\infty.$

We next prove that
\begin{align}\label{claim1}
	\mbox{there is no $x>0$ such that $v_j(x)=v_{j-1}(x)$ and $v_{j-1}'(x)> 1$.}
\end{align}
Suppose, on the contrary, there is some $x>0$ such that $v_j(x)=v_{j-1}(x)$ and $ v_{j-1}'(x) > 1$.
There are two cases: either there is $1\leq k\leq j-1$ such that $$v_{j}(x)=v_{j-1}(x)=\cdots=v_{j-k}(x)>v_{j-k-1}(x);$$
or $$v_j(x)=v_{j-1}(x)=\cdots=v_{0}(x).$$
In the former case, since $v_{j-k}$ and $v_{j-k-1}$ are continuous, there is a neighborhood $I_\ep=(x-\ep, x+\ep)$ such that $v_{j-k}>v_{j-k-1}$ in $I_\ep,$
so that
\begin{align}\label{Iep}
	-\LL v_{j-k}-c_{j-k}\TT v_{j-k}=0~ \hbox{ in } I_\ep.
\end{align}
Whereas in the latter case, we set $k=j$ and the above equation still holds since $v_{j-k}\equiv v_{0}$.

Combining \eqref{vi_pb} and \eqref{Iep} yields
\begin{align}\label{v''}
	\frac{\si^2}{2}(v_j-v_{j-k})''\leq -\mu (v_j'-v_{j-k}')+r(v_j-v_{j-k})-c_j(\TT v_j-\TT v_{j-k})+k\Dc \TT v_{j-k}
	~\hbox{a.e. in } I_\ep.
\end{align}
Note that both functions $v_j-v_{j-k}$ and $v_{j-1}-v_{j-k}$ attain their minimum value $0$ at $x$, so
$$v_j(x)=v_{j-k}(x),~~ v_j'(x)=v_{j-k}'(x)=v_{j-1}'(x)>1.$$
It implies the right hand side (RHS) of \eqref{v''} is negative at $x$. Since the RHS of \eqref{v''} is continuous in $x$, we have $(v_j-v_{j-k})''<0$ a.e. in $I_\ep$ if $\ep$ is sufficiently small. It follows that $v_j-v_{j-k}$ is strictly concave in $I_\ep$, which contradicts that $v_j-v_{j-k}$ attains its minimum value at $x$. Hence we have \eqref{claim1}.

We claim
\begin{align}\label{v'jxj<=1}
	v_{j-1}'(x_{j})\leq 1.
\end{align}
Indeed, if $x_j>0$, then $x_j\in E_j$ and \eqref{claim1} implies $v_{j-1}'(x_{j})\leq 1$. If $x_j=0$, then there exists $y_n\in E_j$ such that $y_n\to x_{j}$, which also implies $v_{j-1}'(x_{j})\leq 1$ since $v_{j-1}'(y_n)\leq 1$ and $v_{j-1}'$ is continuous. 
By the induction hypothesis, \eqref{vixx} holds at $i=j-1$, so $v_{j-1}'(x)\leq \max\{v_{j-1}'(x_{j}),1\}$ for any $x\geq x_{j}$.
Combining these estimates we conclude
\begin{align}\label{vjx<=1}
	v_{j-1}'(x)\leq 1,~x\geq x_j.
\end{align}

We next prove
\begin{align}\label{vjj}
	v_j(x)=v_{j-1}(x),~~ \forall x\geq x_j.
\end{align} 
Note that the function $\nu(x)=v_j(x),\;x\geq x_j$ satisfies
\begin{align}\label{vjj_pb}
	\begin{cases}
		\min\{-\LL \nu- c_j \TT \nu , \; \nu-v_{j-1}\}=0, & x>x_j,\\[3mm]
		\nu(x_j)=v_{j-1}(x_j).
	\end{cases}
\end{align}
Notice \eqref{vjx<=1} implies $\TT v_{j-1}(x)\geq 0$ for $x\geq x_j$.
It thus follows
\begin{align*}
	-\LL v_{j-1}(x)- c_j \TT v_{j-1}(x)
	\geq -\LL v_{j-1}(x)- c_{j-1}\TT v_{j-1}(x) \geq 0,~~ x>x_j,
\end{align*}
so $\nu(x)=v_{j-1}(x)$ is another solution to \eqref{vjj_pb}. By the uniqueness of solution to \eqref{vjj_pb}, we conclude \eqref{vjj}. Therefore, $x_j$ is the unique free boundary point for $v_j$.

We come to prove $v_j'\geq 0$. Since $v_j=v_{j-1}$ in $[x_j,+\infty)$ and $ v_{j-1}'\geq 0$ by the induction hypothesis, we only need to prove
\begin{align}\label{vjx>=0}
	v_j'\geq 0~\hbox{in}\; [0, x_j].
\end{align}
Differentiating the equation in \eqref{vi_pb} yields
\begin{align}\label{vjx_eq}
	-\LL ( v_j')+c_j \big(b+(1-b)h\big) v_j''=0~~ \hbox{a.e. in}\; (0,x_j),
\end{align}
where $h=1_{\{ v_j'\leq 1\}}.$
Becasue $v_j'(0)\geq 0$ and $v_j'(x_j)=v_{j-1}'(x_j)\geq 0$,
the comparison principle leads to \eqref{vjx>=0}.

Finally, we prove \eqref{vixx} when $i=j$. By \eqref{vjx<=1}, we know \eqref{vixx} holds for $y\in [x_j,+\infty)$. Fix any $0<x<x_{j}$.
Applying \eqref{vjx<=1}, it is easy to check that the constant function $\max\{v_j'(x),1\}$ is a super solution of \eqref{vjx_eq} in $[x,x_j]$, so \eqref{vixx} also holds for $y\in [x,x_j]$. 
The proof is complete.
\end{proof}

\begin{lemma}\label{lem:vix_ub}
The solution $v_i$ to \eqref{vi_pb} satisfies
\begin{align}\label{vix_ub}
	0\leq v_i'\leq K,~~ i=0,1,\cdots, n,
\end{align}
where $K>0$ is a constant independent of $n$ and $i$.
\end{lemma}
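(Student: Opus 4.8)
The goal is a uniform (in $n$ and $i$) bound on $v_i'$. Since $v_i' \geq 0$ and $v_i$ is nondecreasing in $i$ with $v_0 = g$ having a finite maximal slope $g'(0)$, the natural guess is that $v_i'(0)$ is the largest value of $v_i'$ and that it decreases in $i$, so one should try to show $v_i'(0) \leq v_0'(0) = g'(0)$ for all $i$, together with some argument that $v_i'$ does not exceed $v_i'(0)$ on $(0,x_i)$. The plan is to proceed by induction on $i$, as in \lemref{lem:vi}, exploiting the ODE satisfied by $w_i := v_i'$ on the continuation region.

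\emph{Step 1: the equation for $v_i'$.} On $(0,x_i)$ we have $-\LL v_i - c_i \TT v_i = 0$, and differentiating (as in \eqref{vjx_eq}) gives
\[
-\tfrac{1}{2}\si^2 w_i'' - \mu w_i' + r w_i + c_i\big(b + (1-b)h_i\big) w_i' = 0 \quad\text{in } (0,x_i),
\]
where $h_i = 1_{\{w_i \leq 1\}}$. Rewrite this as $-\tfrac12\si^2 w_i'' + \beta_i(x) w_i' + r w_i = 0$ with $\beta_i(x) = c_i(b+(1-b)h_i) - \mu$, a bounded measurable coefficient. On $[x_i,\infty)$ we have $v_i = v_{i-1}$, hence $w_i = w_{i-1} \leq 1$ there by \eqref{vjx<=1}.

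\emph{Step 2: comparison to control $w_i$ on $[0,x_i]$.} Using the maximum principle for the operator above (which has the right sign on the zeroth-order term since $r>0$), I would show that $w_i$ attains its maximum over $[0,x_i]$ at an endpoint; at $x_i$ we already know $w_i(x_i) = w_{i-1}(x_i) \leq 1$, so the max is either $\leq 1$ or attained at $0$. Hence $\sup_{\R^+} w_i \leq \max\{1, w_i(0)\}$ (the tail $[x_i,\infty)$ being $\leq 1$). It remains to bound $w_i(0)$.

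\emph{Step 3: bounding $v_i'(0)$ uniformly.} This is the main obstacle. The cleanest route is to show $w_i(0) \leq w_{i-1}(0)$ — i.e. $v_i'(0)$ is nonincreasing in $i$ — so that $w_i(0) \leq w_0(0) = g'(0) < \infty$ for all $i$, giving \eqref{vix_ub} with $K = g'(0)$. To get $w_i(0) \leq w_{i-1}(0)$: consider $\phi := v_i - v_{i-1} \geq 0$ with $\phi(0) = 0$; if $v_i'(0) > v_{i-1}'(0)$ then $\phi$ is strictly positive just to the right of $0$, so $0$ lies in the continuation region for $v_i$, and similarly (by the induction hypothesis on the structure of $v_{i-1}$) $0$ is in the continuation region for $v_{i-1}$, where $-\LL v_{i-1} - c_{i-1}\TT v_{i-1} = 0$. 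Subtracting the two equations near $0$ and using $c_i < c_{i-1}$ together with $\TT v_{i-1} = b(1-v_{i-1}') + (1-b)(1-v_{i-1}')^+ \leq b(1-v_{i-1}')^- \cdot(\cdots)$ — more carefully, using that $v_{i-1}'(0) \geq 1$ so $\TT v_{i-1}(0) = b(1-v_{i-1}'(0)) \leq 0$ — one derives a differential inequality for $\phi$ near $0$ forcing $\phi'' $ to have a sign incompatible with $\phi$ having a minimum at $0$ with $\phi'(0) > 0$. I expect this local concavity/convexity argument, essentially the same device used to prove \eqref{claim1} in \lemref{lem:vi}, to close the induction; the delicate point is handling the non-smooth coefficient $h_i$ and the possibility that $v_i$ and $v_{i-1}$ coincide with some earlier $v_{i-k}$ near $0$, which should be dealt with exactly as in the two-case split in the proof of \lemref{lem:vi}. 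Combining Steps 2 and 3 yields $0 \leq v_i' \leq \max\{1, g'(0)\} = g'(0) =: K$, independent of $n$ and $i$.
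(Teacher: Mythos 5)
Your Steps 1--2 are essentially sound (Step 2 is just \eqref{vixx} applied with $x=0$, which you could simply cite), but Step 3 contains a genuine error that the rest of the plan cannot survive. You propose to prove $v_i'(0)\leq v_{i-1}'(0)$, i.e.\ that $v_i'(0)$ is non-increasing in $i$, and to conclude with $K=g'(0)$. The monotonicity actually runs the other way: by \eqref{increasingvi} we have $v_i\geq v_{i-1}$ with $v_i(0)=v_{i-1}(0)=0$, so the difference quotients at the origin satisfy $\bigl(v_i(h)-v_i(0)\bigr)/h\geq \bigl(v_{i-1}(h)-v_{i-1}(0)\bigr)/h$, whence $v_i'(0)\geq v_{i-1}'(0)\geq\cdots\geq v_0'(0)=g'(0)$; the paper even records $v_i'(0)\geq g'(0)>1$ in the proof of \lemref{lem:vi}. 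Consequently the situation $v_i'(0)>v_{i-1}'(0)$ that you try to rule out by a local convexity/concavity argument is the generic one, no contradiction can be derived from it, and $K=g'(0)$ is in general \emph{not} an upper bound for $v_i'(0)$. The real content of the lemma is precisely a bound on $v_i'(0)$ that is uniform in $i$ and $n$, and it cannot be extracted from $g$ alone.

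The paper gets this bound from an explicit majorant independent of $i$: let $\ov$ solve the classical unconstrained dividend problem $\min\{-\LL \ov,\ \ov'-1\}=0$ on $x>0$ with $\ov(0)=0$ (Taksar's solution, written out explicitly). Since $\ov'\geq 1$, one has $\TT\ov\leq 0$, hence $-\LL\ov-c_i\TT\ov\geq-\LL\ov\geq 0$, so $\ov$ is a supersolution of \eqref{vi_pb} for every $i$; comparison together with $v_i(0)=\ov(0)=0$ gives $v_i'(0)\leq\ov'(0)<\infty$, and then \eqref{vixx} propagates this to $v_i'\leq\max\{\ov'(0),1\}=:K$, uniformly in $i$ and $n$. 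To repair your proposal you must replace Step 3 by a comparison with some such fixed supersolution (or another $i$-independent barrier); any induction aiming at the inequality $v_i'(0)\leq v_{i-1}'(0)$ is aimed in the wrong direction.
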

\begin{proof}
The lower bound comes from \eqref{vix}.
By \lemref{lem:g}, we also have $v'_0=g'\leq g'(0)<\infty$.

The following variational inequality
\begin{align*} 
	\begin{cases}
		\min\{-\LL \ov, \; \ov'-1\}=0, & x>0,\\[3mm]
		\ov(0)=0,
	\end{cases}
\end{align*}
admits a unique solution
(see \cite{taksar2000optimal})
\begin{align*} 
	\ov(x)=
	\begin{cases}
		K_1 ({\rm e}^{\theta_2 x}-{\rm e}^{-\theta_1 x}),& 0\leq x<x_\infty;\\[3mm]
		K_2+x, & x\geq x_\infty,
	\end{cases}
\end{align*}
where
\[
x_\infty=\frac{2}{\theta_2+\theta_1}\ln \frac{\theta_1}{\theta_2},\\
\]
where $\theta_{1}, \theta_2>0$ satisfy
\[
-\frac{1}{2}\si^2\theta_{1}^2+\mu\theta_{1}+r=0,\quad
-\frac{1}{2}\si^2\theta_{2}^2-\mu\theta_{2}+r=0,
\]
and
\[
K_1=(\theta_2e^{\theta_2x_\infty}+\theta_1e^{-\theta_1 x_\infty})^{-1},~~
K_2=K_1 ({\rm e}^{-\theta_1 x_\infty}-{\rm e}^{\theta_2x_\infty})-x_\infty.
\]
Note that $\ov'\geq 1$, so $\TT \ov\leq 0$. It follows, for $i=1,2,\cdots, n$,
$-\LL \ov-c_i \TT \ov\geq -\LL \ov\geq 0,$
which means $\ov$ is a super solution to \eqref{vi_pb}.
Since $v_i(0)=\ov(0)=0$, we have $ v_i'(0)\leq \ov'(0)<\infty.$
By \eqref{vixx}, we conclude $ v_i'\leq K=\max\{\ov'(0),\; 1\}$.
\end{proof}

\begin{lemma}\label{lem:vi_w2p}
For each $p>1$, we have
\begin{align}\label{vi_w2p}
	|v_i|_{W^2_p[N-1,N]}\leq K, \quad i=0,1,\cdots, n,
\end{align}
where $K>0$ is a constant independent of $n$, $i$ and $N\in \N$.
\end{lemma}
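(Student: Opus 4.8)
The plan is to derive a uniform (in $n$, $i$, and $N$) bound on the $W^2_p$ norm of $v_i$ over each unit interval $[N-1,N]$ directly from the equation in \eqref{vi_pb} and the already-established uniform $L^\infty$ and gradient bounds. First I would fix $p>1$ and recall that, from \lemref{lem:vi} and \lemref{lem:vix_ub}, we have $0\le v_i\le \cc/r$ and $0\le v_i'\le K$ with $K$ independent of $n$ and $i$; in particular $|v_i|_{C^1(\R^+)}$ is uniformly bounded. Then I would exploit that the equation in \eqref{vi_pb} is a single-obstacle problem: on the non-coincidence set $\{v_i>v_{i-1}\}$ we have the PDE $-\LL v_i - c_i\TT v_i = 0$ a.e., while on the coincidence set $\{v_i=v_{i-1}\}$ we have $v_i=v_{i-1}$. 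The standard $W^2_p$ interior estimate for obstacle problems (or, equivalently, the penalty-method a priori bound used to construct $v_i$, as in \cite{friedman1975parabolic}) gives, for each $N$,
\begin{align*}
|v_i|_{W^2_p[N-1,N]}\le C\Big(|v_i|_{L^\infty[N-2,N+1]}+|v_{i-1}|_{W^2_p[N-2,N+1]}+|f_i|_{L^p[N-2,N+1]}\Big),
\end{align*}
where $f_i$ is the (bounded) forcing term coming from the first-order part of $\LL$ and from $c_i\TT v_i$, and $C$ depends only on $\si$, $\mu$, $r$, $p$ and the fixed length of the intervals — crucially not on $N$ (the coefficients are constant) nor on $n$ or $i$.

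The key point is that the right-hand side of this estimate is controlled uniformly. Indeed $|v_i|_{L^\infty}\le \cc/r$, and since $\TT v_i = b(1-v_i') + (1-b)(1-v_i')^+$ with $0\le v_i'\le K$, we get $|\TT v_i|\le 1+K$; together with $|\mu v_i'|\le \mu K$ and $|r v_i|\le \cc$ this shows the "source" in the Poisson-type estimate is bounded by a constant depending only on the data. The only term on the right-hand side that still refers to the previous stage is $|v_{i-1}|_{W^2_p}$, so I would close the argument by induction on $i$: for $i=0$, $v_0=g$ is explicit and smooth with all derivatives bounded on $\R^+$ (by \lemref{lem:g}), giving a uniform bound; assuming $|v_{i-1}|_{W^2_p[N-1,N]}\le K'$ for all $N$ with $K'$ independent of $n$, the displayed estimate yields the same type of bound for $v_i$. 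To make the induction non-degenerate one must verify the constant does not blow up as $i$ increases: since the recursion is $K_i \le C(1 + K_{i-1})$ would be fatal, I would instead use the sharper fact that $v_i'\le K$ uniformly to treat the gradient term as \emph{given data} rather than as something inherited, so that the estimate for $v_i$ depends on $v_{i-1}$ only through $|v_{i-1}|_{L^\infty}$ (via the obstacle), which is uniformly $\le \cc/r$. This breaks the dependence chain and gives a single constant $K$ working for all $i$ simultaneously.

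More concretely, I would argue as follows. On each interval, split into the two regions. Where $v_i>v_{i-1}$, $v_i$ solves $\frac12\si^2 v_i'' = -\mu v_i' + r v_i - c_i\TT v_i$ a.e., and since the right-hand side is bounded pointwise by a uniform constant $M$ (depending only on $\mu$, $r$, $\cc$, $K$, $b$), we get $|v_i''|\le 2M/\si^2$ a.e.\ there. Where $v_i=v_{i-1}$, the second derivative of $v_i$ (in the a.e.\ / $W^2_p$ sense) agrees with that of $v_{i-1}$ on the interior of the coincidence set. The subtle point is the behavior across the free boundary $x_i$: because $v_i\in C^{1+\al}$ and matches $v_{i-1}$ to first order at $x_i$ with $v_{i-1}'(x_i)\le 1$ (from \lemref{lem:vi}), and because $v_{i-1}$ itself satisfies $-\LL v_{i-1}-c_{i-1}\TT v_{i-1}\ge 0$, one checks that $v_i''$ has no singular part at $x_i$ and the $W^2_p$ bound is governed by the $L^\infty$ bounds on both sides — this is exactly the content of the classical regularity theory for obstacle problems and I would simply invoke it (e.g.\ via \cite{friedman1975parabolic}) rather than reprove it. I expect the main obstacle to be precisely this bookkeeping of the induction constant: one must be careful to phrase the a priori estimate so that the bound for $v_i$ depends on $v_{i-1}$ only through quantities ($|v_{i-1}|_{L^\infty}$, $|v_{i-1}'|_{L^\infty}$) that are \emph{already} known to be uniformly bounded, rather than through $|v_{i-1}|_{W^2_p}$, which a priori could grow with $i$. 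Once that is arranged, the uniform independence from $N$ is immediate because $\LL$ has constant coefficients, so translation invariance makes the constant $N$-free.
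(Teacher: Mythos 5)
Your skeleton is the right one (and close to the paper's): use the uniform bounds $0\le v_i\le \cc/r$ and $0\le v_i'\le K$, observe that on $\{v_i>v_{i-1}\}$ the equation $-\LL v_i=c_i\TT v_i$ holds with a uniformly bounded right-hand side, and handle the coincidence set separately, with $N$-independence coming from the constant coefficients. But there is a genuine gap at the pivotal step where you try to break the dependence on the previous level. The claim that ``the estimate for $v_i$ depends on $v_{i-1}$ only through $|v_{i-1}|_{L^\infty}$ (via the obstacle)'' is false: for an obstacle problem the $W^2_p$ regularity of the solution is limited by second-order information about the obstacle, precisely because on the coincidence set $v_i''=v_{i-1}''$ a.e. Your own ``concrete'' splitting acknowledges this (``where $v_i=v_{i-1}$, the second derivative of $v_i$ agrees with that of $v_{i-1}$''), but you never close the recursion: you give no uniform-in-$i$ bound for $v_{i-1}''$ on that set, which is exactly the point at issue, and the only mechanism you offer for avoiding it (the $L^\infty$-only dependence) does not exist.

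The paper closes this loop in one line by unwinding the coincidence set through all levels: a.e.\ one can write
\begin{align*}
-\LL v_i=\Big(\sum\limits_{j=1}^i c_j 1_{\{ v_{j-1}< v_i\leq v_j\}}+\cc\, 1_{\{ v_i=v_0\}} \Big)\TT v_i,
\end{align*}
i.e.\ wherever $v_i$ sits on the obstacle one descends to the lowest index $j$ with $v_i=v_j>v_{j-1}$ (or to $v_0=g$), and there $v_j$'s own equation applies with $\TT v_i=\TT v_j$; since the gradient bound of \lemref{lem:vix_ub} holds uniformly over \emph{all} levels $j$ and $c_j\le\cc$, the right-hand side is bounded by a constant independent of $n$, $i$, $N$, and the standard $L_p$ estimate on unit intervals finishes the proof. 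Equivalently, your splitting can be salvaged by noting the recursion is of max type rather than additive: a.e.\ $|v_i''|\le\max\{2M/\si^2,\ \mathop{\rm ess\,sup}|v_{i-1}''|\}$ with $M$ uniform, and $g''$ is bounded, so $\sup_i\mathop{\rm ess\,sup}|v_i''|<\infty$; but as written, with the dependence on the obstacle neither correctly identified nor eliminated, the argument does not go through.
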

\begin{proof}
We can rewrite \eqref{vi_pb} as
\begin{align}\label{Lvi}
	-\LL v_i=f:=\Big(\sum\limits_{j=1}^i c_j 1_{\{ v_{j-1}< v_i\leq v_j\}}+\cc 1_{\{ v_i=v_0\}} \Big)\TT v_i~~ \text{a.e.}.
\end{align}
From \eqref{vix_ub}, we know the RHS is uniformly bounded, so the $L_p$ estimation (See e.g., \cite{GT11} Theorem 9.13 on page 239) gives
$$
|v_i|_{W^2_p[N-1,N]}\leq C(|v_i|_{L_p[N-1,N]}+|f|_{L_p[N-1,N]})
$$
for some constant $C>0$, together with \eqref{vi} gives \eqref{vi_w2p}.
\end{proof}

The Sobolev embedding theorem and \lemref{lem:vi_w2p} imply the following.
\begin{corollary}\label{cor:vi_c1}
For each $0<\al<1$, we have
\begin{align}\label{vi_c1}
	|v_i|_{C^{1+\al}(\R^+)}\leq K, \quad i=0,1,\cdots, n,
\end{align}
where $K>0$ is a constant independent of $n$ and $i$.
\end{corollary}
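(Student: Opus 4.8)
The plan is to combine the uniform interior $W^2_p$ bound from \lemref{lem:vi_w2p} with the one-dimensional Sobolev embedding, and then to patch the resulting local H\"older bounds into a global one by invoking the uniform sup-norm bounds on $v_i$ and $v_i'$ already established in \eqref{vi} and \lemref{lem:vix_ub}.

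Given $\al\in(0,1)$, I would fix $p>1$ with $1-1/p\geq\al$ (i.e.\ $p\geq(1-\al)^{-1}$) and apply \lemref{lem:vi_w2p} with this $p$. On the fixed interval $(0,2)$ one has the embedding $W^2_p(0,2)\hookrightarrow C^{1+(1-1/p)}([0,2])\hookrightarrow C^{1+\al}([0,2])$ with an embedding constant $C=C(p)$; by translation invariance the same constant works on every interval $[N-1,N+1]$, $N\in\N$. Since
\[
|v_i|_{W^2_p[N-1,N+1]}\leq |v_i|_{W^2_p[N-1,N]}+|v_i|_{W^2_p[N,N+1]}\leq 2K
\]
by \lemref{lem:vi_w2p}, I obtain
\[
|v_i|_{C^{1+\al}([N-1,N+1])}\leq 2CK,\qquad i=0,1,\cdots,n,\quad N\in\N,
\]
with the right-hand side independent of $n$, $i$ and $N$.

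It then remains to upgrade these overlapping local estimates to a bound on $|v_i|_{C^{1+\al}(\R^+)}$. The $C^1(\R^+)$ part is already contained in \eqref{vi} and \lemref{lem:vix_ub}. For the H\"older seminorm of $v_i'$, take $x<y$ in $\R^+$: if $|x-y|\leq1$ then $x$ and $y$ lie in a common interval $[N-1,N+1]$, so $|v_i'(x)-v_i'(y)|\leq 2CK|x-y|^{\al}$; if $|x-y|>1$ then $|x-y|^{\al}\geq1$ and $|v_i'(x)-v_i'(y)|\leq 2\|v_i'\|_{L^\infty(\R^+)}\leq 2K\leq 2K|x-y|^{\al}$ by \lemref{lem:vix_ub}. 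Combining the two cases gives $|v_i|_{C^{1+\al}(\R^+)}\leq K$ for a constant $K$ independent of $n$ and $i$, as claimed.

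There is no real obstacle here; the only point needing a little care is the globalization step, namely keeping the constant uniform across the tiling of the unbounded domain $\R^+$. This is handled by working on intervals of fixed length $2$ (so the Sobolev embedding constant does not depend on $N$) together with the already-proved uniform $L^\infty$ bounds on $v_i$ and $v_i'$, which control far-apart pairs of points. The rest is routine bookkeeping: quoting the correct form of the embedding $W^2_p(I)\hookrightarrow C^{1,1-1/p}(\bar I)$ on a bounded interval and summing two adjacent unit-interval estimates.
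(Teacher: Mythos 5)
Your proposal is correct and follows essentially the same route as the paper, which simply invokes Lemma~\ref{lem:vi_w2p} together with the Sobolev embedding; you have merely made explicit the globalization step (choosing $p$ with $1-1/p\geq\al$, working on overlapping intervals of fixed length so the embedding constant is uniform, and using the uniform bounds \eqref{vi} and \eqref{vix_ub} to handle far-apart points), which the paper leaves implicit.
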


Since we will take a limit argument to construct a solution to \eqref{v_pb} from that of \eqref{vi_pb}, we first derive more properties of the solution to \eqref{vi_pb}.

Define
$$u_i:=\frac{v_i-v_{i-1}}{\Dc },~~ i=1,2,\cdots,n.$$

\begin{lemma}\label{lem:ui_b}
We have
\begin{align}\label{ui_b}
	0\leq u_i\leq K,~~ i=1,\cdots, n,
\end{align}
where $K>0$ is a constant independent of $n$ and $i$.
\end{lemma}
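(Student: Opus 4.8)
The lower bound $u_i\ge 0$ is immediate from the ordering $v_i\ge v_{i-1}$ recorded in \eqref{increasingvi}. For the upper bound the plan is to derive a \emph{linear} elliptic differential inequality for $u_i$ on the region where it does not vanish, and then apply the maximum principle. Since \lemref{lem:vi} shows $v_i\equiv v_{i-1}$ on $[x_i,\infty)$, we have $u_i\equiv 0$ there, so it suffices to bound $u_i$ on the bounded interval $[0,x_i]$, noting $u_i(0)=0$ and (again by \lemref{lem:vi}) $u_i(x_i)=\Dc^{-1}(v_i(x_i)-v_{i-1}(x_i))=0$.

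On $(0,x_i)$ we have $v_i>v_{i-1}$, so the first term in the minimum in \eqref{vi_pb} is active, giving $\LL v_i=-c_i\TT v_i$ a.e.; meanwhile $-\LL v_{i-1}-c_{i-1}\TT v_{i-1}\ge 0$ a.e. on all of $\R^+$. Subtracting and dividing by $\Dc>0$, and using $c_{i-1}=c_i+\Dc$, I would obtain
$$\LL u_i\ \ge\ \frac{c_i}{\Dc}\big(\TT v_{i-1}-\TT v_i\big)+\TT v_{i-1}\qquad\text{a.e. on }(0,x_i).$$
The next step is to linearize the first term. Writing $\TT w=\phi(w')$ with $\phi(p)=b(1-p)+(1-b)(1-p)^+$, the map $\phi$ is globally Lipschitz with $-1\le\phi'\le -b$, so for each $x$ there is $\theta=\theta_i(x)\in[b,1]\subseteq[0,1]$ with $\TT v_{i-1}-\TT v_i=\theta\,(v_i'-v_{i-1}')=\theta\,\Dc\,u_i'$. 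Hence $u_i$ satisfies, a.e. on $(0,x_i)$,
$$\tfrac12\si^2 u_i''+(\mu-c_i\theta)u_i'-r u_i\ \ge\ \TT v_{i-1}\ \ge\ -M,$$
where, by \lemref{lem:vix_ub}, $\TT v_{i-1}=b(1-v_{i-1}')+(1-b)(1-v_{i-1}')^+\ge b(1-K)\ge -K$ is bounded below by a constant $M>0$ independent of $n$ and $i$; all coefficients of the operator are likewise uniformly bounded (note $c_i\theta\in[0,\cc]$).

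Finally, the constant $M/r$ solves the associated equation with right-hand side $-M$, so $w:=u_i-M/r$ is a subsolution of $\tfrac12\si^2\partial_{xx}+(\mu-c_i\theta)\partial_x-r$ on $(0,x_i)$ — an operator with negative zeroth-order coefficient $-r$ and uniformly bounded coefficients. Since $w(0)=w(x_i)=-M/r<0$, the weak maximum principle for $W^2_p$ functions gives $w\le 0$, i.e. $u_i\le M/r$ on $[0,x_i]$, and therefore $0\le u_i\le K:=M/r$ everywhere, with $K$ independent of $n$ and $i$. The main obstacle, compared with the linear case $b=1$ treated in \cite{GX23}, is exactly the nonlinearity of $\TT$: the increment $\TT v_{i-1}-\TT v_i$ is no longer simply $v_i'-v_{i-1}'$, but since $p\mapsto\max_{b\le d\le1}d(1-p)$ is Lipschitz one still extracts a genuine linear first-order term, which is all that the maximum principle requires.
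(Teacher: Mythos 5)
Your proof is correct, and it takes a somewhat different route from the paper's. The paper's argument is a one-step supersolution comparison for the variational inequality itself: it sets $\ov_i=v_{i-1}+\tfrac{b(K-1)}{r}\Dc$ and observes that, since a constant shift does not change the derivative, $\TT\ov_i=\TT v_{i-1}$, so no linearization of the nonlinear term is needed; the bound $v'_{i-1}\leq K$ from \lemref{lem:vix_ub} then shows $-\LL\ov_i-c_i\TT\ov_i\geq 0$, and comparison for the obstacle problem \eqref{vi_pb} gives $v_i\leq \ov_i$ directly, without ever invoking the free boundary $x_i$ or the non-coincidence set. You instead restrict to $(0,x_i)$ where the equation for $v_i$ is active, subtract the inequality for $v_{i-1}$, divide by $\Dc$, linearize $\TT v_{i-1}-\TT v_i=\theta\,\Dc\,u_i'$ with $\theta\in[b,1]$ (this is exactly the device the paper uses elsewhere via the function $H$ in \eqref{H}), and close with the weak maximum principle on $[0,x_i]$ using $u_i(0)=u_i(x_i)=0$ from \lemref{lem:vi}; both routes ultimately rest on the same ingredients, namely the uniform gradient bound and a comparison/maximum principle, and both yield a constant of order $b(K-1)/r$. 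The trade-off: your version needs the free-boundary structure of \lemref{lem:vi} and a measurable-coefficient linearization, but only the classical weak maximum principle for $W^2_p$ strong subsolutions of a linear operator with zeroth-order coefficient $-r<0$ on a bounded interval; the paper's version needs the comparison principle for the obstacle problem but avoids linearizing $\TT$ altogether, which is precisely how it sidesteps the nonlinearity you flag at the end. One cosmetic remark: phrase the lower bound as $\TT v_{i-1}\geq b(1-K)=:-M$ (i.e.\ ``bounded below by $-M$''), and note that if $K<1$ one may simply take $M=0$; this does not affect the argument.
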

\begin{proof}
The lower bound comes from \eqref{vi_pb}.
Let $K$ be given in \lemref{lem:vix_ub}.
Thanks to \eqref{vix_ub}, we have $\TT v_{i-1}\geq b(1-v'_{i-1})\geq b(1-K).$
Let $\ov_i=v_{i-1}+\frac{b(K-1)}{r} \Dc $, then
\begin{align*}
	-\LL\ov_i-c_i \TT \ov_i
	=&-\LL v_{i-1}-c_i \TT v_{i-1}+b(K-1) \Dc\\
	=&-\LL v_{i-1}-c_{i-1} \TT v_{i-1}+(b(K-1)+\TT v_{i-1}) \Dc\geq 0,
\end{align*}
namely, $\ov_i$ is a super solution to \eqref{vi_pb}, so $v_i\leq \ov_i=v_{i-1}+\frac{b(K-1)}{r} \Dc $. The claim thus follows by redefining $K$.
\end{proof}

We use the following function in the rest of this paper:
\begin{align}\label{H}
H(x, y)=-\frac{(1-x)^+-(1-y)^+}{x-y} 1_{\{x\neq y\}},~~ x, \ y\in\R.
\end{align}
It is clear that $0\leq H\leq 1$.

\begin{lemma}\label{lem:uix_b}
If there exists a constant $a>0$ such that $x_i>a$ for all $i=1,\cdots,n$,
we then have
\begin{align}\label{uxi_b}
	| u_i'|\leq K \quad\mbox{ in }\; [0, x_{i-1}], \quad i=1,\cdots, n,
\end{align}
where $K>0$ is a constant independent of $n$ and $i$.
\end{lemma}
\begin{proof}
Since $ u_i'=0$ in $[x_i,+\infty)$, we only need to prove \eqref{uxi_b} holds in $[0,x_i]\cap [0,x_{i-1}].$
Denote $\wx_i=\min\{x_i,x_{i-1}\}$.
The difference of the equations of $v_i$ and $v_{i-1}$ in $[0,\wx_i]$ leads to
\begin{align*}
	-\LL u_i+c_{i-1} [b+(1-b)h_i] u_i'=-\TT v_i~\hbox{in}\; [0, \wx_i],
\end{align*}
where $h_i=H( v_i', v_{i-1}')\in [0,1]$.
\lemref{lem:vix_ub} and \lemref{lem:ui_b} imply that $|\TT v_i|$ and $|u_i|$ are uniformly bounded and independent of $n$ and $i$, { so for any $[x,x+a]\subset [0,\wx_i]$, applying the $L_p$ estimation, we obtain $|u_i|_{W^2_p[x,x+a]}$ is bounded by some constant independent of $n$ and $i$.} Then the Sobolev embedding theorem shows that $|u_i|_{C^{1+\al} [x,x+a]}$ is also bounded and independent of $n$ and $i$. The claim then follows.
\end{proof}

\begin{lemma}\label{lem:uci_b}
If the condition of \lemref{lem:uix_b} holds,
we then have
\begin{align}\label{ui_b2}
	u_{i-1}\leq u_i+B \Dc, \quad i=2,\cdots, n,
\end{align}
where $B>0$ is a constant independent of $n$ and $i$.

\end{lemma}
\begin{proof}
Since $u_{i-1}=0$ by \lemref{lem:vi} and $u_i\geq0$ in $[x_{i-1},\infty)$, we only need to prove that \eqref{ui_b2} holds in $[0,x_{i-1}]$.
By the equation of $v_{i-1}$ we know
\begin{align*} 
	-\LL v_{i-1}-c_{i-1} [b+(1-b)h ](1- v_{i-1}')=0~\hbox{in}\; [0,x_{i-1}],
\end{align*}
where $h=1_{\{ v_{i-1}'\leq 1\}}.$
On the other hand, by the equation of $v_i$ in \eqref{vi_pb} and noting that $(1- v_i')^+\geq (1- v_i')h$, we have
$-\LL v_i- c_i [b+(1-b) h] (1- v_i')\geq 0.$
Dividing the difference of the above two estimates by $\Dc$ gives
$$-\LL u_i+c_{i-1} [b+(1-b) h ] u_i'+[b+(1-b) h ] (1- v_i')\geq0~\hbox{in}\; [0,x_{i-1}].$$
Similarly, by the equation of $v_{i-2}$ in \eqref{vi_pb} and $(1- v_{i-2}')^+\geq (1- v_{i-2}')h$
we have
\begin{align*} 
	-\LL v_{i-2}- c_{i-2} [b+(1-b) h] (1- v_{i-2}')\geq 0,
\end{align*}
we can deduce
$$-\LL u_{i-1}+c_{i-2} [b+(1-b) h ] u_{i-1}'+[b+(1-b) h ](1- v_{i-1}')\leq0~\hbox{in}\; [0,x_{i-1}].$$
So $w_i:=(u_i-u_{i-1})/\Dc$ satisfies
$$-\LL w_i+c_{i-2} [b+(1-b) h ] w_i'\geq 2[b+(1-b) h ] u_i'\geq -2K,~~\hbox{in}\; [0,x_{i-1}],$$
where the last inequality is due to \eqref{uxi_b}. Moreover, because $w_i(0)=0$ and $w_i(x_{i-1})=u_i(x_{i-1})/\Dc\geq 0$, by the comparison principle, we have $w_i\geq -2K/r$ in $[0,x_{i-1}]$, which gives \eqref{ui_b2} with $B=2K/r$.
\end{proof}

\section{Proof of \thmref{thm:u}}\label{sec:solvability}

Now we are ready to prove \thmref{thm:u}.
Fix any $p>1$. Then $\al=1-1/p\in(0,1)$.

For each $n\in \N$, rewrite $v_i(x)$ as $v^n_i(x)$ if $\Dc=\Dc^n:=\cc/n$.
Let $v^n(x,c)$ be the linear interpolation function of $v^n_i(x)$, i.e.
$$
v^n(x,c)= \frac{c^n_{i-1}-c}{\Dc^n} v^n_i(x) + \frac{c-c^n_{i}}{\Dc^n} v^n_{i-1}(x),\quad \hbox{if}\; c^n_{i}< c \leq c^n_{i-1},
$$
where $c^n_{i}=\cc- i \Dc^n$.
\lemref{lem:vix_ub} and \lemref{lem:ui_b} imply $v^n$ is uniform Lipschitz continuous in $\Q$. Apply the Arzela-Ascoli theorem, there is $v\in C(\Q )$, and a subsequence $v^{n_k}$ such that, for each $L>0$,
$v^{n_k}\longrightarrow v~\hbox{in}\; C([0,L]\times [0,\cc]).$
\lemref{lem:vi_w2p} implies $v(\cdot,c)\in W^2_{p, \rm loc} [0,+\infty)$ for any $c\in [0,\cc]$; the estimate \eqref{vi} implies that $v$ is bounded in $\Q $; \eqref{ui_b} implies that $v$ is non-increasing in $c$. The above shows $v\in \A$.

Furthermore, the estimate \eqref{vc} comes from \eqref{ui_b}. The estimates \eqref{v}-\eqref{vxx} follow from \eqref{vi}-\eqref{vixx} and \eqref{vix_ub}. The estimate \eqref{vL} is a consequence of \eqref{vi_w2p} and \eqref{vi_c1}.

We come to prove that $v$ satisfies \eqref{v_pb}. The boundary conditions are trivially satisfied. 
We now show
$$-\LL v-c \TT v\geq 0.$$
For each $(x,c)\in \Q$, by the construction of $v$, there is $c^k= c^{n_k}_{i_k}$ such that $c^k\to c,$ $v^{n_k}_{i_k}(x)\to v(x,c).$
Moreover, from \lemref{lem:vi_w2p} and \corref{cor:vi_c1} we also have $v^{n_k}_{i_k}(\cdot)$(or its subsequence)$\to v(\cdot,c)$ weakly in $W^2_{p }[N-1,N]$ and uniformly in $C^{1+\al}[N-1,N]$ for each $N\in \N$.
Let $k\to \infty$ in the inequality
$$-\LL v^{n_k}_{i_k}(x)-c^k \TT v^{n_k}_{i_k}(x)\geq 0$$
we get
$$-\LL v(x,c)-c \TT v(x,c)\geq 0.$$

It is only left to show the last requirement \eqref{-Lv=0} in \defref{def:solution}.
Suppose $v(x,c)>v(x,s)$ for all $s\in(c,\cc]$. For each fixed $s\in(c,\cc]$, there is a sequence $s^k= c^{n_k}_{j_k}$ such that $s^k\to s$ and $v^{n_k}_{j_k}(x)\to v(x,s)$.
So for $k$ large enough, $v^{n_k}_{i_k}(x)> v^{n_k}_{j_k}(x)$.
Since $ c^{n_k}_{i_k} \to c < s\leftarrow c^{n_k}_{j_k}$, we have $i_k > j_k$.
There must exist some integer $l_k\in (j_k, i_k]$ such that $v^{n_k}_{l_k}(x)> v^{n_k}_{l_k-1}(x)$.
Thanks to \lemref{lem:vi}, we have
$$v^{n_k}_{l_k}(y)> v^{n_k}_{l_k-1}(y), ~~ y\leq x.$$
It then follows from \eqref{vi_pb} that
$$v^{n_k}_{l_k}(y)-t^k \TT v^{n_k}_{l_k}(y)=0, ~~ y\leq x,$$
where $t^k= c^{n_k}_{l_k}$.
The sequence $t^k$ has a subsequence converging to some $t\in [c,s]$, so that
$$-\LL v(\cdot,t)-t \TT v(\cdot,t)=0~~ \hbox{a.e. in}~[0,x].$$
Let $s\to c$,
we have $$ -\LL v(\cdot,c)-c \TT v(\cdot,c)=0~~ \hbox{a.e. in}~ [0,x],$$
which implies \eqref{-Lv=0}.
So we conclude that $v$ is a solution to \eqref{v_pb}.

We now prove the uniqueness.
Suppose $v,\;w\in \A$ are two solutions to the problem \eqref{v_pb}.
Let
$\phi(x)=e^{\eta x},$
where $\eta>0$ is a small constant such that
\begin{align}\label{Lphi}
-\LL \phi=\Big(-\frac{\si^2}{2}\eta^2-\mu \eta+r\Big)e^{\eta x}\geq 0.
\end{align}
We come to prove
\begin{align}\label{v<=w ep}
v\leq w+\ep \phi
\end{align}
for any $\ep>0.$
Suppose it is not true, then
\begin{align*} 
M=\sup\limits_{(x,c)\in \Q}(v-w-\ep \phi)(x,c)>0.
\end{align*}
Because $v$ and $w$ are bounded, $v-w-\ep \phi$ tends to $-\infty$ uniformly for all $c\in [0,\cc]$ if $x\to+\infty$.
By virtue of its continuity, $v-w-\ep \phi$ must attain its maximum value at some point $(x^*, c^*)$. Without loss of generality, we may assume $c^*$ is the largest $c\in [0,\cc]$ such that
$$(v-w-\ep \phi)(x^*,c)=M.$$
Because $$(v-w-\ep \phi)(x^*,\cc)=-\ep \phi(x^*,\cc)<0<M$$ and $$(v-w-\ep \phi)(0,c^*)=-\ep \phi(0,c^*)<0<M,$$
we have $c^*<\cc$ and $x^*>0$, and consequently,
\begin{align*} 
(v-w-\ep \phi)(x^*,c)<M=(v-w-\ep \phi)(x^*,c^*), \quad c\in(c^*, \cc].
\end{align*}
Since $w$ is non-increasing w.r.t. $c$, the above inequality implies
$$v(x^*,c^*)> v(x^*,c),~~ c\in(c^*, \cc].$$
According to the definition of the solution to \eqref{v_pb}, this implies
\begin{align}\label{strictless}
-\LL v(x,c^*)-c^* \TT v(x,c^*)=0,\quad x\in(0,x^*).
\end{align}
Combining
$
-\LL w(\cdot,c^*)-c^* \TT w(\cdot,c^*)\geq 0 \quad\hbox{a.e. in}\;(0,x^*)
$
and \eqref{Lphi} we conclude
$$
-\LL (v-w-\ep \phi)(\cdot,c^*)\leq c^* [\TT v(\cdot,c^*)- \TT w(\cdot,c^*)]\quad\hbox{a.e. in}\;(0,x^*),
$$
i.e., a.e. in $(0,x^*)$, 
\begin{align}\label{vwxx}
&\frac{\si^2}{2} \p_{xx}(v-w-\ep \phi)(\cdot,c^*)\nonumber\\
\geq &-\mu \p_x (v-w-\ep \phi)(\cdot,c^*)
+ r (v-w-\ep \phi)(\cdot,c^*)- c^* [\TT v(\cdot,c^*)- \TT w(\cdot,c^*)].
\end{align}
Because $x^*$ is an internal maximum point of $(v-w-\ep \phi)(\cdot,c^*)$, we know that
$$\p_x (v-w-\ep \phi)(x^*,c^*)=0.$$
Moreover, due to $\phi'\geq0$,
\begin{align*}
\TT v(x^*,c^*)=\TT (w+\ep \phi)(x^*,c^*)\leq \TT w(x^*,c^*).
\end{align*}
Recalling $(v-w-\ep \phi)(x^*,c^*)=M>0$, so the RHS of \eqref{vwxx} is positive at $x^*$, since the RHS of \eqref{vwxx} is continuous, we can infer that $$\p_{xx}(v-w-\ep \phi)(\cdot,c^*)> 0$$ a.e. in $(x^*-\ep, x^*)$ when $\ep$ is sufficiently small. Then, for $x\in (x^*-\ep, x^*)$, we have $$\p_x(v-w-\ep \phi)(x,c^*)<\p_x(v-w-\ep \phi)(x^*,c^*)=0,$$ which contradicts the fact that $x^*$ is a maximum point of $(v-w-\ep \phi)(\cdot,c^*)$.
Hence, the solution to \eqref{v_pb} is unique.

From now on, we let $v$ be the unique solution to the HJB equation \eqref{v_pb} defined above.
We next show $ v_x$ is continuous in $\Q$, which will complete the proof of \thmref{thm:u}.

By \eqref{vL}, there is a constant $K>0$ such that
\begin{align}\label{vLx}
|v_x(x,c)-v_x(y,c)|\leq K|x-y|^\al,~~ \forall\; x,y\in \R^+,\ c\in[0,\cc].
\end{align}
Now we come to prove there is another constant $K'>0$, such that
\begin{align}\label{vLc}
|v_x(x,c)-v_x(x,s)|\leq K'|c-s|,~~ \forall\; x\in \R^+,\ c, s\in[0,\cc].
\end{align}
Since $v_{xc}=0$ in $\SS$, we only need to prove \eqref{vLc} holds for $(x,c),\;(x,s)\in \NS.$

Let $$u(x)=\frac{v(x,s)-v(x,c)}{c-s},$$ then
\begin{align*}
-\LL u+s [b+(1-b) h ] u_x=-\TT v(\cdot,c)~\hbox{in}\;[0,x],
\end{align*}
where
$h=H(v_x(x,c),v_x(x,s))\in[0,1].$
Note that { \eqref{vc}} implies $u$ is bounded, apply the $L_p$ estimation we have $u$ is bounded in $W^2_p[N-1,N]$ for each $N\in\N$, then apply the embedding theorem, we know $u_x$ is bounded, which gives \eqref{vLc}.
Combining \eqref{vLx} and \eqref{vLc}, we conclude that $v_x$ is continuous in $\Q$, completing the proof of \thmref{thm:u}.


\section{Proof of \propref{prop:characterizationx} }\label{sec:freeboundaries}

In this section, we establish \propref{prop:characterizationx}.
It will be a consequence of the below Lemmas \ref{lem:N} $-$ \ref{lemma:xcb}.

Recall that
\begin{align*}
\SS&=\Big\{(x,c)\in \bQ \;\Big|\; v(x,c)=v(x,s) \text{ for some $s\in(c,\cc]$} \Big\}; \hspace{-32pt}&&\\
\sw(c)&=\inf\Big\{x\in \R^+\;\Big|\; (x,c)\in \SS\Big\}, && c\in[0,\cc);\\
\swb(x,c)&=\max\Big\{s\in[0,\cc]\;\Big|\; v(x,s)=v(x,c)\Big\}\in[c,\cc], &&(x,c)\in \bQ;\\
\minswb(x,c)&=\min\Big\{s\in[0,\cc]\;\Big|\; v(x,s)=v(x,c)\Big\}\in[0,c], &&(x,c)\in \bQ.
\end{align*}

\begin{lemma}\label{lem:N}
If $(x,c)\in\SS$, then
\begin{align}\label{vx<=1}
	v_x(x ,c)\leq 1
\end{align}
and $v(y,s)=v(y, c)$ for all $(y,s)\in [x,+\infty)\times[\minswb(x,c),\swb(x,c)].$ As a consequence, we have
\begin{align}\label{vxxc}
	v_x(\sw(c),c)\leq 1,~c\in [0,\cc),
\end{align}
and
\begin{gather*}
	\Big\{(x,c)\in \bQ \;\Big|\; x> \sw(c)\Big\}\subseteq\SS
	\subseteq\Big\{(x,c)\in \bQ \;\Big|\; x\geq \sw(c)\Big\},\\
	\Big\{(x,c)\in \bQ \;\Big|\; x< \sw(c)\Big\}\subseteq\NS
	\subseteq\Big\{(x,c)\in \bQ \;\Big|\; x\leq \sw(c)\Big\}. 
\end{gather*}
\end{lemma}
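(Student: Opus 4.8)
The plan is to fix $(x,c)\in\SS$ with $x>0$ (the degenerate level $x=0$ plays no role, since the switching boundary $\sw(\cdot)$ turns out to be strictly positive), set $m:=\minswb(x,c)$ and $M:=\swb(x,c)$, and recall from the elementary properties of $\minswb,\swb$ listed before \propref{prop:xi+} that $m\le c< M\le\cc$, that $v(x,\cdot)$ is constant equal to $v(x,c)$ on $[m,M]$, and that $(x,M)\in\NS$ when $M<\cc$. I would prove the bound \eqref{vx<=1} by contradiction: suppose $v_x(x,c)>1$. Take $s:=M$ if $M<\cc$, so that $-\LL v(\cdot,M)-M\,\TT v(\cdot,M)=0$ a.e.\ on $[0,x]$ by the fourth clause of \defref{def:solution}; or $s:=\cc$ if $M=\cc$, so that $v(\cdot,\cc)=g$ solves $-\LL g-\cc\,\TT g=0$ on $\R^+$ by \eqref{Lg}. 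Put $\phi:=v(\cdot,c)-v(\cdot,s)\ge0$ by monotonicity in $c$; then $\phi(x)=0$ because $s\in[m,M]$, and since $\phi\in C^1$ and $x>0$ is an interior minimizer, $\phi'(x)=0$, hence $v_x(x,s)=v_x(x,c)>1$. Continuity of $v_x$ gives $v_x(\cdot,c),v_x(\cdot,s)>1$ on some $I=[x-\delta,x]$, so $\TT v(\cdot,c)=b(1-v_x(\cdot,c))$ and $\TT v(\cdot,s)=b(1-v_x(\cdot,s))$ there; subtracting the inequality for $v(\cdot,c)$ from the equation for $v(\cdot,s)$ and substituting $v_x(\cdot,c)=v_x(\cdot,s)+\phi'$ yields, on $I$,
\begin{align*}
-\tfrac12\si^2\phi''+(bc-\mu)\phi'+r\phi\;\ge\; b(s-c)\big(v_x(\cdot,s)-1\big)>0 \quad\text{a.e.},
\end{align*}
where $b>0$ is used essentially. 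Thus $\phi\ge0$ is a strict supersolution on $I$ of a uniformly elliptic operator with nonnegative zeroth-order term, it attains its minimum $0$ at the endpoint $x$, and $\phi'(x)=0$; Hopf's lemma for $W^2_{p,\mathrm{loc}}$-functions then forces $\phi\equiv0$ on $I$, contradicting the strict sign above. Hence $v_x(x,c)\le1$.

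For the coincidence assertion, \eqref{vx<=1} together with \eqref{vxx} gives $v_x(\cdot,c)\le1$ on $[x,\infty)$; and since $v(x,s)=v(x,c)$ for every $s\in[m,M]$, the same touching argument gives $v_x(x,s)=v_x(x,c)\le1$ and hence $v_x(\cdot,s)\le1$ on $[x,\infty)$, again by \eqref{vxx}. It remains to prove $v(\cdot,s)\equiv v(\cdot,c)$ on $[x,\infty)$ for $s\in[m,M]$, and by monotonicity in $c$ it suffices to show $v(\cdot,m)\equiv v(\cdot,M)$ there. I would obtain this by passing to the limit in the regime-switching scheme of \secref{sec:approximation}: by \lemref{lem:vi}, $v^n_i\ge v^n_{i-1}$ with equality \emph{exactly} on $[x^n_i,\infty)$, so if $v^n_i(x)=v^n_j(x)$ for indices $j<i$, then the chain $v^n_j\le\cdots\le v^n_i$ collapses at $x$ and therefore on all of $[x,\infty)$. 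Since $v$ is the locally uniform ($C^{1+\al}_{\mathrm{loc}}$) limit of the interpolants along matched grid points, this discrete ``coincidence propagates rightward'' statement should carry over in the limit, with the a priori bound $v_x(\cdot,s)\le1$ on $[x,\infty)$ as an additional ingredient in passing from the approximate coincidences of the scheme to the exact one.

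The stated consequences then follow quickly. For \eqref{vxxc}: if $(\sw(c),c)\in\SS$ apply \eqref{vx<=1} directly; otherwise take $y_k\downarrow\sw(c)$ with $(y_k,c)\in\SS$, so $v_x(y_k,c)\le1$, and conclude $v_x(\sw(c),c)\le1$ from the continuity of $v_x$ in $\Q$ (\thmref{thm:u}). The inclusion $\SS\subseteq\{x\ge\sw(c)\}$ is just the definition of $\sw(c)$ as an infimum, and $\{x>\sw(c)\}\subseteq\SS$ holds because for such $x$ there is $y\in[\sw(c),x)$ with $(y,c)\in\SS$, whence $(x,c)\in\SS$ by the coincidence assertion (which in particular shows $\SS$ is upward-closed in $x$ at each fixed $c$). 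The two $\NS$-inclusions are the complements of these.

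The main obstacle is the coincidence assertion, i.e.\ the rightward propagation of $v(\cdot,m)=v(\cdot,M)$. The contradiction argument that settles \eqref{vx<=1} does not transfer, because \defref{def:solution} only supplies $-\LL v(\cdot,s)-s\,\TT v(\cdot,s)=0$ with \emph{equality} on $[0,y]$ — to the \emph{left} of an $\NS$-point — whereas here information must be propagated to the right of $x$; moreover, inside $\{v_x\le1\}$ where $\TT$ is linear, subtracting the two equations leaves a term proportional to $(M-m)(1-v_x(\cdot,M))$ with the ``wrong'' sign for a maximum-principle argument. This is why one falls back on the discrete scheme, where the single-obstacle structure makes the propagation transparent; the delicate point there is that the scheme realizes the coincidence at a grid point only approximately, so the limit must be taken with care.
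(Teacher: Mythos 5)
Your treatment of \eqref{vx<=1} is correct and is essentially the paper's argument: the paper likewise takes $\os=\swb(x,c)$, uses \eqref{-Lv=0} (or \eqref{Lg} when $\os=\cc$) on $[0,x]$ together with \eqref{-Lv>=0}, and exploits that $v(\cdot,c)-v(\cdot,\os)\ge 0$ touches $0$ at $x$, so that $v_x(x,c)=v_x(x,\os)$ and, if $v_x(x,c)>1$, the zeroth-order term $(\os-c)\,\TT v(x,\os)<0$ forces the difference to be strictly concave near $x$ and hence negative to the left of $x$, contradicting monotonicity in $c$; your Hopf-lemma phrasing on a small interval where $\TT$ is linear is a legitimate variant, and the implicit use of $b>0$ and of $x>0$ (interior touching) is shared with the paper's own proof. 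The derivation of \eqref{vxxc} and of the four inclusions from the coincidence assertion is also as in the paper.

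The genuine gap is the coincidence assertion $v(y,s)=v(y,c)$ on $[x,+\infty)\times[\minswb(x,c),\swb(x,c)]$, which is the core of the lemma. Your route through the regime-switching scheme does not go through as stated: for finite $n$ the approximants at grid levels near $\minswb(x,c)$ and $\swb(x,c)$ coincide at $x$ only up to an error $\ep_n\to 0$, and the exact rightward propagation of \lemref{lem:vi} says nothing about approximate coincidence; you would need a quantitative estimate turning smallness at the single point $x$ into smallness on all of $[x,+\infty)$ uniformly in $n$, which neither your proposal nor the paper's a priori bounds supply (the bound $v_x\le 1$ on $[x,+\infty)$ alone does not do it), and you flag this ``delicate point'' without resolving it. The paper closes the step by a different device that sidesteps the subtraction whose sign you rightly distrust: with $\os=\swb(x,c)$, define $\nu(y,s):=v(y,\os)$ on the quadrant $[x,+\infty)\times[\minswb(x,c),\os]$. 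Since $v_x(\cdot,\os)\le 1$ on $[x,+\infty)$ by \eqref{vx<=1} and \eqref{vxx}, one has $\TT v(\cdot,\os)\ge 0$ there, hence $-\LL \nu-s\,\TT \nu\ge -\LL v(\cdot,\os)-\os\,\TT v(\cdot,\os)\ge 0$ for every $s\le \os$, while $\nu_s\equiv 0$; so $\nu$ solves $\min\{-\LL\nu-s\TT\nu,\,-\nu_s\}=0$ on the quadrant with the same boundary data as $v$ (at $y=x$ because $v(x,\cdot)$ is constant on $[\minswb(x,c),\os]$, and at $s=\os$ trivially), and the comparison/uniqueness argument already used for \eqref{v_pb} (the $\ep\,e^{\eta x}$ barrier), run on this quadrant, gives $\nu=v$ there — exactly the coincidence you need. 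In short, the missing idea is to use the bound $v_x(\cdot,\swb(x,c))\le 1$ on $[x,+\infty)$ to certify the constant-in-$s$ extension of $v(\cdot,\swb(x,c))$ as a solution and invoke uniqueness, rather than to re-enter the approximation scheme.
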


\begin{proof}
Since $(x,c)\in\SS$, we have $\os:=\swb(x,c)>c$. If $\os<\cc$, then $(x,\os)\in\NS$ and thanks to \eqref{-Lv=0},
$$-\LL v(\cdot ,\os)-\os\TT v(\cdot ,\os)=0~\hbox{in } (0,x).$$
If $\os=\cc$, then $v(\cdot ,\os)=g(\cdot)$ and $g$ satisfies \eqref{Lg}, so the above equation also holds.
So by \eqref{v_pb},
\begin{align}\label{vxx-vxx}
	\frac{\si^2}{2}(v_{xx}(\cdot ,c)-v_{xx}(\cdot ,\os))
	&\leq -\mu (v_x (\cdot ,\os)-v_x (\cdot ,\os))+r(v(\cdot ,c)-v(\cdot ,\os))\nonumber\\
	&\quad\;-\os(\TT v(\cdot ,c)-\TT v(\cdot ,\os))-(c-\os)\TT v(\cdot ,\os)
	~\hbox{a.e. in } (0,x).
\end{align}
Since $v(\cdot,c)-v(\cdot,\os)$ attains its minimum value 0 at $x$, it gives
$v(x ,c)=v(x ,\os)$ and $v_x(x ,c)=v_x(x ,\os).$
Suppose $v_x(x ,c)>1$. Then the RHS of \eqref{vxx-vxx} is negative at $x$. Since it is continuous at $x$, we have $v_{xx}(\cdot ,c)-v_{xx}(\cdot ,\os)<0$ a.e. in $(x-\ep,x)$ for some small $\ep>0$.
It then follows $v_x(\cdot ,c)-v_x(\cdot ,\os)>0$ in $(x-\ep,x)$, which
together with $v(x ,c)=v(x ,\os)$ yields $v(\cdot ,c)<v(\cdot ,\os)$ in $(x-\ep,x)$,
contradicting that $v$ is non-increasing in $c$. Therefore, we conclude \eqref{vx<=1}.

By \eqref{vxx} and \eqref{vx<=1} we have $v_x(y,\os)\leq 1$ if $y\geq x$ and thus $\TT v(y,\os)\geq 0$ if $y\geq x .$
Therefore, the function $\nu$, defined by
\begin{align*}
	\nu(y,s):=v(y,\os),~~ (y,s)\in [x ,+\infty)\times[\minswb(x,c),\os],
\end{align*}
satisfies
$$-\LL \nu(y,s)-s \TT \nu(y,s) \geq -\LL v(y,\os)-\os \TT v(y,\os)\geq 0.$$
Since $\nu_s=0$, we have
\begin{align*}
	\begin{cases}
		\min\{-\LL \nu-s \TT \nu , \; -\nu_s\}=0, & (y,s)\in [x ,+\infty)\times[\minswb(x,c),\os],\smallskip\\
		\nu(x ,s)=v(x ,s), & s\in [\minswb(x,c),\os],\smallskip\\
		\nu(y,\os)=v(y,\os), & y\geq x.
	\end{cases}
\end{align*}
Similar to the proof of the uniqueness of the solution of the problem \eqref{v_pb}, we can prove that the solution of the above problem is unique, so we have $\nu=v$ in $[x ,+\infty)\times[\minswb(x,c),\os]$.

The estimate \eqref{vxxc} is a consequence of \eqref{vx<=1} and the continuity of $v_{x}$.
The last two claims come from the fact that $(x,c)\in\SS$ implies $(y,c)\in\SS$ for any $y\geq x$.
\end{proof}

\begin{lemma}\label{lemma:xc}
We have
$\sw(c)>0$ for all $c\in[0,\cc).$
\end{lemma}
\begin{proof}
For $x>0$ and $c\in[0,\cc)$, by monotonicity and $v(0,c)=g(0)=0$, we have
\[v(x,c)-v(0,c)=v(x,c)-g(0)\geq v(x,\cc)-g(0)=g(x)-g(0).\]
It hence follows
\begin{align}\label{vx0>1}
	v_x(0,c)\geq g'(0)>g'(y_0)=1
\end{align}
by recalling that $\cc \ga > r$.
Comparing to \eqref{vxxc}, we conclude that $\sw(c)>0$.
\end{proof}

To prove the continuity of $\sw(\cdot)$, we need the following estimate.

\begin{lemma}\label{lem:ucL}
There exists a constant $B>0$ such that, when $$0\leq c_*<c_*+\Dc_*\leq c^*-\Dc^*<c^*\leq \cc,$$ it holds that 
\begin{align}\label{ucL}
	\frac{v(x,c^*-\Dc^*)-v(x,c^*)}{\Dc^*}\leq \frac{v(x,c_*)-v(x,c_*+\Dc_*)}{\Dc_*}+B(c^*-c_*).
\end{align} 
\end{lemma}
\begin{proof}
Suppose that $\{v^{n_k}\}$ is the sequence constructed in \secref{sec:solvability} and it converges to $v$. Moreover, let $x^{n_k}_i$ denote the free boundary point corresponding to $v^{n_k}_i$, where $i = 1,2,\cdots,n_k$.

We first prove that there exists a constant $a > 0$, which is independent of both $k$ and $i$, such that
\begin{align}\label{xnk}
	x^{n_k}_i\geq a.
\end{align}
Conversely, assume that there exists a subsequence $\{y_m\}$, where $y_m := x^{n_{k_m}}_{i_m}$, which converges to $0$ as $m \to \infty$. By virtue of \eqref{v'jxj<=1}, we are aware that $(v^{n_{k_m}}_{i_m})'(y_m)\leq 1$.
Denote $s_m = c^{n_{k_m}}_{i_m}$. Given that $s_m$ is bounded within the interval $[0,\cc]$, there must exist an $s_0 \in [0,\cc]$ and a subsequence of $\{s_m\}$ (which we still denote by $\{s_m\}$ for simplicity) such that $s_m \to s_0$. Thanks to \eqref{vi_c1}, there exists a subsequence of the functions $v^{n_{k_m}}_{i_m}(\cdot,s_m)$ that converges to $v(\cdot,s_0)$ in the space $C^1[0,L]$. Consequently, we can infer that $v_x(0,s_0)\leq 1$, which stands in contradiction to \eqref{vx0>1}.

Denote $$u^{n_k}_i=\frac{v^{n_k}_i-v^{n_k}_{i-1}}{\Dc^{n_k}}, \quad i=1,\cdots, n_k,$$ and suppose
$$c_*\in [c^{n_k}_{i+j+M+1},c^{n_k}_{i+j+M}],\; c_*+\Dc_*\in [c^{n_k}_{i+j},c^{n_k}_{i+j-1}],\; c^*-\Dc^*\in [c^{n_k}_{i},c^{n_k}_{i-1}],
\; c^*\in [c^{n_k}_{i-N+1},c^{n_k}_{i-N}],$$
for some $i,j, M, N>0$.
Since the condition of Lemma \ref{lem:uci_b} is satisfied by virtue of \eqref{xnk}, we are able to apply Lemma \ref{lem:uci_b} and obtain
\begin{align*}
	u^{n_k}_{i-l}\leq u^{n_k}_{i+j+m+1}+(l+j+m+1) B \Dc^{n_k}
\end{align*}
for $l, m\geq 0$, so
\begin{align*}
	\frac{1}{N}\sum_{l=0}^{N-1} u^{n_k}_{i-l}\leq \frac{1}{M}\sum_{m=0}^{M-1} u^{n_k}_{i+j+m+1}+(N+M+j+1) B \Dc^{n_k},
\end{align*}
i.e.
\begin{align*}
	\frac{v^{n_k}_{i}-v^{n_k}_{i-N}}{N \Dc^{n_k}}\leq \frac{v^{n_k}_{i+j+M}-v^{n_k}_{i+j}}{M \Dc^{n_k}}+(N+M+j+1) B \Dc^{n_k},.
\end{align*}
Since $v^{n_k}(x,c)$ is non-increasing w.r.t. $c$, we have
\begin{align*}
	&\frac{v^{n_k}(x,c^*-\Dc^*)-v^{n_k}(x,c^*)}{\Dc^*}
	\leq\frac{v^{n_k}(x,c^{n_k}_{i})-v^{n_k}(x,c^{n_k}_{i-N})}{(N-2) \Dc^{n_k}}\\[3mm]
	\leq& \frac{N}{N-2}\Bigg[\frac{v^{n_k}(x,c^{n_k}_{i+j+M})-v^{n_k}(x, c^{n_k}_{i+j})}{M \Dc^{n_k}}+(N+M+j+1) B \Dc^{n_k}\Bigg]\\[3mm]
	\leq& \frac{N}{N-2}\Bigg[\frac{M+2}{M}\cdot\frac{v^{n_k}(x,c_*)-v^{n_k}(x, c_*+\Dc_*)}{\Dc_*}+\frac{N+M+j+1}{N+M+j-1}B (c^*-c_*)\Bigg].
\end{align*}
Let $k$ approach infinity, and observe that $N$, $M$, $i$, and $j$ also tend to infinity simultaneously, we are then able to derive \eqref{ucL}.
\end{proof}

\begin{lemma}\label{lemma:xcc}
The switching boundary $\sw(\cdot)$ is continuous in $[0,\cc)$.
\end{lemma}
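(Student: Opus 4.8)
The plan is to establish upper and lower semicontinuity of $\sw(\cdot)$ separately, leaning on the uniform Lipschitz-type estimate \eqref{ucL} from \lemref{lem:ucL} together with the structural facts already proved: $v$ is continuous in $\Q$, $v_x$ is continuous in $\Q$ (from \thmref{thm:u}), $v$ is non-increasing in $c$, and the separation property from \lemref{lem:N} that $\{x>\sw(c)\}\subseteq\SS\subseteq\{x\geq\sw(c)\}$, so that for each fixed $c$ the set $\SS\cap(\R^+\times\{c\})$ is (up to one boundary point) a half-line. The key reformulation is that $x\in\SS(c)$ (interior) is essentially equivalent to $v_c(x,c)=0$ a.e., i.e. to $v(x,\cdot)$ being locally flat at $c$ to the right; I will work with the difference quotients $\frac{v(x,c)-v(x,c+\Delta)}{\Delta}$, which are controlled uniformly in $c$ and monotone in an appropriate sense thanks to \eqref{ucL}.

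First I would prove \emph{upper semicontinuity}: fix $c_0\in[0,\cc)$ and take $c_k\to c_0$. If $\limsup_k\sw(c_k)=\infty$ there is nothing to reconcile with a finite bound only if we also know $\sw$ is bounded, so I would simultaneously note that boundedness of $\sw(\cdot)$ follows from comparing with the upper barrier $\ov$ of \lemref{lem:vix_ub}: since $v(\cdot,c)\leq \ov$ and $v'_x\leq 1$ beyond a fixed level forces $(x,c)\in\Dmax$, and then a comparison/obstacle argument pushes $(x,c)$ into $\SS$ for $x$ past a universal constant; hence $\sw(c)\leq X_0$ for all $c$. Given boundedness, pass to a subsequence with $\sw(c_k)\to \ell$. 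For any $x>\ell$ we have $x>\sw(c_k)$ for large $k$, so $(x,c_k)\in\SS$, i.e. $v(x,c_k)=v(x,s_k)$ for some $s_k>c_k$; by continuity of $v$ and a subsequential limit $s_k\to s_0\geq c_0$, either $s_0>c_0$ giving $(x,c_0)\in\SS$ directly, or $s_0=c_0$, in which case I use \eqref{ucL} with $c_*=c_k$, $c^*=s_k$ to force $v_c(x,\cdot)=0$ near $c_0$ and conclude $(x,c_0)\in\SS$ anyway. Thus $(x,c_0)\in\SS$ for all $x>\ell$, which by definition of $\sw$ gives $\sw(c_0)\leq\ell=\limsup_k\sw(c_k)$.

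Next, \emph{lower semicontinuity}: again fix $c_0$, $c_k\to c_0$, and suppose toward contradiction that $\liminf_k\sw(c_k)=:\ell<\sw(c_0)$. Pick $x$ with $\ell<x<\sw(c_0)$; then along a subsequence $x>\sw(c_k)$, so $(x,c_k)\in\SS$ and $v(x,c_k)=v(x,\swb(x,c_k))$ with $\swb(x,c_k)>c_k$. The worry is that $\swb(x,c_k)\downarrow c_0$, so I would use \eqref{ucL} to get a quantitative flatness: the right difference quotient of $v(x,\cdot)$ at $c_k$ vanishes, hence by \eqref{ucL} the right difference quotient at $c_0$ is $\leq B(c_k-c_0)\to0$, giving $v_c(x,c_0)=0$ to the right of $c_0$; combined with the non-switching characterization \eqref{-Lv=0} this forces $(x,c_0)\in\SS$, contradicting $x<\sw(c_0)$ via \lemref{lem:N}'s inclusion $\NS\supseteq\{x<\sw(c)\}$ — more precisely $(x,c_0)\notin\NS$ yet $x<\sw(c_0)$ would still be consistent only if $x=\sw(c_0)$, which it is not. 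The main obstacle is exactly this right-continuity-of-the-flat-region argument at a limiting level where $\swb(x,c_k)$ may collapse onto $c_0$; handling it cleanly is where \eqref{ucL} is indispensable, because it converts pointwise flatness of $v(x,\cdot)$ at the approximating levels into flatness at the limit with an error vanishing like $B(c_k-c_0)$. Finally, upper plus lower semicontinuity on $[0,\cc)$ yields continuity of $\sw(\cdot)$, completing the proof.
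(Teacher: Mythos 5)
Your outline correctly reduces the problem to the genuinely hard case, namely when the flat intervals $[c_k,\swb(x,c_k)]$ collapse onto the limiting level $c_0$, but the step you propose to resolve that case does not work. From \lemref{lem:ucL} applied with the lower interval $[c_k,\swb(x,c_k)]$ (where the difference quotient vanishes) and an upper interval $[s,s']$ above it, the best you can extract in the limit is a bound of the form $0\leq v(x,c_0)-v(x,s)\leq B\,(s-c_0)^2$ for $s>c_0$, i.e.\ the one-sided derivative $v_c(x,c_0+)=0$ with quadratic decay. This is strictly weaker than membership in $\SS$, which by definition requires the \emph{exact} equality $v(x,c_0)=v(x,s)$ for some $s\in(c_0,\cc]$; a strictly decreasing $c\mapsto v(x,c)$ with quadratic flatness at $c_0$ is perfectly consistent with everything you have derived. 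The phrase ``combined with the non-switching characterization \eqref{-Lv=0} this forces $(x,c_0)\in\SS$'' is where the proof is missing: \eqref{-Lv=0} is a property of points already known to be in $\NS$, and you do not explain how a pointwise vanishing $c$-derivative upgrades to actual switching. (Note also that \eqref{ucL} only propagates flatness from lower to higher $c$ with an error $B(c^*-c_*)$ that does not vanish for a fixed upper interval, so the direction of your application matters and cannot give flatness on a fixed interval above $c_0$.)

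The paper closes exactly this gap with a different mechanism that your plan never invokes. It takes a sequence of non-switching levels $s_n\downarrow c$ (where $\sw(s_n)$ stays above some $x^*$, so $-\LL v(\cdot,s_n)-s_n\TT v(\cdot,s_n)=0$ on $[0,x^*]$), forms the difference quotients $u^n=\frac{v(\cdot,s_n)-v(\cdot,s_{n-1})}{s_{n-1}-s_n}$, which satisfy a linear elliptic equation with right-hand side $-\TT v(\cdot,s_n)$, obtains uniform $W^2_p[0,x^*]$ bounds, and uses \eqref{ucL} together with the flatness at nearby levels to show $u^n\to 0$ on $(x_*,x^*)$; passing to the limit in the equation then yields $\TT v(\cdot,c_0)=0$, i.e.\ $v_x(\cdot,c_0)\equiv 1$ on an interval, and substituting $v_x\equiv 1$ into $-\LL v-c_0\TT v=0$ forces $v\equiv\mu/r$, hence $v_x\equiv 0$, a contradiction. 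Without this (or some substitute) argument your semicontinuity scheme does not go through. Two smaller points: your ``upper/lower semicontinuity'' labels are swapped relative to what you actually prove, and your claimed shortcut for boundedness of $\sw(\cdot)$ via the barrier $\ov$ is unsubstantiated ($v_x\leq 1$ does not imply membership in $\SS$); in the paper boundedness is a separate lemma proved \emph{after} continuity and is not needed for the continuity proof itself.
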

\begin{proof}
We first prove the right limit exits at any point $c\in[0,\cc)$, namely
\begin{align}\label{x_lim}
	\liminf\limits_{s\to c+}\sw(s)=\limsup\limits_{s\to c+}\sw(s).
\end{align}
Suppose, on the contrary, there exist $c\in[0,\cc)$, $x_*$ and $x^*$ such that
$$\liminf\limits_{s\to c+}\sw(s)<x_*<x^*<\limsup\limits_{s\to c+}\sw(s).$$
Then by definition, there are two sequences $c_n\to c+$ and $\Dc_n\to 0+$ such that
\begin{align}\label{cn}
	v(x,c_n+\Dc_n)=v(x,c_n),~~ x\geq x_*,
\end{align}
and a decreasing sequence $s_1>s_2>\cdots>s_n\to c$ such that
$$-\LL v(x,s_n)-s_n \TT v(x,s_n)=0,~~ x< x^*.$$

Let $$u^n(x)=\frac{v(x,s_n)-v(x,s_{n-1})}{s_{n-1}-s_n}\geq0, $$ then it satisfies
\begin{align}\label{uDsn}
	-\LL u^n+s_{n-1} [b+(1-b) h_n] u^n_x=-\TT v(\cdot,s_n),~~ x< x^*,
\end{align}
where $$h_n=H(v_x(x,s_n),v_x(x,s_{n-1}))\in[0,1].$$
Due to \eqref{vL} and \eqref{vc}, $\TT v(\cdot,s_n)$ and $u^n$ are bounded and independent of $n$, by the $L_p$ estimation we know $u^n$ is uniformly bounded in $W^2_p[0,x^*]$.
On the other hand, since $c_m +\Dc_m \to c$ and $s_n>c$, we can pick $m$ sufficiently large to make $c_m +\Dc_m< s_n$. Using \lemref{lem:ucL} and \eqref{cn} we have
$$0\leq u^n(x)\leq \frac{v(x,c_m)-v(x,c_m+\Dc_m)}{\Dc_m} + B(s_{n-1}-c) = B(s_{n-1}-c),~~ x>x_*.$$
So we have $u^n(x) \to 0,$ $x>x_*.$
so it has a subsequence converges to $0$ weekly in $W^2_p[0,x^*]$ and uniformly in $C^{1+\al}[0,x^*]$.
Moreover, note that $v(\cdot,s_n)$ is bounded in $W^2_p[0,x^*]$ and converges to $v(\cdot,c)$ in $C[0,x^*]$, so it has a subsequence converges to $v(\cdot,c)$ in $C^{1+\al}[0,x^*]$.

Letting $n\to \infty$ in \eqref{uDsn}, we have $\TT v(\cdot,c)=0$ in $(x_*,x^*),$ i.e. $v_x(\cdot,c)=1$ in $(x_*,x^*).$
However, substituting $v_x(\cdot,c)=1$ into $-\LL v(x,c)-c \TT v(x,c)=0$ we conclude $v(\cdot,c)=\mu/r$ and thus $v_x(\cdot,c)=0$ in $(x_*,x^*),$ leading to a contradiction.

In a similar way, we can prove $\liminf\limits_{s\to c-}\sw(s)=\limsup\limits_{s\to c-}\sw(s)$ and $\liminf\limits_{s\to c-}\sw(s)\geq \limsup\limits_{s\to c+}\sw(s) $
so that
$$\lim\limits_{s\to c-}\sw(s)\geq \lim\limits_{s\to c+}\sw(s).$$

We come to prove
$$\lim\limits_{s\to c-}\sw(s)\leq \lim\limits_{s\to c+}\sw(s).$$
Suppose, on the contrary, there are $x_*$ and $x^*$ such that
$$\lim\limits_{s\to c+}\sw(s)<x_*<x^*<\lim\limits_{s\to c-}\sw(s).$$
Then there is $\ep>0$ such that $v(\cdot,c+\ep)=v(\cdot,c)$ in $(x_*,x^*),$ note that by \eqref{vx<=1} we have $\TT v(\cdot,c)\geq 0$ in $(x_*,x^*)$, so
$$0\leq -\LL v(\cdot,c+\ep)-(c+\ep)\TT v(\cdot,c+\ep)\leq -\LL v(\cdot,c)-c\TT v(\cdot,c)=0,$$
then we have $\TT v(\cdot,c)=0$ i.e. $v_x(\cdot,c)=1$ in $(x_*,x^*).$ Similar to above, it will lead to a contradiction.

It is left to prove
$\lim\limits_{s\to c}\sw(s)=\sw(c).$
For each $x>\sw(c),$ by the definition of $\sw(\cdot)$ and \lemref{lem:N}, there is $\os>c$ such that $v(y,s)=v(y,c)$, for all $(y,s)\in [x,+\infty)\times[c,\os],$ i.e. $[x,+\infty)\times[c,\os)\subseteq\SS$, then we have $\sw(s)\leq x$ for all $s\in [c,\os)$. So $\lim\limits_{s\to c+}\sw(s)\leq x$, which implies $\lim\limits_{s\to c+}\sw(s)\leq \sw(c)$.

On the other hand, for each $x>\lim\limits_{s\to c+}\sw(s),$ there is $\os>c$ such that $x>\sw(s)$ for all $s\in (c,\os],$ i.e. $[x,+\infty)\times(c,\os]\subseteq\SS$, so $v(x,s)=v(x,\os)$ for all $ s\in (c,\os]$. By the continuity of $v$, we have $v(x,c)=v(x,\os),$ which implies $\sw(c)\leq x,$ and thus, $\sw(c)\leq\lim\limits_{s\to c+}\sw(s).$
\end{proof}

\begin{lemma}\label{lemma:xcb}
The switching boundary $\sw(\cdot)$ is bounded in $[0,\cc)$.
\end{lemma}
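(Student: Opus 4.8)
The plan is a proof by contradiction built around the ODE satisfied by $v(\cdot,c)$ for large $x$. Two preliminary observations will be used. First, the converting boundary is bounded: since $v_x(\cdot,c)>1$ on $(0,\sd(c))$ one has $v(\sd(c),c)>\sd(c)$, and $v\le\cc/r$ by \eqref{v} then forces $\sd(c)<\cc/r$ for every $c$; moreover $\sd(c)\in(0,\cc/r)$ and $v_x(\sd(c),c)=1$ follow from \eqref{vx0>1}, the definition of $\sd$, and the continuity of $v_x$. Second, on $[\sd(c),\sw(c))$ one has $v_x(\cdot,c)\le1$ (apply \eqref{vxx} with base point $\sd(c)$) and $(x,c)\in\NS$ by \lemref{lem:N}, so $\TT v(\cdot,c)=1-v_x(\cdot,c)$ and \eqref{-Lv=0} reduces the HJB equation to the linear ODE $\tfrac{\si^2}{2}v''+(\mu-c)v'-rv+c=0$ on that interval; hence
\[
v(x,c)=\tfrac{c}{r}+\mathcal A\,e^{\alpha_c(x-\sd(c))}+\mathcal B\,e^{-\ga_c(x-\sd(c))},\qquad x\in[\sd(c),\sw(c)),
\]
with $\alpha_c,\ga_c>0$ the two characteristic roots and $v_x(\sd(c),c)=1$ pinning one linear relation between $\mathcal A$ and $\mathcal B$.

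The soft half of the argument shows $\sw(c)<\infty$ for every $c<\cc$, and more generally that $\sw(c_n)\to\infty$ with $c_n\to c_*<\cc$ is impossible: in either case the ODE holds on a half-line $[\sd(c_*),\infty)$ in the limit (using the uniform estimates of \thmref{thm:u} and the Arzela--Ascoli theorem), boundedness of $v(\cdot,c_*)$ forces the growing-mode coefficient to vanish, so $v(x,c_*)\to c_*/r<\cc/r$ as $x\to\infty$, contradicting $v(\cdot,c_*)\ge g$ together with $g(x)\to\cc/r$. Since $\sw$ is continuous on $[0,\cc)$ by \lemref{lemma:xcc} and finite at each point, this already bounds $\sw$ on every compact $[0,\cc-\delta]$, so it only remains to prove $\limsup_{c\to\cc-}\sw(c)<\infty$.

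This last point is the crux, and the soft argument above fails for it: if $c_n\uparrow\cc$ with $\sw(c_n)\to\infty$, then from $0\le v(\cdot,c_n)\le\cc/r$ on $[\sd(c_n),\sw(c_n))$ one only extracts $\mathcal A_n\ne0$ and $\sw(c_n)-\sd(c_n)\le\alpha_{c_n}^{-1}\log(C/|\mathcal A_n|)$, while the limiting profile $v(\cdot,c_n)\to g$, $\sd(c_n)\to y_0$, $\mathcal A_n\to0$, $\mathcal B_n\to-1/\ga$ is perfectly consistent with $\sw(c_n)\to\infty$. Closing the gap requires a quantitative estimate that is uniform in $c$: either (i) express $\mathcal A$ through $v(\sd(c),c)$ and $\sd(c)$ via $v_x(\sd(c),c)=1$, bound $|v(\sd(c),c)-g(y_0)|$ and $|\sd(c)-y_0|$ by $O(\cc-c)$ using \lemref{lem:ucL} (which yields $0\le v(\cdot,c)-g\le K'(\cc-c)$) and the characteristic-root identities, and feed the resulting rate back into the bound on $\sw(c)$; or, cleaner, (ii) work at the level of the regime-switching system and show that the free boundaries of \lemref{lem:vi} obey $\sup_{0\le i\le n}x_i\le X^*$ for a constant $X^*$ independent of $n$ --- whence $v_i\equiv g$ on $[X^*,\infty)$ because $v_i=v_{i-1}$ and $v_i'\le1$ there --- and pass to the limit to get $v(x,c)=g(x)$ for all $x\ge X^*$ and all $c$, i.e. $\sw(c)\le X^*$. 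I expect establishing this uniform detachment of $v(\cdot,c)$ from the boundary value $g$ to be the main difficulty of the lemma, since it is exactly where the degeneracy at $c=\cc$ prevents the ODE/compactness machinery from closing the argument by itself.
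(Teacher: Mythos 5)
Your reduction of the problem to the regime $c\to\cc-$ is sound, and you correctly diagnose that a purely qualitative compactness/ODE argument cannot close that case --- but this is exactly where your proposal stops: neither of your two suggested routes is carried out, so the crucial case (and hence the lemma) is left unproved. Moreover, route (i) as sketched cannot work in the form you state it: the best one can hope to extract from \lemref{lem:ucL} is a detachment rate of order $\cc-c$ for the growing-mode coefficient $\mathcal{A}$, and plugging $|\mathcal{A}|\gtrsim \cc-c$ into your own estimate $\sw(c)-\sd(c)\leq \alpha_c^{-1}\log(C/|\mathcal{A}|)$ only yields $\sw(c)=O\big(\log\frac{1}{\cc-c}\big)$, which still diverges as $c\to\cc-$; what is needed is not a rate but a different mechanism. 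Route (ii), a bound $\sup_{i}x_i\leq X^*$ uniform in $n$ for the free boundaries of \lemref{lem:vi}, is essentially a restatement of the lemma itself (the same degeneracy at $c_i$ near $\cc$ reappears there), and you give no argument for it.

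The paper closes precisely this gap with a normalized difference quotient: it sets $u^n(x)=\frac{v(x,c_n)-g(x)}{\cc-c_n}\geq 0$, which by \eqref{vc} is bounded uniformly in $n$, and which on $[0,\sw(c_n)]$ satisfies the linear equation $-\LL u^n+c_n[b+(1-b)h_n]u^n_x=-\TT g$ with $h_n=H(g',v_x(\cdot,c_n))$. If $\sw(c_n)\to+\infty$ (so $c_n\to\cc$ by the continuity of $\sw(\cdot)$), interior $L_p$ estimates give a subsequential limit $u\geq 0$ solving $-\LL u+\cc\,[b+(1-b)1_{\{g'\leq 1\}}]\,u_x=-\TT g$ a.e.\ on $\R^+$; since $g'(x)\to 0$ at infinity one has $\TT g\geq 2/3$ for large $x$, and the explicit supersolution $\Phi(x)=Ae^{-\ep(x-x^*)}-\frac{1}{2r}$ together with the comparison principle forces $u(+\infty)\leq -\frac{1}{2r}<0$, contradicting $u\geq 0$. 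The normalization by $\cc-c_n$ is exactly the device that survives the degeneracy $v(\cdot,c_n)\to g$ you identified; your own assessment that this uniform detachment is ``the main difficulty of the lemma'' is accurate, but it means your proposal contains a genuine gap rather than a proof.
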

\begin{proof}
Suppose, on the contrary, there is a convergence sequence $c_n\in [0,\cc)$ such that $\sw(c_n)\to+\infty$.
Because $\sw(\cdot)$ is continuous in $[0,\cc)$, we have $c_n\to \cc$.
Let
$$u^n(x)=\frac{v(x,c_n)-v(x,\cc)}{\cc-c_n}\geq 0.$$
Because $u^n(0)=0$ and \eqref{vc}, we conclude that $u^n$ is bounded and independent of $n$. Notice,
\begin{align*}
	-\LL u^n+c_n [b+(1-b) h_n] u^n_x=-\TT v(\cdot,\cc)=-\TT g(\cdot)~\hbox{in}\; [0,\sw(c_n)],
\end{align*}
where $$h_n=H(v_x(x,\cc),v_x(x,c_n))=H(g'(x),v_x(x,c_n))\in[0,1].$$
By the $L_p$ estimation and $\sw(c_n)\to+\infty$, we know $u^n$ is bounded in $W^2_p[0,L]$ for any $L>0$, so it has a subsequence converges to some $u\in W^2_p[0,L]$ weekly in $W^2_p[0,L]$ and uniformly in $C^{1+\al}[0,L]$.
Noting that
$h_n(x)\to h(x):=1_{\{g'(x)\leq 1\}}~\hbox{if}\;g'(x)\neq 1,$
so $u(x)$ satisfies
$-\LL u+\cc [b+(1-b) h] u_x=-\TT g(\cdot)~\hbox{a.e. in}\; \R^+.$
Since
$\lim\limits_{x\to+\infty}g'(x)=0,$ we have $\lim\limits_{x\to+\infty} \TT g(x)=1$. Thus,
there is $x^*>0$ large enough such that $\TT v(x,\cc)=\TT g(x)\geq 2/3$ for $x\geq x^*$. Let
$$\Phi(x)=A e^{-\ep (x-x^*)}-\frac{1}{2r},~~ x\geq x^*,$$
where $A=|u(x^*)|+\frac{1}{2r}$ and $\ep>0$ is sufficiently small such that
$-\ep^2+\big(\mu-\cc [b+(1-b) h]\big) \ep+r\geq 0.$
Then
\begin{align*}
	-\LL \Phi+\cc [b+(1-b) h] \Phi_x
	=&A\Big(-\ep^2+\big(\mu-\cc [b+(1-b) h]\big)\ep+r\Big) e^{-\ep (x-x^*)}-\frac{1}{2}\\[3mm]
	\geq &-\frac{1}{2}> -\TT v(x,\cc),~~ x\geq x^*.
\end{align*}
Since $\Phi(x^*)\geq u(x^*),$ applying the comparison principle yields $\Phi(x)\geq u(x)$ for $x\geq x^*$. It thus follows that $u(+\infty)\leq \Phi(+\infty)=-1/(2r).$ However, since $u^n\geq 0$, $u\geq 0$, leading to a contradiction.
\end{proof}

Similar to the proof of \lemref{lemma:xcc} we can prove $\liminf\limits_{c\to \cc-}\sw(c)=\limsup\limits_{c\to \cc-}\sw(c).$ So we can define
$\sw(\cc)=\lim\limits_{c\to \cc-}\sw(c),$
and consequently, $\sw(\cdot)\in C[0,\cc].$

We now prove \eqref{increasing}. If $\sw(\cdot)$ is strictly increasing, then for any $c_0\in [0,\cc)$, $$[0,\sw(c_0)]\times (c_0,\cc) \subseteq \{(x,c)\in \bQ \;|\; x< \sw(c)\}\subseteq\NS.$$ By the definition of $\NS$, $v(\sw(c_0),\cdot)$ is strictly decreasing in $(c_0,\cc)$, which implies $(\sw(c_0),c_0)\in \NS$. 
Conversely, if $\sw(\cdot)$ is not strictly increasing, since $\sw(\cdot)$ is continuous in $[0,\cc)$, there exists $c_0 \in [0,\cc)$ such that $\sw(c)\leq \sw(c_0)$ for all $c\in (c_0,c_0+\ep)$. Then $$(\sw(c_0),\infty)\times (c_0,c_0+\ep) \subseteq \{(x,c)\in \bQ \;|\; x> \sw(c)\}\subseteq \SS,$$ which implies $v(x,c_0)=v(x,c_0+\ep)$ for all $x>\sw(c_0)$. By the continuity of $v$, it leads to $v(\sw(c_0),c_0)=v(\sw(c_0),c_0+\ep)$, namely $(\sw(c_0),c_0) \in \SS$. So $\big\{(x,c)\in \bQ \;\big|\; x=\sw(c)\big\} \nsubseteq \NS$. Hence, we proved \eqref{increasing}.

We proceed to prove \eqref{decreasing}. If $\sw(\cdot)$ is non-increasing, then for any $c_0\in [0,\cc)$, $$(\sw(c_0),+\infty)\times [c_0,\cc) \subseteq \{(x,c)\in \bQ \;|\; x> \sw(c)\}\subseteq\SS.$$ Then $v(x,c_0)=v(x,\cc)$ for all $x>\sw(c_0)$. By the continuity of $v$, it leads to $v(\sw(c_0),c_0)=v(\sw(c_0),\cc)$, which implies $(\sw(c_0),c_0) \in \SS$. So $$\big\{(x,c)\in \bQ \;\big|\; x=\sw(c)\big\} \subseteq\SS.$$
Conversely, if $$\big\{(x,c)\in \bQ \;\big|\; x=\sw(c)\big\} \subseteq\SS,$$ then for any $c_0\in [0,\cc)$, there exists $s>c_0$ such that $v(\sw(c_0),s)=v(\sw(c_0),c_0)$. By the definition of $\sw(\cdot)$, we have $\sw(c)\leq \sw(c_0)$ for all $c\in (c_0,s)$. Since this $c_0\in [0,\cc)$ is arbitrary, $\sw(\cdot)$ is non-increasing. Therefore, we have proven \eqref{decreasing}. 

Now \propref{prop:characterizationx} is a consequence of Lemmas \ref{lem:N} $-$ \ref{lemma:xcb} and above. 

\section{Proof of \propref{prop:xi+}}\label{sec:proofofprop:xi+}
We now prove \propref{prop:xi+} in this section.

First, the monotonicity of $x\mapsto\swb(x,c)$ and $x\mapsto\minswb(x,c)$ follows from \lemref{lem:N} immediately.

We next prove \eqref{valueatboundary}.
\begin{itemize}
\item
If $\sw\big(\swb(x,c)\big)<x$, then by \propref{prop:characterizationx},
$\big(x,\swb(x,c)\big)\in\SS$. On the other hand,
it follows from definition that $\big(x,\swb(x,c)\big)\in\NS$, leading to a contradiction.
\item
If $\sw\big(\swb(x,c)\big)>x$, then by the continuity of $\sw(\cdot)$, we have $\sw(s)>x$ for any $s\in (\swb(x,c)-\ep, \swb(x,c))$, where $\ep$ is a sufficiently small positive constant.
Consequently, by \propref{prop:characterizationx}, $(x,s)\in\NS$. On the other hand, since $\minswb(x,c)<\swb(x,c)$, we have $v(x,s)=v\big(x,\swb(x,c)\big)$ for any $s\in(\minswb(x,c),\swb(x,c))$, so $(x,s)\in\SS$, leading to a contradiction.
\end{itemize}
Therefore, we have $\sw\big(\swb(x,c)\big)=x$, namely \eqref{valueatboundary} holds.

We now establish \eqref{lvatdecreasingpoint}.
If $\swb(x,c)=\cc$, then $v\big(\cdot,\swb(x,c)\big)=g(\cdot)$, so \eqref{lvatdecreasingpoint} follows from \eqref{Lg}. If $\swb(x,c)<\cc$, then
$\big(x,\swb(x,c)\big)\in \NS$, so the equation \eqref{lvatdecreasingpoint} follows from \eqref{-Lv=0}.

To prove \eqref{lvatincreasingpoint}, we suppose $s:=\minswb(x,c)>0$ and let $a_{n}\to s-$. If $s\leq \swb(x, a_{n})$ for some $n$, then $s\in (a_{n},\swb(x, a_{n})]\subseteq (\minswb(x, a_{n}),\swb(x, a_{n})]$, so that $v(x,s)=v(x, a_{n})$, which implies $\minswb(x,s)\leq a_{n}<s$.
But this contradicts that $s=\minswb(x,c)=\minswb(x,s)$. Hence, $a_{n}\leq \swb(x, a_{n})<s$, which implies $\swb(x, a_{n})\to s$.
Since $\big(x,\swb(x, a_{n})\big)\in\NS$, we obtain from \eqref{-Lv=0} that
\begin{align*}
-\LL v(y, \swb(x, a_{n}))-\swb(x, a_{n})\TT v(y, \swb(x, a_{n}))=0~\hbox{for all}~y\in (0,x).
\end{align*}
Thanks to the continuity of $v_{x}$, sending $n\to\infty$ in above gives the desired equation \eqref{lvatincreasingpoint}.

To show the right-continuity property \eqref{xi+}, since $\swb(x,c)$ is non-decreasing w.r.t. $x$, it suffices to prove $$\swb(x,c)\geq c^*:=\limsup\limits_{y\to x+} \swb(y,c).$$
Let $y_{n}$ be a sequence such that $y_n\to x+$ and $\swb(y_n,c)\to c^*$. By definition, we have $v(y_n,\swb(y_n,c))=v(y_n,c)$, so it follows from the continuity of $v$ that $v(x,c^*)=v(x,c)$ which by definition implies $ \swb(x,c)\geq c^*,$ completing the proof of \eqref{xi+}.
Similarly, one can prove \eqref{xi-}.

We now prove \eqref{v_c=0}.
Since $v(x,s)=v(x,c)$ for all $s\in (\minswb(x,c), \swb(x,c))$, we have $v_{c}(x,s)=0$.
Now suppose $\minswb(x,c)<s:=\swb(x,c)<\cc$.
Thanks to the monotonicity of $v$ and \lemref{lem:ucL}, we have
\begin{align*}
0\leq \frac{v(x,s)-v(x,s+\Ds)}{\Ds}\leq \frac{v(x,s-\Ds)-v(x,s)}{\Ds}+2B\Ds
\end{align*}
for $0<\Ds<\min\{s-\minswb(x,c),\cc-s\}.$ Since $v(x,\xi)=v(x,c)$ for all $\xi\in [\minswb(x,c), \swb(x,c)]$, we see the right hand side in the above estimate is $2B\Ds$.
By sending $\Ds\to 0+$, we get
$v_c(x,s)=0$, completing the proof of \eqref{v_c=0}.
The proof of \propref{prop:xi+} is complete.

\section{Proof of \propref{prop:characterizationy}}\label{profprop:characterizationy} 
We next study the properties of the converting boundary, which is defined as
\begin{align*}
\sd(c)=\inf\Big\{x\in\R^{+}\;\Big|\; v_x(x,c)\leq 1\Big\},~~ c\in[0,\cc].
\end{align*}

From the above definition we know $v_x(x,c)> 1$ when $x<\sd(c)$. By the continuity of $v_{x}$ and \eqref{vxx}, we have $v_x(x,c)\leq 1$ when $x\geq\sd(c)$, so $\sd(c)$ is the free boundary separating the two regions $\{v_x > 1\}$ and $\{v_x \leq 1\}$, i.e.,
$$\{v_x > 1\}=\{x<\sd(c)\},~~\{v_x \leq 1\}=\{x\geq \sd(c)\}.$$

It is not hard to see \propref{prop:characterizationy} will be a consequence of \lemref{lem:yc} and \lemref{lem:y<x}.

\begin{lemma}\label{lem:yc}
The converting boundary $\sd(\cdot)$ is continuous and positive in $[0,\cc]$ with $\sd(\cc)=y_{0}$.
\end{lemma}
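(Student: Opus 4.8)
The plan is to treat finiteness, positivity and the endpoint value first, and then split the continuity claim into lower and upper semicontinuity. For finiteness I would argue by contradiction: if $v_x(\cdot,c)>1$ held on all of $\R^+$, then $v(x,c)=v(0,c)+\int_0^x v_x(t,c)\,\dt\ge x\to\infty$, contradicting the bound $v\le\cc/r$ in \eqref{v}; hence $\{x:v_x(x,c)\le 1\}$ is nonempty and $\sd(c)<\infty$ for every $c$. Positivity follows from \eqref{vx0>1} (which, through $\cc\ga>r$, also gives $g'(0)>1$, so it covers $c=\cc$ as well): $v_x(0,c)\ge g'(0)>1$ for every $c\in[0,\cc]$, and since $v_x$ is continuous on $\Q$ by \thmref{thm:u}, $v_x>1$ on a neighbourhood of $x=0$, so $\sd(c)>0$. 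For the endpoint, $v(\cdot,\cc)=g(\cdot)$ gives $v_x(\cdot,\cc)=g'$, and \lemref{lem:g} tells us $g'$ is strictly decreasing with $g'(y_0)=1$; hence $g'>1$ on $[0,y_0)$ and $g'<1$ on $(y_0,\infty)$, so $\sd(\cc)=y_0$.

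For lower semicontinuity I would show $\liminf_{c'\to c}\sd(c')\ge\sd(c)$. If some sequence $c_n\to c$ had $\sd(c_n)\to\ell<\sd(c)$, I pick $x\in(\ell,\sd(c))$; for large $n$, $\sd(c_n)<x$, and since $v_x(\cdot,c_n)$ is continuous the set $\{v_x(\cdot,c_n)\le1\}$ is closed and contains its infimum, so $v_x(\sd(c_n),c_n)\le1$. Then \eqref{vxx} (applied with the roles of $x$ and $y$ played by $\sd(c_n)$ and $x$) gives $v_x(x,c_n)\le\max\{v_x(\sd(c_n),c_n),1\}=1$; letting $n\to\infty$ and using the continuity of $v_x$ yields $v_x(x,c)\le1$, contradicting $x<\sd(c)$.

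The hard part will be upper semicontinuity. Suppose $c_n\to c$ with $\sd(c_n)\to\ell>\sd(c)$ along a subsequence. For each fixed $x\in(\sd(c),\ell)$ we have $\sd(c_n)>x$ for large $n$, hence $v_x(x,c_n)>1$; passing to the limit gives $v_x(x,c)\ge1$, and combined with $v_x(x,c)\le1$ (because $x>\sd(c)$) this forces $v_x(\cdot,c)\equiv1$ on the nondegenerate interval $(\sd(c),\ell)$. Then $v_{xx}(\cdot,c)=0$ a.e.\ on that interval, and $\TT v(\cdot,c)=b(1-1)+(1-b)(1-1)^+=0$ there, so the inequality \eqref{-Lv>=0} becomes $0\le-\LL v(\cdot,c)=rv(\cdot,c)-\mu$ a.e., i.e.\ $v(\cdot,c)\ge\mu/r$ on $(\sd(c),\ell)$. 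But $v\le\cc/r\le\mu/r$ by \eqref{v} and \eqref{crange}, so $v(\cdot,c)\equiv\mu/r$ on $(\sd(c),\ell)$, hence $v_x(\cdot,c)=0$ there, contradicting $v_x(\cdot,c)\equiv1$. This rules out the jump, so $\sd$ is upper semicontinuous; together with the lower semicontinuity above, $\sd(\cdot)$ is continuous on $[0,\cc]$. The only delicate step is ruling out $v_x\equiv1$ on an interval, and the computation above shows it reduces cleanly to the a.e.\ differential inequality from \defref{def:solution} together with the uniform bound on $v$.
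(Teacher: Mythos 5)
Your proof is correct, and its core mechanism is the same as the paper's: assume the boundary jumps, produce a nondegenerate interval on which $v_x(\cdot,c)\equiv 1$, and derive a contradiction because the HJB relation then forces $v(\cdot,c)\equiv\mu/r$ and hence $v_x\equiv 0$ there; the positivity and $\sd(\cc)=y_0$ parts also use the same ingredients (\eqref{vx0>1}, strict monotonicity of $g'$).

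The differences are organizational rather than conceptual, but worth recording. First, you split continuity into lower and upper semicontinuity measured directly against $\sd(c)$, using \eqref{vxx} together with the joint continuity of $v_x$; the paper instead shows $\liminf_{s\to c}\sd(s)=\limsup_{s\to c}\sd(s)$ and invokes the monotonicity of $y\mapsto v_x(y,s)$ below the boundary. Your version has the small advantage of making explicit that the limit equals $\sd(c)$ itself, not merely that it exists. Second, at the contradiction step the paper substitutes $v_x\equiv 1$ into the \emph{equation} $-\LL v-c\TT v=0$, whereas you only use the a.e.\ inequality \eqref{-Lv>=0} from \defref{def:solution} together with $v\le\cc/r\le\mu/r$ (i.e.\ \eqref{v} and \eqref{crange}); this is slightly more self-contained, since the inequality is guaranteed everywhere while the equation need not be known on that interval, at the cost of an additional use of the standing assumption $\cc\le\mu$ (which the paper elsewhere flags as being used only once, in \lemref{lem:vxx>=0}). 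Your separate finiteness argument ($v_x>1$ on all of $\R^+$ would force $v(x,c)\ge x$, contradicting $v\le\cc/r$) is also a clean substitute for the paper's implicit bound $\sd(c)\le\sw(c)$ via \eqref{vxxc}. No gaps.
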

\begin{proof} 
For any $c\in [0,\cc]$, let
$y_*=\liminf\limits_{s\to c}\sd(s) $ and $y^* =\limsup\limits_{s\to c}\sd(s)$.
Then by the continuity of $v_x$ we have $v_x(y_*,c)=v_x(y^*,c)=1$.
Note $v_x(y,s)> 1$ when $y<\sd(s)$, so by \eqref{vxx} we have
$y\mapsto v_x(y,s)$ is non-increasing when $y<\sd(s)$. Then the continuity of $v_x$ implies $y\mapsto v_x(y,c)$ is non-increasing in $[0,y^*]$, so $v_x(\cdot,c)= 1$ in $(y_*,y^*).$
However, substituting $v_x(\cdot,c)=1$ into $-\LL v-c \TT v=0$ we conclude $v(\cdot,c)=\mu/r$ and thus $v_x(\cdot,c)=0$ in $(y_*,y^*),$ leading to a contradiction.
Now we established the continuity of $\sd(\cdot)$.

Since $1=v_x(\sd(\cc),\cc)=g'(\sd(\cc))$, we obtain $\sd(\cc)=y_{0}>0$. 
Moreover, since $v_{x}$ is continuous, we have $v_x(\sd(c),c)=1$, so \eqref{vx0>1} implies $\sd(c)>0$ for all $c\in [0,\cc)$.
The proof is complete.
\end{proof}

The definition of $\sd(\cdot)$ and \eqref{vxxc} imply that $\sd(c)\leq \sw(c)$ for any $c\in[0,\cc]$.
The next result \lemref{lem:y<x}, further shows that these two curves do not intersect.

\begin{lemma}\label{lem:y<x}
	We have for all $c\in (0,\cc]$\footnote{In the case of $ c = 0 $, one can still establish the proof using the same approach, by extending the lower bound of $ c $ to a negative value.} 
	\begin{align}\label{y<x1}
		\sd(c)<\sw(c).
	\end{align}
\end{lemma}
\begin{proof}
	We proceed by contradiction. Suppose \eqref{y<x1} fails, i.e., there exists a $c\in (0,\cc]$ such that $\sd(c)=\sw(c)$. By the definition of $\sd(\cdot)$, this implies $v_x(x,c)> 1$ and $\TT v(x,c)<0$ for all $ x\in [0, \sw(c))$. 
	We analyze three cases:

	{\bf Case 1: $\minswb(\sw(c),c)<c$.}
	
	Denote $s=\minswb(\sw(c),c)$ and 
	$$u(x)=\frac{v(x,s)-v(x,c)}{c-s}.$$
	Note that
	$$-\LL v(x,s)-s \TT v(x,s)\geq 0=-\LL v(x,c)-c \TT v(x,c),~~ x\in [0,\sw(c)].$$
	After taking the difference, we obtain
	\begin{align*}
		-\LL u+s [b+(1-b) h] u_x\geq-\TT v(\cdot,c) > 0~\hbox{in}\;[0,\sw(c)],
	\end{align*}
	where $$h=H(v_x(x,s),v_x(x,c))\in[0,1].$$
	Since $u\geq 0$ and $u(\sw(c))=0$, the strong maximum principle and the Hopf lemma imply $u_x (\sw(c))<0$. However, since $u(x)=0$ for all $ x\geq \sw(c)$, so continuity of 
	$u_x $ forces $u_x (\sw(c))=0$, leading to a contradiction.
	
	{\bf Case 2: $c<\swb(\sw(c),c)$.}
	
	Denote $s=\swb(\sw(c),c)$ and
	$$u(x)=\frac{v(x,s)-v(x,c)}{c-s}.$$
	Since $\sw(s)=\sw(c)$, we have
	$$-\LL v(x,s)-s \TT v(x,s)=-\LL v(x,c)-c \TT v(x,c)=0,~~ x\in [0,\sw(c)].$$
	An analogous reasoning to Case 1 leads to a contradiction.

	{\bf Case 3: $\minswb(\sw(c),c)\geq c $ and $c\geq \swb(\sw(c),c)$.} 
	
	In this case $\minswb(\sw(c),c)=c=\swb(\sw(c),c)$.
	Choose $x_n\to \sw(c)+$ and let $\os_n=\swb(x_n,c)$ and $\us_n=\minswb(x_n,c)$.
	If $$\os:=\limsup\limits_{n\to\infty}\os_n>c~~\text{ or }~~\us:=\liminf\limits_{n\to\infty}\us_n<c,$$ then by the continuity of $v$, we have $v(\sw(c),\os)=v(\sw(c),c)$ or $v(\sw(c),\us)=v(\sw(c),c)$, which goes back to Case 1 or 2. 
	So we may assume $\os_n,\;\us_n\to c.$
	Let $$u^n(x)=\frac{v(x,\us_n)-v(x,\os_n)}{\os_n-\us_n}.$$ 
	Since $\us_n>0$ when $n$ is sufficiently large, by \eqref{lvatdecreasingpoint} and \eqref{lvatincreasingpoint} we have
	$$-\LL v(x,\os_n)-\os_n \TT v(x,\os_n)=-\LL v(x,\us_n)-\us_n \TT v(x,\us_n)=0,~~ x\in [0,x_n].$$
	so
	\begin{align*}
		-\LL u^n+\us_n [b+(1-b) h_n] u^n_x=-\TT v(\cdot,\os_n)~\hbox{in}\;[0,x_n],
	\end{align*}
	where $$h_n=H(v_x(x,\os_n),v_x(x,\us_n))\in[0,1].$$
	Thanks to \eqref{vL}, there exist subsequences of $v(\cdot, \os_n)$ and $v(\cdot, \us_n)$ that converge to $v(\cdot, c)$ uniformly in $C^{1+\alpha}[0, \sw(c)]$. 
	Since $v_x(\cdot, c) > 1$ in $[0, \sw(c))$, it follows that the subsequences of $h_n$ converge to 0 in $[0, \sw(c))$. 
	On the other hand, because $\TT v(\cdot, \os_n)$ and $u^n$ are bounded and independently of $n$, the $L_p$ estimate implies that $u^n$ is bounded in $W_p^2[0, \sw(c)]$. Consequently, there exists a subsequence of $u^n$ that converges to some $u$ weakly in $W_p^2[0, \sw(c)]$ and uniformly in $C^{1+\alpha}[0, \sw(c)]$. This limit $u$ satisfies 
	\begin{align*}
		-\LL u + c b u_x = -\TT v(\cdot, c) \quad \text{in } [0, \sw(c)],
	\end{align*} 
	and $u(\sw(c)) = u_x(\sw(c)) = 0$. 
	Given that $-\TT v(\cdot, c) \geq 0$ in $[0, \sw(c)]$, the strong maximum principle and the Hopf lemma imply $u_x(\sw(c)) < 0$, which is a contradiction.
\end{proof}

\section{Proof of \thmref{thm:averi}}\label{profthm:averi}
Suppose $V(x,c)$ is the value function defined in \eqref{value} and $v(x,c)$ is the solution to \eqref{v_pb}. We come to prove $v(x,c)=V(x,c)$.

For any $(x,c)\in \bQ$, any admissible dividend payout strategy $\{\DD_t\}_{t\geq 0}\in \Pi_{[c,\cc]}$
and any constant $T>0$,
through a change of measure argument,
one can show that $X_{t}$ is a Brownian motion under some equivalent probability measure, so $\int_{0}^{T}\P(X_t\in K)\dt=0$ for any subset $K$ of $\R$ with zero Lebesgue measure.

Applying It\^o's formula (see \cite{protter2005stochastic} Theorem 33 on page 81) to $e^{-rt} v(X_t,\RM_t)$ and then taking expectation
lead to
\begin{align*}
v(x,c)
&=\E\Bigg[e^{-r(\tau\wedge T) }v(X_{\tau\wedge T},\RM_{\tau\wedge T})+\int_0^{\tau\wedge T}e^{-rt} \Big(-\LL v(X_t,\RM_t)+v_x(X_t,\RM_t)\DD_t\Big) \dt
\\
&\qquad\quad- \sum_{0\leq t\leq \tau\wedge T} e^{-rt} \Big(v(X_t,\RM_t)-v(X_t,\RM_{t-})\Big)- \int_0^{\tau\wedge T}e^{-rt} v_c (X_t,\RM_t)\d \RM^{c}_t \\
&\qquad\quad
-\int_0^{\tau\wedge T}e^{-rt}\si v_x (X_t,\RM_t) \dw_t \;\Bigg|\; X_0=x,~\RM_0=c\Bigg],
\end{align*}
where $\RM_t^c$ is the continuous part of $\RM_t$.
From the estimate \eqref{vx} we know $v_x(X_t,\RM_t)$ is bounded,
so the mean of the stochastic integral is zero. 
Applying \eqref{v_pb}, we get
\begin{align*}
-\LL v(X_t,\RM_t)+v_x(X_t,\RM_t)\DD_t-\DD_t
\geq& -\LL v(X_t,\RM_t)-\sup\limits_{b\RM_t\leq \DD\leq \RM_t}\DD (1-v_x(X_t,\RM_t))\\[3mm]
=& -\LL v(X_t,\RM_t)-\RM_t \TT v(X_t,\RM_t)\geq 0.
\end{align*}
Together with
$$ -v_c \geq 0, \quad M_{t}\geq M_{t-}, \quad\d \RM^{c}_t\geq 0,$$ we have
\begin{align}\label{vE}
v(x,c)
\geq
\E\Bigg[e^{-r(\tau\wedge T)}v(X_{\tau\wedge T},\RM_{\tau\wedge T})
+\int_0^{\tau\wedge T}e^{-rt} \DD_t \dt\;\Bigg|\; X_0=x,~\RM_0=c\Bigg].
\end{align}
Because $v$, $\DD_t\geq 0$, dropping the first term in the expectation and sending $T\to+\infty$ in above, the monotone convergence theorem gives
\begin{align*}
v(x,c)
\geq \E\Bigg[\int_0^{\tau}e^{-rt} \DD_t \dt\;\Bigg|\; X_0=x,~\RM_0=c\Bigg].
\end{align*}
Since $\{\DD_t\}_{t\geq 0}\in \Pi_{[c,\cc]}$ is arbitrary selected, we obtain $v(x,c)\geq V(x,c)$.

We now prove the opposite inequality $v(x,c)\leq V(x,c),$ and show that $\{\DD^*_{t}\}_{t\geq 0}$ is an optimal strategy.
Thanks to the continuity of $X^{*}$ and $\sd(\cdot)$, the right-continuity of $\RM^{*}$ given by \eqref{xi+}, it is not hard to verify $\{\DD^*_{t}\}_{t\geq 0}\in \Pi_{[c,\cc]}$ is an admissible strategy.

We now assume $t\in[0, \zeta_1)$, where 
$$\zeta_1=\sup\big\{t\geq 0 \;\big|\; \RM^*_t\leq c\big\}.$$
Since $\RM^*_0=c$, we have $\RM^*_t=c$. It follows
$\swb\Big(\max\limits_{s\in[0,t]}X^*_s,c\Big)=\RM^*_t=c,$
and
$ \Big(\max\limits_{s\in[0,t]}X^*_s,c\Big)\in \NS.$
It follows from \eqref{-Lv=0} that
\begin{align*}
-\LL v(y,c)-c \TT v(y,c)=0~~\hbox{if }0<y< \max\limits_{s\in[0,t]}X^*_s,
\end{align*}
combing
\begin{align}\label{P0}
\int_0^T\P\big(X^*_t=\max\limits_{s\in[0,t]}X^*_s\big) \;\dt=0,
\end{align}
yields
\begin{align*}
-\LL v(X^*_t,\RM^*_t)-\RM^*_t \TT v(X^*_t,\RM^*_t)
&=-\LL v(X^*_t, c)-c\TT v(X^*_t,c)=0 ~~\hbox{a.e. in}~~ \Omega\times [0,\zeta_1\wedge \tau\wedge T].
\end{align*}

By the construction of $\DD^*_t$, it gives
\begin{align*}
&~-\LL v(X^*_t,\RM^*_t)+\DD^*_t v_x(X^*_t,\RM^*_t) -\DD^*_t\\
=&~-\LL v(X^*_t,\RM^*_t)-\RM^*_t\TT v(X^*_t,\RM^*_t)=0 ~~\hbox{a.e. in}~~ \Omega\times [0,\zeta_1\wedge \tau\wedge T]. 
\end{align*}
Now applying It\^o's formula to $ e^{-rt}v(X^{*}_{t},c)$ on $[0, \zeta_1\wedge \tau\wedge T)$ gives
\begin{align}
v(x,c)
&=\E\Bigg [e^{-r(\zeta_1\wedge \tau\wedge T) }v(X^*_{\zeta_1\wedge \tau\wedge T},c)
+\int_0^{\zeta_1\wedge \tau\wedge T}e^{-rt} \Big(-\LL v(X^*_t,c)+\DD^*_t v_x(X^*_t,c)\Big) \dt \nn\\
&\qquad\quad- \int_0^{\zeta_1\wedge \tau\wedge T}e^{-rt}\si v_x (X^*_t,c) \dw_t\;\Bigg|\; X_0=x,~\RM_0=c\Bigg]\nn\\
&=\E\Bigg[e^{-r(\zeta_1\wedge \tau\wedge T) }v\big(X^*_{\zeta_1\wedge \tau\wedge T},\RM^*_{ (\zeta_1 \wedge \tau\wedge T) -}\big) \nn\\
&\qquad\quad+\int_0^{\zeta_1\wedge \tau\wedge T}e^{-rt} \Big(-\LL v(X^*_t,\RM^*_t)+\DD^*_t v_x(X^*_t,\RM^*_t)\Big) \dt\;\Bigg|\; X_0=x,~\RM_0=c\Bigg] \nn\smallskip\\
&=\E\Bigg[e^{-r(\zeta_1\wedge \tau\wedge T) }v\big(X^*_{\zeta_1\wedge \tau\wedge T},\RM^*_{ (\zeta_1 \wedge \tau\wedge T) -}\big)+\int_0^{\zeta_1\wedge \tau\wedge T} e^{-rt} \DD^*_t \dt\;\Bigg|\; X_0=x,~\RM_0=c\Bigg].\label{beforezeta}
\end{align}

Now let us consider the case $t\in[\zeta_1,\zeta_{2})$, where 
\[\zeta_2=\inf\big\{t\geq \zeta_1 \;\big|\; \RM^*_t=\cc\big\}.\]
Since $c<\RM^*_t<\cc$, by the construction of $\DD^*_t$, \eqref{P0} and \eqref{lvatdecreasingpoint},
\begin{align*} 
&\quad-\LL v(X^*_t,\RM^*_t)+\DD^*_t v_x(X^*_t,\RM^*_t) -\DD^*_t\nn\smallskip\\
&=-\LL v(X^*_t,\RM^*_t)-\RM^*_t\TT v(X^*_t,\RM^*_t)\nn\\[3pt]
&=-\LL v\Big(X^*_t, \swb\Big(\max\limits_{s\in[0,t]}X^*_s,c\Big)\Big)- \swb\Big(\max\limits_{s\in[0,t]}X^*_s,c\Big)\TT v\Big(X^*_t, \swb\Big(\max\limits_{s\in[0,t]}X^*_s,c\Big)\Big)\nn\\
&=0 ~~\hbox{a.e. in}~~ \Omega\times [\zeta_1\wedge \tau\wedge T,\zeta_2\wedge \tau\wedge T].
\end{align*}

By virtue of \eqref{valueatboundary},
$$\sw(\RM^*_t)=\sw\Big(\swb\Big(\max\limits_{s\in[0,t]}X^*_s,c\Big)\Big)=\max\limits_{s\in[0,t]}X^*_s,$$
implying
$X^*_t\leq \sw(\RM^*_t).$
If $X^*_t=\sw(\RM^*_t)$, then thanks to \eqref{v_c=0}, we have $v_c(X^*_t,\RM^*_t)=0$, so
$$-v_c(X^*_t,\RM^*_t)\d (\RM^*)^{c}_t=0.$$
If $X^*_t< \sw(\RM^*_t)$, then $\d (\RM^*)^{c}_t=0$, so the above equation also holds.
Also, if $\RM^*_t>\RM^*_{t-}$, then $X^*_t=\max\limits_{s\in[0,t]}X^*_s$, and thus
$\RM^*_t=\swb\Big(\max\limits_{s\in[0,t]}X^*_s,c\Big)=\swb(X^*_t,c).$
Because $c\leq \RM^*_{t-}\leq \RM^*_t=\swb(X^*_t,c)$ and $v\big(x,\swb(x,c)\big)=v(x,s)$ for any $s\in[c,\swb(x,c)]$, we get
$v(X^*_t,\RM^*_t)=v(X^*_t,\swb(X^*_t,c))=v(X^*_t,\RM^*_{t-}).$ 
Specially, we have
\begin{align*}
v(X^*_{\zeta_1\wedge \tau\wedge T},\RM^*_{\zeta_1\wedge \tau\wedge T})=
v(X^*_{\zeta_1\wedge \tau\wedge T},\RM^*_{ (\zeta_1 \wedge \tau\wedge T) -}).
\end{align*}

Now applying It\^o's formula to $ e^{-rt}v(X^*_t,\RM^*_t)$ on $( \zeta_1\wedge \tau\wedge T,\zeta_2\wedge \tau\wedge T]$ gives
\begin{align}
&\quad\; \E\Big[e^{-r(\zeta_1\wedge \tau\wedge T) }v\big(X^*_{\zeta_1\wedge \tau\wedge T},\RM^*_{ (\zeta_1 \wedge \tau\wedge T) -}\big)\;\Big|\; X_0=x,~\RM_0=c\Big]\nn \smallskip\\
&=\E\Bigg[ e^{-r(\zeta_2\wedge \tau\wedge T) }v\big(X^*_{\zeta_2\wedge \tau\wedge T},\RM^*_{\zeta_2\wedge \tau\wedge T}\big)
+\int_{\zeta_1\wedge \tau\wedge T}^{\zeta_2\wedge \tau\wedge T}e^{-rt} \Big(-\LL v( X^*_t,\RM^*_t)+v_x( X^*_t,\RM^*_t)\DD^*_t\Big) \dt\nn\\
&\qquad\quad
-\sum_{\zeta_1\wedge \tau\wedge T\leq t\leq \zeta_2\wedge \tau\wedge T} e^{-rt} \Big(v(X^*_t,\RM^*_t)-v(X^*_t,\RM^*_{t-})\Big) - \int_{\zeta_1\wedge \tau\wedge T}^{\zeta_2\wedge \tau\wedge T}e^{-rt} v_c ( X^*_t,\RM^*_t)\d (\RM^*)^{c}_t
\nn \\
&\qquad\quad
- \int_{\zeta_1\wedge \tau\wedge T}^{\zeta_2\wedge \tau\wedge T}e^{-rt}\si v_x ( X^*_t,\RM^*_t) \dw_t\;\Bigg|\; X_0=x,~\RM_0=c\Bigg]\nn\smallskip\\
&=\E\Bigg[e^{-r(\zeta_2\wedge \tau\wedge T) }v\big(X^*_{\zeta_2\wedge \tau\wedge T},\RM^*_{\zeta_2\wedge \tau\wedge T}\big)
+\int_{\zeta_1\wedge \tau\wedge T}^{\zeta_2\wedge \tau\wedge T}e^{-rt} \DD^*_t \dt\;\Bigg|\; X_0=x,~\RM_0=c\Bigg].
\label{middle}
\end{align}

Finally, for $t\in[\zeta_2,\tau)$, we have $\RM^*_t=\cc$. We can deduce similarly to the case
$t\in[0,\zeta_1)$ to get
\begin{align*}
&\quad\;\E\Big[e^{-r(\zeta_2\wedge \tau\wedge T) }v\big(X^*_{\zeta_2\wedge \tau\wedge T},\RM^*_{\zeta_2\wedge \tau\wedge T}\big)\;\Big|\; X_0=x,~\RM_0=c\Big]\smallskip\\
&=\E\Bigg[e^{-r(\tau\wedge T) }v\big(X^*_{\tau\wedge T},\RM^*_{\tau\wedge T}\big)
+\int_{\zeta_2\wedge \tau\wedge T}^{\tau\wedge T}e^{-rt} \DD^*_t \dt\;\Bigg|\; X_0=x,~\RM_0=c \Bigg].
\end{align*}

Combining \eqref{beforezeta}, \eqref{middle} and above, we obtain
\begin{align}\label{wholeeq}
v(x,c)&=\E\Bigg[e^{-r(\tau\wedge T) }v\big(X^*_{\tau\wedge T},\RM^*_{\tau\wedge T}\big)+\int_0^{ \tau\wedge T} e^{-rt} \DD^*_t \dt\;\Bigg|\; X_0=x,~\RM_0=c\Bigg].
\end{align}
Because $v$ is bounded and $r>0$, it yields
\[\lim_{T\to+\infty}e^{-r(\tau\wedge T) }v\big(X^*_{\tau\wedge T},\RM^*_{\tau\wedge T}\big) 1_{\tau=+\infty}=\lim_{T\to+\infty}e^{-r T}v(X^*_{ T},\RM^*_{ T}) 1_{\tau=+\infty}=0;\]
whereas, by virtue of $X^*_{\tau}1_{\tau<+\infty}=0$ and $v(0,c)=0$, it follows
\[\lim_{T\to+\infty}e^{-r(\tau\wedge T) }v\big(X^*_{\tau\wedge T},\RM^*_{\tau\wedge T}\big) 1_{\tau<+\infty}=e^{-r\tau}v(X^*_{\tau},\RM^*_{\tau}) 1_{\tau<+\infty}
=e^{-r\tau}v(0,\RM^*_{\tau}) 1_{\tau<+\infty}=0.\]
Therefore, by sending $T\to+\infty$ in \eqref{wholeeq} and using the dominated and monotone convergence theorems, we obtain
$$
v(x,c)=\E\Bigg[\int_0^{\tau}e^{-rt} \DD^*_t \dt\;\Bigg|\; X_0=x,~\RM_0=c\Bigg].
$$
In view of the definition of $V(x,c)$, we conclude that $v (x,c)\leq V(x,c)$, and consequently, that $\{\DD^*_{t}\}_{t\geq 0}$ is an optimal dividend payout strategy.
This completes the proof of \thmref{thm:averi}.

\section{The no-ceiling model}\label{proof:no ceiling}

In this section, we study the no-ceiling model and simplify the notation $v^{\infty}$ as $v$. 

Since the constant $K$ in \eqref{vc} and \eqref{vx} are independent of $\cc$ (see the proof of \lemref{lem:vix_ub} and \lemref{lem:uix_b}), by taking limits in \eqref{vc}, \eqref{vx} and \eqref{vxx}, we obtain:
\begin{lemma}\label{inftylem:vv_b}
	The following estimates hold
	\begin{gather}
		\label{inftyvvc} 0\leq -v _{c}\leq K ~\mbox{a.e.},\\
		\label{inftyvvx} 0\leq v _x\leq K~\mbox{a.e.},
	\end{gather}
	for some constant $K>0$.
	In addition, 
	\begin{align}\label{inftyvvxx}
		v _x(y,c)\leq \max\{v _x(x,c), 1\},~~ \forall\; y\geq x ~\mbox{a.e.},
	\end{align}
\end{lemma}
Actually, the ``almost everywhere" condition in the estimates \eqref{inftyvvx} and \eqref{inftyvvxx} can be eliminated, because $v_x(\cdot,c)$, that will be proven in \lemref{inftylem:vxc}, is continuous in $\R^+$. 

The following \lemref{inftylem:vv_eq}-\lemref{inftylem:vxc} will confirm that $v$ defined in \eqref{inftyvv_def} is the unique weak solution of \eqref{inftyvv_pb}.

\begin{lemma}\label{inftylem:vv_eq}
	We have $v(0,c)=0$ for all $c>0$ and
	\begin{align}\label{inftyvv_eq0}
		-\LL v(\cdot, c) - c\TT v(\cdot, c) \geq 0 \quad \text{weakly in } \R^+.
	\end{align}
	Moreover, we have for any $(x, c) \in \NS^{\infty}$, $v(\cdot, c) \in C^2(0, x)$ and there holds
	\begin{align}\label{inftyvv_eq}
		-\LL v(\cdot, c) - c\TT v(\cdot, c) = 0 \quad \text{in } (0, x).
	\end{align}
\end{lemma}
\begin{proof}
	By the estimate \eqref{inftyvx} and $v^{\cc}(0,c)=0$ we have $0\leq v^{\cc}(x,c)\leq K x$. Since the constant $K$ is independent of $\cc$, by letting $\cc\to +\infty$ we obtain $0\leq v(x,c)\leq K x$, which implies $v(0,c)=0$.
	
	Let $\cc\to+\infty$ in $-\LL v^{\cc}(\cdot, c) - c\TT v^{\cc}(\cdot, c) \geq 0$ we get \eqref{inftyvv_eq0}. 
	
	Now we prove \eqref{inftyvv_eq}. Suppose $(x, c) \in \NS^{\infty}$, by definition, we have $v(x, c + 1/n) < v(x, c)$ for any $n \in \N^+$. Since $v^{\cc}$ tends to $v$ as $\cc\to+\infty$, there exists $\cc_n > n$ such that $v^{\cc_n}(x, c + 1/n) < v^{\cc_n}(x, c)$. Then there exists $c_n \in [c, c + 1/n)$ such that $v^{\cc_n}(x, s) < v^{\cc_n}(x, c_n)$ for all $s > c_n$, and thus $$-\LL v^{\cc_n}(\cdot, c_n) - c_{n}\TT v^{\cc_n}(\cdot, c_n) = 0$$ in $(0, x)$. 
	Let $n \to +\infty$ we get \eqref{inftyvv_eq}, and this implies $v(\cdot, c) \in C^2(0, x)$.
\end{proof}

\begin{lemma}\label{inftylem:vv_eq1}
	If $\swb^{\infty}(x,c)<+\infty$, then for all $s\in [0,\swb^{\infty}(x,c)]$, $v(\cdot,s)\in W^{2}_p[0,x]$ (for any $p>1$ and $\al=1-1/p$).
\end{lemma}
\begin{proof}
	Let $c^* = \swb^{\infty}(x,c)$. Then we have $v(x, c^* + 1) < v(x, c^*)$. Since $v^{\cc}$ tends to $v$ as $\cc \to +\infty$, we obtain $v^{\cc}(x, c^* + 1) < v^{\cc}(x, c^*)$ for sufficiently large $\cc$.
	Let $v^{\cc,n}_i(x),\;i = 1, 2, \cdots, n$ denote the sequence of discrete approximating functions of $v^{\cc}$ constructed in \secref{sec:approximation}, which solve the variational inequality system \eqref{vi_pb}. For any sufficiently large $n$, there exists $i_0 \in \N^+$ such that $c_{i_0} := \cc - i_0 \cc/n \in (c^*, c^* + 1)$ and $v^{\cc,n}_{i_0}(x) > v^{\cc,n}_{i_0-1}(x)$. This implies $v^{\cc,n}_{i_0}(y) > v^{\cc,n}_{i_0-1}(y)$ for all $y \leq x$.
	Then \eqref{Lvi} becomes
	\begin{align*}
		-\LL v^{\cc,n}_i = \Big( \sum\limits_{j=i_0}^i c_j 1_{\{ v^{\cc,n}_{j-1} < v^{\cc,n}_i \leq v^{\cc,n}_j \}} \Big) \TT v^{\cc,n}_i ~~\text{a.e. in } (0,x),~~\forall~i \geq i_0.
	\end{align*}
	Since $0 \leq c_i \leq c_j \leq c_{i_0} \leq c^* + 1$ for $i \geq j \geq i_0$, together with \eqref{inftyvvx}, the RHS above is bounded and independent of $\cc$ and $n$. We can thus apply the $L_p$ estimation to conclude that for each $i \geq i_0$ (i.e., $c_i \leq c_{i_0}$), $|v^{\cc,n}_i|_{W^{2}_{p}[0,x]}$ is bounded and independent of $\cc$ and $n$. Letting $n \to +\infty$, we obtain that for all $s \in [0, c^*]$, $|v^{\cc}(\cdot,s)|_{W^{2}_{p}[0,x]}$ is bounded and independent of $\cc$. Finally, letting $\cc \to +\infty$, we conclude that $v(\cdot,s) \in W^{2}_{p}[0,x]$ for all $s \in [0, c^*]$.
\end{proof}

\begin{lemma}\label{inftylem:vxc0}
	If $\swb^{\infty}(x,c)<+\infty$, then $v_x(y,c)$ is continuous for every $y\leq x$.
\end{lemma}
\begin{proof}
	If $\swb^{\infty}(x,c)<+\infty$, by the continuity of $v$, we have $\swb^{\infty}(x+\ep,c)<+\infty$ for some $\ep>0$. \lemref{inftylem:vv_eq1} yields $v(\cdot,c)\in W^{2}_{p}[0,x+\ep]$, and the Sobolev embedding theorem gives $v(\cdot,c)\in C^1[0,x+\ep)$. 
\end{proof}

\begin{lemma}\label{inftylem:vr}
	If $\swb^{\infty}(x,c)=+\infty$, then $v(y,c)=y$ for all $y\geq x$.
\end{lemma}
\begin{proof}
	Suppose $\swb^{\infty}(x,c)=+\infty$. From \eqref{inftyvv_limc}, we have $v(x,c)=\lim\limits_{c\to+\infty}v(x,c)=x$. The first inequality in \eqref{inftyvv}, i.e. $v(y,c)\geq y$ for $y\geq 0$, implies that for $y< x$, $(v(y,c)-v(x,c))/(y-x)\leq 1$. By \eqref{inftyvvxx} we get $v_x(y)\leq 1$ a.e. for $y\geq x$. Then we conclude $v(y,c)\leq y$ for $y\geq x$. Together with \eqref{inftyvv} we have $v(y,c)=y$ for all $y\geq x$.
\end{proof}

\begin{lemma}\label{inftylem:vxc}
	For any $c\in [0,+\infty)$, $v_x(\cdot,c)$ is continuous in $[0,+\infty)$.
\end{lemma}
\begin{proof}
	Let $x_0=\inf\{x>0\;|\;\swb^{\infty}(x,c)=+\infty\}$. Then $\swb^{\infty}(x,c)<+\infty$ for all $x<x_0$, thus by \lemref{inftylem:vxc0} we have $v_x(\cdot,c)$ is continuous in $[0,x_0)$. In addition, by \lemref{inftylem:vr} we have $v(x,c)=x$ for all $x\geq x_0$, so $v_x(\cdot,c)$ is continuous in $(x_0,+\infty)$.
	
	To complete the proof, it suffices to show that $v_x(\cdot,c)$ is continuous at $x_0$. 
	First, apply \eqref{inftyvv_eq0},\eqref{inftyvv} and \eqref{inftyvvx}, we conclude that there exists a constant $L>0$ such that $v_{xx}(x,c)\leq L$ weakly in $[0,x_0]$, this implies $$v_x(x,c)\geq v_x(x_0+,c)-L(x_0-x)=1-L(x-x_0)$$ for all $x< x_0$, which gives $\liminf\limits_{x\to x_0-} v_x(x,c)\geq 1$.
	
	Next, we prove $\limsup\limits_{x\to x_0-} v_x(x,c)\leq 1$. Suppose that $\limsup\limits_{x\to x_0-} v_x(x,c)> 1$, then there exists a sequence $x_n\to x_0-$ and $\delta>0$ such that $v_x(x_n,c)\geq 1+\delta$. Combining with \eqref{inftyvvxx} implies $v_x(x,c)\geq 1+\delta$ for all $x\in [x_1,x_0)$. Integrating, we conclude $$v(x,c)\leq v(x_0,c)-(1+\delta)(x_0-x)=x_0-(1+\delta)(x_0-x)< x$$ for $x\in [x_1,x_0)$, which contradicts \eqref{inftyvv}. Thus, $$\liminf\limits_{x\to x_0-} v_x(x,c)=\limsup\limits_{x\to x_0-} v_x(x,c)=1,$$ so $$\lim\limits_{x\to x_0-} v_x(x,c)=1=\lim\limits_{x\to x_0+} v_x(x,c).$$
	And since $v(\cdot,c)$ is continuous at $x_0$, by applying the mean value theorem we have $v_x(x_0,c)=1$.
	Hence, $v_x(\cdot,c)$ is continuous at $x_0$ and therefore continuous in $[0,+\infty)$.
\end{proof}

By combining \lemref{inftylem:vv_eq} and \lemref{inftylem:vxc}, we establish that the function 
$v$ defined in \eqref{inftyvv_def} is a solution to the variational inequality given in \eqref{inftyvv_pb}.
Furthermore, following the line of reasoning presented in \appref{sec:solvability}, we can prove that the solution to \eqref{inftyvv_pb} is unique.

\subsection{The properties of the switching boundary} \label{sec:switching}

By the variational inequality in \eqref{inftyvv_pb} and the continuity of $v_x(\cdot,c)$, following the discussion of \lemref{lem:N}, we can prove that $v_x(x ,c)\leq 1$, $\forall\; (x,c)\in S$ still holds for the case $\cc=+\infty$. Furthermore, by virtue of the uniqueness of the solution in $\A^{\infty}_0$, we can further show that $v(y,s)=v(y, c)$ for all $(y,s)\in [x,+\infty)\times[\minswb^{\infty}(x,c),\swb^{\infty}(x,c)]$. As a result, \lemref{lem:N} still holds for the case $\cc=+\infty$. i.e. we have
\begin{lemma}\label{inftylem:NN}
	If $(x,c)\in\SS^{\infty}$, then
	\begin{align}\label{inftyvvx<=1}
		v_x(x ,c)\leq 1
	\end{align}
	and $v(y,s)=v(y, c)$ for all $(y,s)\in [x,+\infty)\times[\minswb^{\infty}(x,c),\swb^{\infty}(x,c)].$ As a consequence, we have
	\begin{align}\label{inftyvvxxc}
		v_x(\sw^{\infty}(c),c)\leq 1,~c\in [0,+\infty),
	\end{align}
	and
	\begin{gather*}
		\Big\{(x,c)\in \bQ^{\infty} \;\Big|\; x> \sw^{\infty}(c)\Big\}\subseteq\SS^{\infty}
		\subseteq\Big\{(x,c)\in \bQ^{\infty} \;\Big|\; x\geq \sw^{\infty}(c)\Big\},\\
		\Big\{(x,c)\in \bQ^{\infty} \;\Big|\; x< \sw^{\infty}(c)\Big\}\subseteq\NS^{\infty}
		\subseteq\Big\{(x,c)\in \bQ^{\infty} \;\Big|\; x\leq \sw^{\infty}(c)\Big\}. 
	\end{gather*}
\end{lemma}
Therefore, $\sw^{\infty}(c)$ is the switching free boundary.

\begin{lemma}\label{inftylemma:xc}
	We have
	$\sw^{\infty}(c)>0$ for all $c\in[0,+\infty).$
\end{lemma}
\begin{proof}
	Denote $g^{\cc}(x)$ as the terminal function which is defined in \eqref{g_def} for fixed $\cc>0$.
	For $x>0$ and $c>0$, let $\cc=\max\{c,\;2 r/\ga\}$, since $v(0,c)=g^{\cc}(0)=0$, $v$ is non increasing w.r.t $c$ and $v^{\cc}\leq v$,
	it follows that
	\[v(x,c)-v(0,c)=v(x,c)\geq v(x,\cc)\geq v^{\cc}(x,\cc)=g^{\cc}(x)-g^{\cc}(0).\]
	From this, we can derive
	\begin{align}\label{inftyvvx0>1}
		v_x(0,c)\geq (g^{\cc})'(0)>1
	\end{align}
	by recalling that $\cc \ga > r$.
	Comparing to \eqref{inftyvvxxc}, we conclude that $\sw^{\infty}(c)>0$.
\end{proof}

We are going to prove the continuity of $\sw^{\infty}(c)$. 
Firstly, since the constant $K$ in \eqref{vix_ub} and \eqref{ui_b} are independent of $\cc$, similar to \lemref{lem:uci_b}, we can prove 
\begin{lemma}\label{inftylem:uuci_b}
	Suppose $v^{\cc,n}_i$, $i=0,1,\cdots,n$ is the sequence of discrete approximating functions of $v^{\cc}$ constructed in \secref{sec:approximation}, which solves \eqref{vi_pb}, and $x^{\cc,n}_i$ is the sequence of free boundary points. Let $\Dc^{\cc,n}=\cc/n$, $c^{\cc,n}_i=\cc- i \Dc^{\cc,n}$ and
	$$u^{\cc,n}_i:=\frac{v^{\cc,n}_i-v^{\cc,n}_{i-1}}{\Dc^{\cc,n} },~~ i=1,2,\cdots,n.$$
	For any fixed $c^*>0$, if there exists a constant $a>0$ such that $x^{\cc,n}_i\geq a$ for all $c^{\cc,n}_i\leq c^*$,
	we then have
	\begin{align}\label{inftyui_b2}
		u^{\cc,n}_{i-1}\leq u^{\cc,n}_i+B(a,c^*) \Dc^{\cc,n}~~\text{if}~c^{\cc,n}_i\leq c^*,
	\end{align}
	where $B(a,c^*)>0$ is independent of $n$ and $\cc$.
\end{lemma}

Based on \lemref{inftylem:uuci_b}, we have
\begin{lemma}\label{inftylem:uucL}
	If $0\leq c_*<c_*+\Dc_*\leq c^*-\Dc^*<c^*<+\infty$, then there exists a constant $B(c^*)>0$ which only depends on $c^*$ such that
	\begin{align}\label{inftyuucL}
		\frac{v(x,c^*-\Dc^*)-v(x,c^*)}{\Dc^*}\leq \frac{v(x,c_*)-v(x,c_*+\Dc_*)}{\Dc_*}+B(c^*)(c^*-c_*).
	\end{align}
\end{lemma}
\begin{proof}
	Suppose $v^{\cc,n}_i$, $i=0,1,\cdots,n$ is the sequence of discrete approximating functions of $v^{\cc}$ constructed in \secref{sec:approximation}, which solves \eqref{vi_pb}, and $x^{\cc,n}_i$ is the sequence of free boundary points. 
	Let $\Dc^{\cc,n}=\cc/n$, $c^{\cc,n}_i=\cc- i \Dc^{\cc,n}$. 
	And let $\{v^{\cc,n^{\cc}_k}\}$ be the linear interpolation of $v^{\cc,n^{\cc}_k}_i$, $i=0,1,\cdots,n^{\cc}_k$ that converges to $v^{\cc}$. 
	
	We first prove that there exists $a(c^*) > 0$ which is independent of $\cc$ such that
	\begin{align}\label{inftyxxnk}
		x^{\cc,n^{\cc}_k}_i\geq a(c^*)~~\text{if}~c^{\cc,n^{\cc}_k}_i\leq c^*.
	\end{align}
	Due to \eqref{xnk}, for each $\cc>0$, there exists a constant $a_{\cc}>0$ such that
	$x^{\cc,n^{\cc}_k}_i\geq a_{\cc}$ if $c^{\cc,n^{\cc}_k}_i\leq c^*$. It suffices to prove this $a_{\cc}$ is bounded below by a constant $a(c^*) > 0$ for every $\cc>c^*$. 
	Conversely, assume that there exists a subsequence $\{y_m\}$, where $y_m := x^{\cc_m,n^{\cc_m}_{k_m}}_{i_m}$, which converges to $0$ as $m \to \infty$. By virtue of \eqref{vjx<=1}, we have $(v^{\cc_m,n^{\cc_m}_{k_m}}_{i_m})'(x)\leq 1$ for all $x\geq y_m$.
	Denote $s_m = c^{\cc_m,n^{\cc_m}_{k_m}}_{i_m}$. Given that $s_m$ is bounded within the interval $[0,c^*]$, there must exist an $s_0 \in [0,c^*]$ and a subsequence of $\{s_m\}$ (which we still denote by $\{s_m\}$ for simplicity) such that $s_m \to s_0$. 
	Then there exists a subsequence of the functions $v^{\cc_m,n^{\cc_m}_{k_m}} (\cdot,s_m)$ that converges to $v(\cdot,s_0)$. Consequently, we can infer that $v_x(0,s_0)\leq 1$, which stands in contradiction to \eqref{inftyvvx0>1}. So \eqref{inftyxxnk} holds.
	The remaining proof can be completed following the procedure of \lemref{lem:ucL}.
\end{proof}

By adhering strictly to the proof of \lemref{lemma:xcc}, we can prove 
\begin{lemma}\label{inftylemma:xxcc}
	The switching boundary $\sw^{\infty}(\cdot)$ is continuous in $[0,+\infty)$.
\end{lemma}

Now, we are going to prove $ \sw^{\infty}(\cdot) $ is bounded in $ [0, +\infty) $. For this, we first give

\begin{lemma}\label{inftylem:vbc}
	For each $c>0$, there exists
	\begin{align}\label{inftyvbc}
		v(x,c)\leq x+\frac{1}{c}\Big(\frac{\mu^2}{rb}+\frac{\mu}{b} x\Big).
	\end{align}
\end{lemma}
\begin{proof}
	Let $\psi=x+\frac{1}{c}(\frac{\mu^2}{rb}+\frac{\mu}{b} x)$, then we have
	$$
	-\LL \psi - c \TT \psi = -\mu \Big(1+\frac{\mu}{c b}\Big)+r \Big[x+\frac{1}{c}\Big(\frac{\mu^2}{rb}+\frac{\mu}{b} x\Big)\Big] + c b \frac{\mu}{c b} = r\Big(1+\frac{\mu}{c b}\Big) x \geq 0
	$$
	and $-\psi_c\geq 0$. Together with $\psi(0,c)\geq 0$ and $\lim\limits_{c\to +\infty}\psi(x,c)=x$, we see $\psi$ is a super solution of \eqref{inftyvv_pb} and get \eqref{inftyvbc}.
\end{proof}

\begin{lemma}\label{inftylem:xb} The switching boundary
	$ \sw^{\infty}(c) $ is bounded in $ [0, +\infty) $, specifically, \eqref{inftyxl_ub} holds. 
\end{lemma}
\begin{proof}
	Suppose, for contradiction, there exist $c_n\to +\infty$ and $x_0>\frac{\mu}{r b} \left(1 + \sqrt{1 + 2b - 2b^2}\right)$ such that $\sw^{\infty}(c_n) \geq x_0$. 
	Then the equation of $v(\cdot,c_n)$ holds in $(0,x_0)$. So we have 
	\begin{align}\label{inftyinteq}
		-\int_0^{x_0} \LL v(x,c_n) \d x = c_n \int_0^{x_0}\TT v(x,c_n) \d x.
	\end{align}
	Denote $$x_n=\inf\{x\in [0,x_0]\;|\;v_x(x,c_n)\leq 1\}.$$ Then we have $v_x(x,c_n)> 1$ for all $x< x_n$ and by \eqref{inftyvvxx} we have $v_x(x,c_n)\leq 1$ for all $x\geq x_n$. 
	Moreover, noting that $v(0,c_n)=0$, so we have
	$$
	c_n \int_0^{x_0}\TT v(x,c_n) \d x= c_n b \Big(x_0-v(x_0,c_n)\Big) + c_n (1-b) \Big[\Big(x_0-v(x_0,c_n)\Big)-\Big(x_n-v(x_n,c_n)\Big)\Big].
	$$ 
	Due to the first inequality in \eqref{inftyvv} and the estimate \eqref{inftyvbc}, we conclude 
	\begin{align}\label{inftyinteq1}
		c_n\int_0^{x_0}\TT v(x,c_n) \d x \leq (1-b) \Big(\frac{\mu^2}{rb}+\frac{\mu}{b} x_n\Big) \leq (1-b) \Big(\frac{\mu^2}{rb}+\frac{\mu}{b} x_0\Big).
	\end{align}
	On the other hand, by \eqref{inftyvvx0>1} and \eqref{inftyvxx} we have $v_x(x_0,c_n) \leq \max\{v_x(0,c_n),\;1\} = v_x(0,c_n)$. So
	\begin{align}
		-\int_0^{x_0}\LL v(x,c_n) \dx &= - \frac{\si^2}{2} [v_x(x_0,c_n)-v_x(0,c_n)] - \mu v(x_0,c_n) + r \int_0^{x_0} v(x,c_n) \d x\nonumber\\\label{inftyinteq2}
		&\geq - \mu v(x_0,c_n) + r \int_0^{x_0} v(x,c_n) \dx.
	\end{align}
	Together with \eqref{inftyinteq}, \eqref{inftyinteq1} and \eqref{inftyinteq2} we have
	\begin{align*}
		-\mu v(x_0,c_n) + r \int_0^{x_0} v(x,c_n) \d x \leq (1-b) \Big(\frac{\mu^2}{rb}+\frac{\mu}{b} x_0\Big).
	\end{align*}
	Since $v(x,c_n) \to x$, letting $n\to +\infty$ yields 
	\begin{align*}
		-\mu x_0 + r \frac{x_0^2}{2} \leq (1-b) \Big(\frac{\mu^2}{rb}+\frac{\mu}{b} x_0\Big).
	\end{align*}
	Solving this gives
	\[
	x_0 \leq \frac{\mu}{r b} \left(1 + \sqrt{1 + 2b - 2b^2}\right),
	\]
	contradicting to the assumption. So \eqref{inftyxl_ub} holds.
\end{proof}

\subsection{The properties of the converting boundary} \label{sec:converting}

\begin{lemma}\label{inftylem:yyc}
	The converting boundary $\sd^{\infty}(\cdot)$ is continuous in $[0,+\infty)$ and satisfies
	$$
	0< \sd^{\infty}(c)< \sw^{\infty}(c),~~\forall ~ c>0.
	$$ 
\end{lemma}
\begin{proof}
	Using the method in \lemref{lem:yc}, we can prove $\sd^{\infty}(\cdot)$ is continuous in $[0,+\infty)$ and satisfies
	$$0< \sd^{\infty}(c)\leq \sw^{\infty}(c),~~\forall ~ c>0.$$ 
	
	We now prove the equality cannot hold. 
	If $\swb^{\infty}(\sw^{\infty}(c),c)<+\infty$, the conclusion follows from a discussion analogous to that in \lemref{lem:y<x}. 
	Now, consider the case $\swb^{\infty}(\sw^{\infty}(c),c)=+\infty$. 
	For contradiction, suppose $\sd^{\infty}(c)=\sw^{\infty}(c)$. By the definition of $\sd^{\infty}(c)$, $v_x(x,c)>1$ for all $x\in(0, \sd^{\infty}(c)).$ Then by integration, we derive $v(\sw^{\infty}(c),c)> \sw^{\infty}(c)$. 
	However, since $\swb^{\infty}(\sw^{\infty}(c),c)=+\infty$, we know $v(\sw^{\infty}(c),c)=\lim\limits_{s\to+\infty} v(\sw^{\infty}(c),s)=\sw^{\infty}(c)$. This forms a contradiction.
\end{proof}

\begin{lemma}\label{inftylem:yl}
	The converting boundary $\sd^{\infty}(\cdot)$ satisfies \eqref{inftyyl}. 
\end{lemma}
\begin{proof}
	If $\limsup\limits_{c\to +\infty}\sd^{\infty}(c)>\frac{\mu}{r}$, there exist $c_n\to +\infty$ and $x_0>\frac{\mu}{r}$ such that $\sd^{\infty}(c_n) \geq x_0$. It follows that $v(x,c_n)\geq 1$ for $x\leq x_0$. Then by the equation of $v(x,c_n)$ we have
	$$
	-\frac{\si^2}{2} v_{xx}(x,c_n) -\mu v_x(x,c_n) + r v(x,c_n) = c_n b (1-v_x(x,c_n)) \leq 0,~~x\in (0,x_0).
	$$
	Letting $n\to +\infty$, since $v(x,c_n) \to x$ we have
	$-\mu + r x \leq 0$ for all $x\in (0,x_0).$
	This contradicts $x_0>\frac{\mu}{r}$. So we have $\limsup\limits_{c\to +\infty}\sd^{\infty}(c)\leq \frac{\mu}{r}$. Similarly, we can prove $\liminf\limits_{c\to +\infty}\sd^{\infty}(c)\geq \frac{\mu}{r}$. Therefore, \eqref{inftyyl} holds.
\end{proof}

\end{appendices}

\end{document}